\numberwithin{equation}{section}
\newtheorem{thm}{Theorem}[section]
\newtheorem{prop}[thm]{Proposition}
\newtheorem{lemma}[thm]{Lemma}
\newtheorem{remark}{Remark}[section]
\begin{document}

\author{Hyungbin Park\thanks{Department of Mathematical Sciences and RIMS, Seoul National University, 1, Gwanak-ro, Gwanak-gu, Seoul, Republic of Korea; hyungbin@snu.ac.kr,  hyungbin2015@gmail.com}  \\ \\
}
\title{Influence of risk tolerance on long-term investments: A Malliavin calculus approach}

\maketitle

\abstract{
This study investigates the influence of risk tolerance on the expected utility in the long run.
We estimate the extent to which the 
expected utility of optimal portfolios is affected by small changes in the risk tolerance.
For this purpose, we adopt the Malliavin calculus method and the Hansen--Scheinkman decomposition, through which the expected utility is expressed in terms of 
the eigenvalues and eigenfunctions
of an operator.
We conclude that the influence of risk aversion on the expected utility is determined by
these eigenvalues and eigenfunctions in the long run. 

}

\section{Introduction}

\subsection{Overview}

The risk--return trade-off is an important issue in finance.
Investors assess returns against risk while considering investment strategies, and they choose strategies that maximize the returns while minimizing the risk.
There are several ways to formulate the risk--return trade-off. One of the most commonly accepted forms is the utility function, which reflects the risk tolerance of an investor. 
We quantify the extent to which the 
expected utility is affected by small changes in the risk tolerance.

This study investigates long-term investment strategies.
Under several market models, utility-maximizing portfolios and their expected utility are considered.
For this purpose, we adopt the Malliavin calculus method and the Hansen--Scheinkman decomposition, through which the expected utility is expressed in terms of 
the eigenvalues and eigenfunctions
of an operator.
We conclude that the influence of risk tolerance on the long-term expected utility is determined by these eigenvalues and eigenfunctions. \newline

The main objective of this study is to investigate the influence of small changes 
of the utility function on long-term investments.
We recall the classical utility maximization problem
$$-u_T:=\max\mathbb{E}[U(\Pi_T)]$$
over all admissible self-financing portfolios
with a given initial capital for terminal time $T.$
The constant relative risk aversion (CRRA) utility function 
$$U(x)=\frac{x^{\nu}}{\nu}$$
for $\nu<0$ is considered in this paper. The parameter $\nu$ represents how an investor measures his/her degree of risk tolerance.
The optimal portfolio $\hat{\Pi}=\hat{\Pi}^{(\nu)}$ depends on the parameter $\nu,$ thus
$$-u_T=\mathbb{E}[U(\hat{\Pi}_T)]=\mathbb{E}[\hat{\Pi}_T^\nu]/\nu\,.$$
We investigate the influence of small changes in the parameter $\nu$ on the expected utility over a long time horizon. 
The influence of small changes in the parameter $\nu$ in terms of its logarithmic value can be mathematically expressed as
$$\frac{\partial }{\partial \nu} \ln u_T\,.$$
We estimate the large-time behavior of this partial derivative as $T\to\infty.$

The main methodology for this analysis is a combination of the Hansen--Scheinkman decomposition and the Malliavin calculus technique presented in Sections \ref{sec:math_prel} and \ref{sec:main_argu}. This approach is from \cite{park2018sensitivity}. First, we transform the expected utility into the expectation form 
$$p_T=\mathbb{E}_\xi[e^{-\int_0^Tq(X_s)\,ds}]$$
for some Markov diffusion process $X=(X_t)_{0\le t\le T}$ with $X_0=\xi$ and some measurable function $q.$
The process $X$ with killing rate $q$ induces an infinitesimal generator.
Using the Hansen--Scheinkman decomposition, 
we can find an eigenvalue $\lambda$ and a positive eigenfunction $\phi$ of the generator as well as a measurable function $f$ such that
the expectation is written as
$$p_T=\phi(\xi)e^{-\lambda T}
f(T,\xi)\,.$$ 
The real number $\lambda$ and the functions $\phi$ and $f$ depend on the parameter $\nu.$
By differentiating with respect to $\nu,$ we have
$$\frac{\partial}{\partial\nu}\ln p_T=\frac{\partial}{\partial\nu}\ln\phi(\xi)-T\frac{\partial\lambda}{\partial\nu}+\frac{\partial}{\partial\nu}\ln f(T,\xi)\,.$$

Using the above-mentioned equation, we estimate the influence of risk tolerance on the long-term expected utility.
If $\frac{\partial}{\partial\nu}\ln f(T,\xi)$
is bounded in $T$ on $[0,\infty),$ then 
$$\Big|\frac{1}{T}\frac{\partial}{\partial\nu}\ln p_T+\frac{\partial\lambda}{\partial\nu}\Big|\leq \frac{c}{T}$$
for some positive constant $c.$
This implies that the influence of risk tolerance
is asymptotically equal to the partial derivative of $-\lambda$ with respect to $\nu,$ which is the main conclusion of this study.
The Malliavin calculus technique is used to verify that $\frac{1}{T}\frac{\partial}{\partial\nu}\ln f(T,\xi)$
is bounded in $T$ on $[0,\infty).$
We cover several market models, namely the Ornstein--Uhlenbeck process, the CIR process, the $3/2$ model, a quadratic drift model.


The remainder of this paper is organized as follows.  
The related literature is reviewed in Section \ref{sec:related}.
The utility maximization problem, the Hansen--Scheinkman decomposition and the Malliavin calculus method are explained as mathematical preliminaries in Section \ref{sec:math_prel}.  
The main ideas and arguments 
for investigating the influence of risk tolerance are discussed in Section \ref{sec:main_argu}.
The influence of risk tolerance
on utility-maximizing portfolios is illustrated in Section \ref{sec:u_max}.
Finally, our findings are summarized in Section \ref{sec:con}.
The technical details are presented in the appendices.

\subsection{Related literature}
\label{sec:related}

Many authors have studied the stability
of the optimal investment strategy
with respect to the utility function.
\cite{jouini2004convergence} 
studied in a general complete financial market the stability
of the optimal investment-consumption strategy
with respect to the choice of the utility function. More precisely, for a given
sequence of utility functions that converges pointwise, they proved the almost sure as well
as the $L^p$-convergence $(p\ge1)$ of the optimal wealth and consumption at each date. 
\cite{carassus2007optimal}
investigated the convergence of optimal strategies 
with respect to a sequence of utility functions.
They also established the continuity of the utility indifference price with respect to changes in agents' preferences.
\cite{nutz2012risk}
considered the economic problem of optimal consumption and
investment with power utility. 
As the relative risk aversion
tends to infinity or to one was proved,
the convergence of the optimal consumption is obtained for
general semimartingale models while the convergence of the optimal trading strategy
is obtained for continuous models.

The dependence of the risk tolerance on the investment strategy has been studied many authors.
\cite{zariphopoulou2009investment} analyzed a portfolio choice problem when the local risk tolerance is time-dependent and asymptotically linear in wealth.
This methodology allows the investment performance to be measured in terms of the risk tolerance and alternative market views.
\cite{mocha2013stability}
studied the sensitivity of the power utility maximization problem with respect to the
investor's relative risk aversion, the statistical probability measure, the investment constraints, and
the market price of risk.
\cite{paravisini2017risk}
estimated risk tolerance from investors' financial decisions in a person-to-person lending platform. They developed a method that obtains a risk-tolerance parameter from each portfolio choice on the basis of the elasticity of risk tolerance to changes in wealth. 
\cite{bi2019optimal}
investigated the optimal investment--reinsurance strategies for an agent with state-dependent risk tolerance and value-at-risk constraints. 
They derived the closed-form expressions of the optimal strategies and discussed the impact of the risk tolerance.
\cite{delong2019optimal}
considered agents whose risk tolerance consists of a constant risk tolerance and a small wealth-dependent risk tolerance. He
investigated an exponential utility maximization problem for an agent who faces a stream of non-hedgeable claims.

Numerous studies have investigated the topic of long-term investment strategies.
\cite{fleming2000risk} considered an optimal investment model  to maximize the long-term growth rate of the expected utility of wealth. 
The problem was reformulated as a risk‐sensitive control problem with an infinite time horizon. 
\cite{hansen2009long}, \cite{hansen2012dynamic}   and \cite{hansen2012pricing}  
exploited the Hansen--Scheinkman decomposition method and 
demonstrated
a long-term risk--return trade-off.
\cite{guasoni2015static} studied a class of static fund separation theorems that is valid for investors with a long time horizon and constant relative risk tolerance.
\cite{robertson2015large} investigated long-term portfolio choice problems by analyzing the large-time asymptotic behavior of solutions to semi-linear Cauchy problems.

Malliavin calculus has been studied in relation to various topics in quantitative finance. 
\cite{fournie1999applications}  
investigated a probabilistic method
for computations of Greeks in finance.
This methodology is based on the integration-by-parts formula developed in Malliavin calculus. 
\cite{benhamou2003optimal}
showed that the Malliavin weight functions for Greeks must satisfy necessary and sufficient conditions expressed as conditional expectations.
\cite{alos2007short} used Malliavin calculus techniques to obtain an expression for the short-time behavior of the at-the-money implied volatility skew for a generalization of the Bates model, where the volatility does not need to be a diffusion or
a Markov process.
\cite{borovivcka2014shock}
studied the shock elasticity, which reflects the sensitivity with respect
to a perturbation over a time instant. 
They proposed a Malliavin calculus method to compute the shock elasticity. 
\cite{alos2019estimating} studied the difference between the fair strike of a volatility swap and the at-the-money implied volatility of a European call option.
They used the Malliavin calculus approach to derive an exact expression for this difference. 
\cite{park2019sensitivity} employed the Malliavin calculus method to investigate
the sensitivities of the long-term expected utility of optimal portfolios under incomplete markets.

As a closely related article,  \cite{park2018sensitivity} 
conducted a sensitivity analysis of long-term cash flows.
However, the perturbation form is different from this paper.  He studied the extent to which the price of the cash flow is affected by small perturbations of the underlying Markov diffusion.
He considered the drift and volatility perturbations in the underlying process
$$dX_{t}^{\epsilon}=b_\epsilon(X_{t}^{\epsilon})\,dt+\sigma_\epsilon (X_{t}^{\epsilon})\,dB_{t}\;,\,X_0^\epsilon=\xi$$
for the perturbation parameter $\epsilon$, and 
analyzed their influence to the cash flows. This paper, however, works with the investor's risk tolerance and the perturbations in the underlying process is not considered.

\section{Mathematical preliminary}
\label{sec:math_prel}

In this section, we demonstrate the utility maximization problem and the two main methodologies exploited in the rest of this study: 
the Hansen--Scheinkman decomposition and the 
Malliavin calculus method.

\subsection{Utility maximization problem}
\label{sec:u_max_prob}

We consider the classical utility maximization problem in  financial markets. Let $(\Omega,\mathcal{F},(\mathcal{F}_t)_{t\ge0},{\bf P})$ be a filtered probability space having a one-dimensional Brownian motion $(Z_t)_{t\ge0}$ and a $n$-dimensional Brownian motion  $W$ with constant correlation $\rho=(\rho_1,\cdots,\rho_n)^\top=d\left\langle Z, \, W \right\rangle_t / dt.$
The measure ${\bf P}$ is referred to as the  physical measure and the filtration $(\mathcal{F}_t)_{t\ge0}$ is the argumentation of the natural filtration of $Z$ and $W.$ 
We assume that the market has a state process $(X_t)_{t\ge0},$ which is a Markov diffusion satisfying
\begin{equation}
\label{eqn:X}
dX_t=b(X_t)\,dt+\sigma(X_t)\,dB_t\,,\;X_0=\xi
\end{equation}
There are  $n+1$ assets $S^{(0)},S^{(1)},\ldots,S^{(n)}$ in the market, where $S^{(0)}$ is a risk-free asset and $S^{(1)},\ldots,S^{(n)}$ are risky assets. The assets dynamics are given as
\begin{equation}   \label{eqn:S}
\begin{aligned}
&\frac{dS^{(0)}_t}{S^{(0)}_t} = r(X_t)\,  dt\,,\\
&\dfrac{dS^{(i)}_t}{S^{(i)}_t} = (r(X_t)+  \mu_i(X_t)) dt + \sum_{j=1}^{n} \upsilon_{ij}(Y_t) \, dW^{(j)}_t \qquad 1 \leq i \leq n\,.
\end{aligned}
\end{equation}
Here, $r$ is the short interest rate function, $\mu=(\mu_1,\cdots,\mu_n)^\top$ is the excess return function, and $\upsilon=(\upsilon_{ij})_{1\le i,j\le n}$ is the volatility matrix functions.

A portfolio is a $n$-dimensional process $\pi=(\pi_t)_{t\ge0}=(\pi^{(1)}_t,\ldots,\pi^{(n)}_t)_{t\geq0}$, which represents the proportions of wealth in each risky asset.
The wealth process $(\Pi_t)_{t\ge0}=(\Pi_t^\pi)_{t\ge0}$ of $\pi$ with positive initial capital $\omega$ satisfies
\begin{equation} \label{X_SDE}
\frac{d\Pi_t}{\Pi_t} =(r(X_t)+\pi_t\mu(X_t)) \, dt + \pi_t \upsilon(X_t) \, dW_t, \quad \Pi_0 =\omega>0.
\end{equation}
We assume that the portfolio process $(\pi_t)_{t\geq0}$ is  $(\mathcal{F}_t)_{t\ge0}$-adapted and the integrations in Eq.\eqref{X_SDE} are well-defined, that is, $\mathbb{E}\int_0^t |(r(X_s)+\pi_s\mu(X_s))|+|\!|\pi_s\upsilon(X_s) |\!|^2\,ds<\infty$ for all $t\ge0,$ where $|\!|\cdot|\!|$ is the usual Euclidean norm.
The wealth process $\Pi_t>0$ for all $t\ge0$ a.s. since the initial wealth $\omega>0.$

In this market, we consider an agent who wants to maximize the expected utility of the terminal wealth over all possible portfolios.
More precisely, we are interested in 
$$\sup_\pi\mathbb{E}^{\bf P}[U(\Pi_T^{\pi})]$$
over all possible portfolios $\pi$ for given positive initial capital $\omega.$
The utility function is assumed to be a power function of the form  
\begin{equation}
\label{eqn:utility_ftn}
U(x)=x^\nu/\nu
\end{equation} 
for $\nu<0.$

This utility maximization problem can be solved by using stochastic control theory.
Define the value function $u_T$ as
$$-u_T=-u_T(t,\omega,x)=\sup_{\pi}\mathbb{E}^{\bf P}[U(\Pi_T^{\pi})|\Pi_t=\omega,X_t=x]\,.$$ 
Following \cite[Proposition 2.1]{zariphopoulou2001solution}, we have
$$u_T(t,\omega,x)=-\frac{\omega^\nu}{\nu} (\mathbb{E}^\mathbb{P}[e^{\frac{\nu(1-\nu+\nu\rho'\rho)}{1-\nu}\int_t^T (r+\frac{1}{2(1-\nu)}\theta'\theta)(X_u)\,du}|X_t=x])^{\frac{1-\nu}{1-\nu+\nu \rho'\rho}}$$
where $\theta:=\upsilon^{-1}\mu,$ $b:=k+\frac{\nu}{1-\nu}\sigma\rho'\theta$ and the $\mathbb{P}$-dynamics of $X$ is
\begin{equation}
\label{eqn:X_P}
dX_t=b(X_t)\,dt+\sigma(X_t)\,dB_t
\end{equation} 
with a $\mathbb{P}$-Brownian motion $(B_t)_{t\ge0}.$
For $t=0,$ it follows that
$$u_T=u_T(0,\omega,\xi)=-\frac{\omega^\nu}{\nu} q_T^{\frac{1-\nu}{1-\nu+\nu \rho'\rho}}$$
for
\begin{equation}\label{eqn:q_T}
q_T:=\mathbb{E}^\mathbb{P}_\xi[e^{\frac{\nu(1-\nu+\nu\rho'\rho)}{1-\nu}\int_0^T (r+\frac{1}{2(1-\nu)}\theta'\theta)(X_u)\,du}]\,.
\end{equation}
Here, we have used the notation $\mathbb{E}^\mathbb{P}_\xi[\,\cdots]=\mathbb{E}^\mathbb{P}[\,\cdots|X_0=\xi].$
In particular, when the short rate is a constant $r,$
\begin{equation}
\label{eqn:-u_T}
u_T=-\frac{\omega^\nu}{\nu} e^{r\nu T} p_T^{\frac{1-\nu}{1-\nu+\nu \rho'\rho}}
\end{equation}
where
\begin{equation}
\label{eqn:p_T}
p_T:=\mathbb{E}_\xi^{{\mathbb{P}}}[e^{-q\int_t^T (\theta'\theta)(X_u)\,du}]
\end{equation}
and $q:=-\frac{\nu(1-\nu+\nu\rho'\rho)}{2(1-\nu)^2}.$
 
In conclusion, the problem of analyzing the optimal expected utility boils down to  the problem of analyzing the expectation $q_T$ in Eq.\eqref{eqn:q_T} (or $p_T$ in Eq.\eqref{eqn:p_T} if the short rate is constant) with the dynamics of $X$ given as Eq.\eqref{eqn:X_P}.
Later this will be used in {\bf Step I} in Section \ref{sec:main_argu}.

\subsection{Hansen--Scheinkman decomposition}
\label{sec:HS}

We briefly review the 
Hansen--Scheinkman decomposition as a mathematical preliminary.
Readers may refer to \cite{hansen2009long}, \cite{park2018sensitivity}, and \cite{qin2016positive} for further details. 
Consider a filtered probability space $(\Omega,\mathcal{F},(\mathcal{F}_t)_{t\ge0},{\mathbb{P}})$ having a $d$-dimensional Brownian motion $B=(B_t^{(1)},\cdots,B_t^{(d)})_{t\ge0}^\top.$
The family $(\mathcal{F}_t)_{t\ge0}$ is the completed filtration generated by $B.$

We begin with the state space $\mathcal{D}$ and four functions $b,\sigma,q,h.$
Let $\mathcal{D}$ be an open and connected subset of $\mathbb{R}^d,$ and let $b:\mathcal{D}\to\mathbb{R}^d$ and $\sigma:\mathcal{D}\to\mathbb{R}^d\times\mathbb{R}^d$  be continuously differentiable functions. The matrix $\sigma$ is invertible. 	The function $q:\mathcal{D}\to\mathbb{R}$ is continuous and the function $h:\mathcal{D}\to\mathbb{R}$ is nonnegative continuously differentiable.	For each $\xi\in\mathcal{D},$ assume that the SDE
Eq.\eqref{eqn:X} has a unique strong solution $X$ on $\mathcal{D}.$

We can 
consider an infinitesimal generator and its eigenpair.
Define the infinitesimal generator of the process $X$ with killing rate $q$ as
\begin{equation} \label{eqn:L}
\mathcal{L}:=\frac{1}{2}\sum_{i,j=1}^da_{ij}\frac{\partial^2}{\partial x_i\partial x_j}+\sum_{i=1}^db_i\frac{\partial}{\partial x_i}-q,
\end{equation}
where $a=\sigma\sigma^\top.$
For a real number $\lambda$ and a positive $C^2$-function  $\phi:\mathcal{D}\to\mathbb{R},$ we say that
a pair $(\lambda,\phi)$
is an {\em eigenpair} of $-\mathcal{L}$ if
\begin{equation}\label{eqn:eigen_semigp}
\mathcal{L}\phi=-\lambda\phi \quad\textnormal{on } \mathcal{D}\,. 
\end{equation}
For $T>0,$ by applying the Ito formula, we can show that for each eigenpair $(\lambda,\phi),$ a positive process
\begin{equation}\label{eqn:M}
M_t^\phi:=\frac{\phi(X_{t})}{\phi(\xi)}e^{\lambda t-\int_{0}^{t} q(X_{s})\,ds}\,,\;0\le t\le T 
\end{equation}
is a local martingale.
Assume that this process is a martingale. 
We can define a measure $\hat{{\mathbb{P}}}_t^\phi$ on each
$\mathcal{F}_t$ for $0\le t\le T$
by
$$\frac{d\hat{{\mathbb{P}}}_t^\phi}{d{\mathbb{P}}\;}=M_t^\phi\,.$$
The family $(\hat{{\mathbb{P}}}_t^\phi)_{0\le t\le T}$  is consistent, i.e., 
$\hat{{\mathbb{P}}}_{t'}^\phi|_{\mathcal{F}_t}=\hat{{\mathbb{P}}}_t^\phi$  
for all $0\le t\le t'\le T.$ 
For fixed $T>0,$ we use the notation $\hat{\mathbb{P}}^\phi$ instead of $\hat{{\mathbb{P}}}_T^\phi,$ suppressing $T.$ This measure on $\mathcal{F}_T$ is called the eigen-measure with respect
to  $\phi.$
The process
$$\hat{B}_t^\phi=-\int_0^t(\sigma^\top\nabla\phi/\phi)(X_s)\,ds+B_t\,,\;t\ge0$$
is a $\hat{{\mathbb{P}}}^\phi$-Brownian motion
by the Girsanov theorem, and  
the process
$X$ satisfies
$$dX_t=(b+\sigma\sigma^\top\nabla\phi/\phi)(X_t)\,dt+\sigma(X_t)\,d\hat{B}_t^\phi\,. $$

Under these circumstances, consider the decomposition of the discount factor
\begin{equation} 
e^{-\int_{0}^{t} q(X_{s})\,ds}=M_t^\phi e^{-\lambda t}\frac{\phi(\xi)}{\phi(X_{t})}\,,\;t\ge0, 
\end{equation} 
which comes from Eq.\eqref{eqn:M}. 
This expression is called the {\em Hansen--Scheinkman decomposition}. 
The expectation $p_T$ can be written as
\begin{equation}  \label{eqn:HS_tranform}
\begin{aligned}
p_T=\mathbb{E}_\xi^{\mathbb{P}}[e^{-\int_0^T q(X_s)\,ds}]
&=\phi(\xi)\,e^{-\lambda T}\,\mathbb{E}_\xi^{\mathbb{P}}
\Big[\frac{M_T^\phi}{\phi(X_T)}\Big]\\
&=\phi(\xi)\,e^{-\lambda T}\,
\mathbb{E}_\xi^{\hat{\mathbb{P}}^\phi }
\Big[\frac{1}{\phi(X_T)}\Big]\\
&=\phi(\xi)e^{-\lambda T}
f_\phi (T,\xi),
\end{aligned}
\end{equation}
where $f_\phi (t,x):=\mathbb{E}_x^{{\hat{\mathbb{P}}}^\phi}
[\frac{1}{\phi(X_t)}]$ for $t\ge0$ and $x\in\mathcal{D}.$
The function $f_\phi $ is referred to as the {\em remainder function}.  
The decomposition in Eq.\eqref{eqn:HS_tranform} is useful for the analysis of $p_T,$
because the expectation 
$\mathbb{E}_\xi^{{\hat{\mathbb{P}}}^\phi}
[\frac{1}{\phi(X_T)}]$
depends on
the final random variable $X_T,$ whereas 
the expression $$p_T=\mathbb{E}_\xi^{\mathbb{P}}[e^{-\int_0^T q(X_s)\,ds}]$$ depends on the entire path of
$(X_t)_{0\le t\le T}.$ If we know the $\hat{\mathbb{P}}^\phi$-distribution of $X_T$, then
we can analyze the expectation 
$\mathbb{E}_\xi^{\hat{\mathbb{P}}^\phi}
[\frac{1}{\phi(X_T)}]$
directly.
For notational simplicity, when the eigenpair is specified, we use 
the notations $M,$  $\hat{{\mathbb{P}}},$ $(\hat{B}_t)_{t\ge0}$, and $f$ instead of $M^\phi,$  $\hat{{\mathbb{P}}}^\phi,$ $(\hat{B}_t^\phi)_{t\ge0}$, and $f_\phi,$ respectively, suppressing $\phi.$

We summarize this section in the following proposition.
\begin{prop} Let $\mathcal{D}$ be an open and connected subset of $\mathbb{R}^d.$  	 Assume the following conditions.
	\begin{enumerate}[(i)]
		\item  The functions $b:\mathcal{D}\to\mathbb{R}^d$ and $\sigma:\mathcal{D}\to\mathbb{R}^d\times\mathbb{R}^d$  are continuously differentiable functions, and the matrix $\sigma$ is invertible. 	The function $q:\mathcal{D}\to\mathbb{R}$ is  continuous. 
		\item For each $\xi\in\mathcal{D},$  the SDE \eqref{eqn:X} has a unique strong solution $X$ on $\mathcal{D}.$ 	
	\end{enumerate}
If $(\lambda,\phi)$ is an {\em eigenpair} of the operator $-\mathcal{L}$ in Eq.\eqref{eqn:L}, then the process $M^\phi$ in Eq.\eqref{eqn:M} is a local martingale. If this process is a martingale, then
	$$	\mathbb{E}_\xi^{\mathbb{P}}[e^{-\int_0^T q(X_s)\,ds}]=\phi(\xi)\,e^{-\lambda T}\,
	\mathbb{E}_\xi^{\hat{\mathbb{P}}}
	\Big[\frac{1}{\phi(X_T)}\Big]$$
	where $\hat{\mathbb{P}}$ is the eigen-measure with respect to $\phi.$ 
\end{prop}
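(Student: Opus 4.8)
The plan is to establish the result in two stages mirroring its two-clause statement: first, that $M^\phi$ is a local martingale whenever $(\lambda,\phi)$ is an eigenpair, and second, that the claimed identity follows by a change of measure once $M^\phi$ is assumed to be a true martingale. Neither stage is deep; the whole argument is driven by a single Itô computation together with the abstract Bayes rule.

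For the first stage, I would apply Itô's formula directly to $M_t^\phi=\frac{\phi(X_t)}{\phi(\xi)}e^{A_t}$, where $A_t:=\lambda t-\int_0^t q(X_s)\,ds$ is a continuous finite-variation process with $dA_t=(\lambda-q(X_t))\,dt$. Because $A$ has finite variation, the cross-variation between $\phi(X)$ and $e^{A}$ vanishes, so that
$$d\big(\phi(X_t)e^{A_t}\big)=e^{A_t}\,d\phi(X_t)+\phi(X_t)e^{A_t}\,dA_t.$$
Expanding $d\phi(X_t)$ via Itô's formula with $dX_t=b(X_t)\,dt+\sigma(X_t)\,dB_t$ and $a=\sigma\sigma^\top$, the drift of $\phi(X_t)$ equals $(\mathcal{L}\phi+q\phi)(X_t)\,dt$, since $\mathcal{L}=\mathcal{L}_0-q$ with $\mathcal{L}_0$ the generator without killing, while the martingale part is $(\sigma^\top\nabla\phi)(X_t)\cdot dB_t$. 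Collecting the drift terms gives the coefficient $e^{A_t}\big[(\mathcal{L}\phi+q\phi)(X_t)+\phi(X_t)(\lambda-q(X_t))\big]=e^{A_t}(\mathcal{L}\phi+\lambda\phi)(X_t)$, which is identically zero by the eigenpair relation $\mathcal{L}\phi=-\lambda\phi$. Hence $M^\phi$ reduces to the stochastic integral $\int_0^{\cdot}\frac{1}{\phi(\xi)}e^{A_s}(\sigma^\top\nabla\phi)(X_s)\cdot dB_s$, which is a local martingale.

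For the second stage, I would use that $M^\phi$ is positive with $M_0^\phi=1$, so under the martingale hypothesis it serves as a Radon--Nikodym density defining $\hat{\mathbb{P}}$ on $\mathcal{F}_T$ by $d\hat{\mathbb{P}}/d\mathbb{P}=M_T^\phi$. Rearranging the definition of $M_T^\phi$ yields the pointwise identity $e^{-\int_0^T q(X_s)\,ds}=M_T^\phi\,e^{-\lambda T}\,\phi(\xi)/\phi(X_T)$. Taking $\mathbb{P}$-expectations and pulling the deterministic factor $\phi(\xi)e^{-\lambda T}$ outside gives $\mathbb{E}_\xi^{\mathbb{P}}[e^{-\int_0^T q(X_s)\,ds}]=\phi(\xi)e^{-\lambda T}\,\mathbb{E}_\xi^{\mathbb{P}}[M_T^\phi/\phi(X_T)]$. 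Since $1/\phi(X_T)$ is $\mathcal{F}_T$-measurable and $M_T^\phi$ is the density, the abstract Bayes rule converts the last expectation into $\mathbb{E}_\xi^{\hat{\mathbb{P}}}[1/\phi(X_T)]$, which is exactly the asserted formula.

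The one point requiring care is the bookkeeping of the drift in the Itô expansion, where the killing rate $q$ enters twice — once from the generator applied to $\phi$ and once from differentiating $e^{A_t}$ — and must cancel, leaving only $(\mathcal{L}\phi+\lambda\phi)$ to be annihilated by the eigenpair equation. I would also record that positivity of $\phi$, which is built into the definition of an eigenpair, is what makes $M^\phi$ strictly positive (so that it is a legitimate density) and $1/\phi(X_T)$ well defined, both of which the change of measure relies on.
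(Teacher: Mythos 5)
Your proposal is correct and follows essentially the same route as the paper: an It\^o computation showing that the eigenpair relation $\mathcal{L}\phi=-\lambda\phi$ kills the drift of $M^\phi$ (leaving the stochastic integral $\int_0^{\cdot}\frac{1}{\phi(\xi)}e^{A_s}(\sigma^\top\nabla\phi)(X_s)\cdot dB_s$, hence a local martingale), followed by rearranging the decomposition $e^{-\int_0^T q(X_s)\,ds}=M_T^\phi e^{-\lambda T}\phi(\xi)/\phi(X_T)$ and passing to the eigen-measure $\hat{\mathbb{P}}$ via the density $M_T^\phi$. Your explicit bookkeeping of the two appearances of $q$ and of the role of positivity of $\phi$ is exactly the detail the paper leaves implicit.
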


\subsection{Malliavin calculus}

This section presents a brief review of Malliavin calculus. 
For further details, refer to \cite{malliavin2006stochastic} and \cite{nualart2018introduction}.
Let $(\Omega,\mathcal{F},(\mathcal{F}_t)_{0\le t\le T},\mathbb{P})$ be a filtered probability space having a one-dimensional Brownian motion $B.$ The filtration $(\mathcal{F}_t)_{0\le t\le T}$ is the natural filtration of $B.$
Define the set of all cylindrical random variables as
$$\mathcal{S}:=\Big\{f\Big(\int_0^T h^{(1)}(s)\,dB_s,\cdots,\int_0^T h^{(n)}(s)\,dB_s\Big)  : n\in\mathbb{N}, h^{(1)},\cdots,h^{(n)}\in L^2[0,T], f\in C_p^\infty(\mathbb{R}^n)  \Big\}\,,$$
where $C_p^\infty(\mathbb{R}^n)$ is the set of all $C^\infty$ functions such that $f$ and all its
partial derivatives have polynomial growth.
The Malliavin derivative of $F\in\mathcal{S}$ is defined as
a stochastic process $DF=(D_tF)_{0\le t\le T}$  given by
$$D_tF=\sum_{i=1}^nh^{(i)}(t)\frac{\partial f}{\partial x_i}\Big(\int_0^T h^{(1)}(s)\,dB_s,\cdots,\int_0^T h^{(n)}(s)\,dB_s\Big)\,.   $$
Then, $D$ is a linear operator from $\mathcal{S}\subseteq L^2(\Omega)$ to $L^2([0,T]\times\Omega),$ and it is known that $D$ is closable. The closure is also denoted as $D.$
The domain of $D$ is the closure of $\mathcal{S}$ under the norm 
$$|\!|F|\!|_{\mathbb{D}^{1,2}}:=|\!|F|\!|_{L^{2}(\Omega)}+|\!|DF|\!|_{L^2([0,T]\times\Omega)}$$
and is denoted as $\mathbb{D}^{1,2}.$

We will use the following propositions.
Propositions \ref{prop:Malliavin_chain}, \ref{prop:Malliavin_first_variation}, and \ref{prop:rho} are from \cite[Lemma 2.1]{leon2003anticipating}, 
\cite[Theorem 39 on page 312]{protter2004stochastic},
and \cite[Proposition A.1]{park2018sensitivity}, respectively.

\begin{prop}\label{prop:Malliavin_chain}
	Let $\varphi:\mathbb{R}\to\mathbb{R}$ be a continuously differentiable function and $F\in\mathbb{D}^{1,2}.$ Then, $\varphi(F)\in \mathbb{D}^{1,2}$ if and only if $\varphi(F)\in L^2(\Omega)$ and $\varphi'(F)DF \in L^2([0,T]\times\Omega),$ and in this case,
	$$D\varphi(F)=\varphi'(F)DF\,.$$
\end{prop}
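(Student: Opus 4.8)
The plan is to establish the identity $D\varphi(F)=\varphi'(F)DF$ first under the extra assumption that $\varphi'$ is bounded, and then to remove that restriction by a truncation argument, feeding the two stated integrability conditions into the limit passage. Throughout I would use that the operator $D$ is closed, as recorded in the construction of $\mathbb{D}^{1,2}$, together with the standard local property of $D$ (if two elements of $\mathbb{D}^{1,2}$ agree on an event $A\in\mathcal{F}$, their derivatives agree a.e.\ on $A\times[0,T]$).

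\emph{Step 1 (bounded derivative).} Suppose first that $\sup_x|\varphi'(x)|<\infty$. I would mollify $\varphi$ to obtain $\varphi_k\in C^\infty$ with $\varphi_k\to\varphi$ and $\varphi_k'\to\varphi'$ uniformly on compacta and $\sup_x|\varphi_k'(x)|\le\sup_x|\varphi'(x)|$, and approximate $F$ in the $\mathbb{D}^{1,2}$-norm by smooth cylindrical variables $F_j\in\mathcal{S}$. For smooth $\varphi_k$ and cylindrical $F_j$ the chain rule $D\varphi_k(F_j)=\varphi_k'(F_j)DF_j$ is immediate from the definition of $D$ on $\mathcal{S}$. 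Letting $j\to\infty$ and then $k\to\infty$, the uniform linear growth $|\varphi_k(x)|\le C(1+|x|)$ coming from the bound on $\varphi_k'$ makes dominated convergence applicable to both $\varphi_k(F)\to\varphi(F)$ in $L^2(\Omega)$ and $\varphi_k'(F)DF\to\varphi'(F)DF$ in $L^2([0,T]\times\Omega)$ (here I use $F\in L^2(\Omega)$ and $DF\in L^2([0,T]\times\Omega)$); closability of $D$ then yields $\varphi(F)\in\mathbb{D}^{1,2}$ and $D\varphi(F)=\varphi'(F)DF$.

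\emph{Step 2 (necessity and the formula).} For general $\varphi\in C^1$ I would introduce truncations $\varphi_n\in C^1$ with bounded derivative that agree with $\varphi$ on $[-n,n]$ and satisfy $|\varphi_n'|\le|\varphi'|$; for instance $\varphi_n(x)=\varphi(0)+\int_0^x T_n(\varphi'(t))\,dt$, where $T_n(y)=y$ for $|y|\le n$ and $T_n(y)=n\,\mathrm{sgn}(y)$ otherwise. Assuming $\varphi(F)\in\mathbb{D}^{1,2}$, on the event $\{|F|<n\}$ one has $\varphi_n(F)=\varphi(F)$, so by the local property of $D$ their derivatives coincide a.e.\ there, and Step 1 gives $D\varphi(F)=D\varphi_n(F)=\varphi_n'(F)DF=\varphi'(F)DF$ on $\{|F|<n\}\times[0,T]$. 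Letting $n\to\infty$ exhausts $\Omega\times[0,T]$ up to null sets, which proves the identity and, in particular, that $\varphi'(F)DF=D\varphi(F)\in L^2([0,T]\times\Omega)$.

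\emph{Step 3 (sufficiency).} Conversely, assume $\varphi(F)\in L^2(\Omega)$ and $\varphi'(F)DF\in L^2([0,T]\times\Omega)$. With the same $\varphi_n$, Step 1 gives $\varphi_n(F)\in\mathbb{D}^{1,2}$ and $D\varphi_n(F)=\varphi_n'(F)DF$, and $|\varphi_n'|\le|\varphi'|$ furnishes the uniform bound $\sup_n\|\varphi_n'(F)DF\|_{L^2([0,T]\times\Omega)}\le\|\varphi'(F)DF\|_{L^2([0,T]\times\Omega)}<\infty$. Combined with $\varphi_n(F)\to\varphi(F)$ in $L^2(\Omega)$, the closedness of $D$ in its weak-compactness form (if $G_n\to G$ in $L^2(\Omega)$ and $\sup_n\|DG_n\|<\infty$, then $G\in\mathbb{D}^{1,2}$) forces $\varphi(F)\in\mathbb{D}^{1,2}$; the explicit formula then follows from Step 2. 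The derivative-side convergence $\varphi_n'(F)DF\to\varphi'(F)DF$ needed to identify the limit is itself clean, since $|\varphi_n'(F)DF|\le|\varphi'(F)DF|\in L^2([0,T]\times\Omega)$ gives domination.

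The routine ingredients are the cylindrical chain rule and the closability of $D$. The hard part will be the single estimate $\varphi_n(F)\to\varphi(F)$ in $L^2(\Omega)$ in the unbounded-derivative case. Unlike Step 1, no bound on $\varphi'$ is available, so the convenient linear-growth dominating function $C(1+|F|)$ is lost, and the approximants $\varphi_n(F)$ are only controlled by a total-variation-type quantity that need not lie in $L^2$ when $\varphi$ oscillates. I therefore expect the crux to be producing the dominating function directly from the hypothesis $\varphi(F)\in L^2(\Omega)$ and carefully bounding the contribution of the transition region $\{|F|>n\}$ where $\varphi_n$ departs from $\varphi$, rather than any formal manipulation of the derivative.
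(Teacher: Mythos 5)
You should first note that the paper does not prove this proposition at all: it is imported verbatim from \cite[Lemma~2.1]{leon2003anticipating}, so the comparison here is against what a complete proof requires, not against an in-paper argument. Your Steps 1 and 2 are essentially sound: the bounded-derivative chain rule via mollification and cylindrical approximation is standard, and deducing the necessity direction and the formula $D\varphi(F)=\varphi'(F)DF$ from the local property of $D$ on the events $\{|F|<n\}$ is a legitimate, clean argument. (One small repair: your explicit truncation $\varphi_n(x)=\varphi(0)+\int_0^x T_n(\varphi'(t))\,dt$ does \emph{not} agree with $\varphi$ on $[-n,n]$, but only on $[-R_n,R_n]$ with $R_n=\sup\{r:|\varphi'|\le n \text{ on } [-r,r]\}$; since $\varphi'$ is continuous, $R_n\to\infty$, so the argument survives with this change.)

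The genuine gap is exactly where you flag it, in Step 3, and it is not a technicality you can expect to dominate away. Your truncation is designed to give the uniform bound $\sup_n\|\varphi_n'(F)DF\|_{L^2([0,T]\times\Omega)}\le\|\varphi'(F)DF\|_{L^2([0,T]\times\Omega)}$, but the price is that on $\{|F|>R_n\}$ the quantity $\varphi_n(F)$ is a total-variation-type object, $|\varphi_n(F)|\le|\varphi(0)|+\int_{0}^{|F|}|\varphi'(t)|\,dt$, which is controlled by neither hypothesis: $\varphi(F)\in L^2(\Omega)$ constrains $\varphi$ only through the law of $F$, while $\varphi'(F)DF\in L^2$ constrains $\varphi'$ only where it is weighted by that law and by $|DF|$, whereas $\varphi_n(F)$ depends on the values of $|\varphi'|$ on the whole interval between $0$ and $F$ (where $\varphi$ may oscillate with huge variation but small amplitude). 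So neither the strong $L^2$ convergence $\varphi_n(F)\to\varphi(F)$ nor even the weaker substitute $\sup_n\|\varphi_n(F)\|_{L^2(\Omega)}<\infty$ (which would suffice for weak closedness of the graph) follows from your hypotheses, and no dominating function is available. The way complete proofs of this lemma avoid the dilemma is by changing the approximation and the closedness tool simultaneously: take a \emph{multiplicative} cutoff $\varphi_n=\varphi\cdot\psi_n$ with $\psi_n=1$ on $[-n,n]$, $\operatorname{supp}\psi_n\subset[-2n,2n]$, $|\psi_n'|\le C/n$, so that $|\varphi_n(F)|\le|\varphi(F)|$ makes the $L^2$-convergence of $\varphi_n(F)$ immediate from the hypothesis $\varphi(F)\in L^2(\Omega)$; the cost is that $D\varphi_n(F)=\psi_n(F)\varphi'(F)DF+\psi_n'(F)\varphi(F)DF$ contains a cross term that is only in $L^1([0,T]\times\Omega)$ (by Cauchy--Schwarz), so the closedness-with-uniform-$L^2$-bound lemma you invoke no longer applies. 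Instead one uses the duality characterization $\mathbb{D}^{1,2}=\mathrm{Dom}(\delta^*)$, where $\delta$ is the divergence: it suffices to verify $\mathbb{E}[\varphi(F)\,\delta(u)]=\mathbb{E}\big[\int_0^T\varphi'(F)D_tF\,u_t\,dt\big]$ for $u$ in a core of $\delta$ consisting of bounded elementary processes, and there the cross term is bounded by $(C/n)\,\|u\|_\infty\sqrt{T}\,\|\varphi(F)\|_{L^2(\Omega)}\|DF\|_{L^2([0,T]\times\Omega)}\to0$. Without this (or an equivalent) device, your Step 3 does not close, so the sufficiency half of the proposition remains unproved in your proposal.
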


\begin{prop}\label{prop:Malliavin_first_variation} Let $X=X^{(x)}$ be a Markov diffusion whose dynamics is given as
	$$dX_t=b(X_t)\,dt+\sigma(X_t)\,dB_t$$
	with initial value $X_0=x$,
	where $b$ and $\sigma$ are  continuously differentiable functions with
	bounded derivatives. Then, the map $x\mapsto X_t^{(x)}$ is continuously differentiable almost surely and the derivative process
	$Y_t:=\frac{\partial }{\partial x}X_t^{(x)}$ satisfies
	$$dY_t=b'(X_t)Y_t\,dt+\sigma'(X_t)Y_t\,dB_t\,,Y_0=1\,,$$
	equivalently, $$Y_t=e^{\int_0^t(b'(X_s)-\frac{1}{2}\sigma'^2(X_s))\,ds+\int_0^t\sigma'(X_s)\,dB_s}\,.$$
	Moreover, $X_t\in\mathbb{D}^{1,2}$ for each $0\le t\le T$ and its Malliavin derivative satisfies
	$$D_sX_t=\sigma(X_s)\frac{Y_t}{Y_s}\mathbb{I}_{\{s\le t\}}.$$
\end{prop}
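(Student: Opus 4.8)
The plan is to establish the two assertions in turn: first the almost sure differentiability of the flow $x\mapsto X_t^{(x)}$ together with the linear equation and exponential formula for $Y$, and then the membership $X_t\in\mathbb{D}^{1,2}$ together with the identity for $D_sX_t$. The standing hypothesis that $b'$ and $\sigma'$ are bounded means $b$ and $\sigma$ are globally Lipschitz, so a unique strong solution $X^{(x)}$ exists, has finite moments of every order, and supplies uniform constants for the Gronwall-type estimates I will rely on throughout.

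For the first part I would argue formally before making it rigorous: differentiating the integral form $X_t^{(x)}=x+\int_0^t b(X_u^{(x)})\,du+\int_0^t\sigma(X_u^{(x)})\,dB_u$ in $x$ suggests that $Y_t=\partial_x X_t^{(x)}$ solves the linear SDE $Y_t=1+\int_0^t b'(X_u)Y_u\,du+\int_0^t\sigma'(X_u)Y_u\,dB_u$. To justify this I would study the difference quotients $\Delta_t^h:=h^{-1}(X_t^{(x+h)}-X_t^{(x)})$, subtract the two integral equations, apply the mean value theorem to $b$ and $\sigma$, and combine the Burkholder--Davis--Gundy inequality with Gronwall's lemma to show $\mathbb{E}\sup_{t\le T}|\Delta_t^h-Y_t|^2\to0$ as $h\to0$, where $Y$ is the solution of the linear SDE. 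A companion $L^p$ bound on the $x$-increments of $Y_t^{(x)}$, fed into Kolmogorov's continuity criterion, upgrades this to the almost sure continuous differentiability of $x\mapsto X_t^{(x)}$. The explicit exponential representation then follows by applying It\^o's formula to $\ln Y_t$ (the linear SDE has a strictly positive solution), which produces exactly the exponent $\int_0^t(b'(X_s)-\tfrac12\sigma'(X_s)^2)\,ds+\int_0^t\sigma'(X_s)\,dB_s$.

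For the second part I would first obtain $X_t\in\mathbb{D}^{1,2}$ through the Picard iteration $X_t^{(0)}\equiv x$, $X_t^{(n+1)}=x+\int_0^t b(X_u^{(n)})\,du+\int_0^t\sigma(X_u^{(n)})\,dB_u$: each iterate lies in $\mathbb{D}^{1,2}$ by the chain rule of Proposition \ref{prop:Malliavin_chain} and the commutation of $D$ with the time and stochastic integrals, and one checks that $\sup_n\|X_t^{(n)}\|_{\mathbb{D}^{1,2}}<\infty$ while $X_t^{(n)}\to X_t$ in $L^2$, so closability of $D$ gives $X_t\in\mathbb{D}^{1,2}$. Applying $D_s$ to the integral equation and pushing it past the two integrals yields, for $s\le t$, $D_sX_t=\sigma(X_s)+\int_s^t b'(X_u)D_sX_u\,du+\int_s^t\sigma'(X_u)D_sX_u\,dB_u$, while $D_sX_t=0$ for $s>t$ by adaptedness. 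Hence, for fixed $s$, the process $(D_sX_t)_{t\ge s}$ solves the same linear equation as $Y$ but started from the $\mathcal{F}_s$-measurable value $\sigma(X_s)$ at time $s$; since the explicit formula exhibits $Y_t/Y_s$ as the solution of that linear equation started from $1$ at time $s$, pathwise uniqueness forces $D_sX_t=\sigma(X_s)Y_t/Y_s$ for $s\le t$, which is the claimed identity.

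The main obstacle I anticipate is the rigorous justification of the two interchanges of a limiting operation with the stochastic integral: the differentiation in $x$ in the first part and the Malliavin derivative $D_s$ in the second. Both demand uniform $L^p$ moment bounds along the relevant approximating families (difference quotients, respectively Picard iterates) so that Gronwall's lemma and the closability of $D$ apply, and the second step further hinges on correctly isolating the ``diagonal'' contribution $\sigma(X_s)$ that appears when $D_s$ acts on $\int_0^t\sigma(X_u)\,dB_u$. Once these estimates are secured, the final identification by uniqueness of the linear SDE is routine.
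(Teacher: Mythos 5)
Your proposal is correct in outline, but note that the paper does not prove this proposition at all: it is quoted as a known result, citing \cite[Theorem 39, p.~312]{protter2004stochastic} for the differentiability of the flow and relying on the standard Malliavin-calculus literature (e.g., Nualart) for the statement $X_t\in\mathbb{D}^{1,2}$ with $D_sX_t=\sigma(X_s)Y_t/Y_s$. What you have written is essentially a reconstruction of the proofs in those references --- difference quotients with BDG/Gronwall and Kolmogorov's criterion for the flow, It\^o's formula on $\ln Y_t$ for the exponential representation, and Picard iteration plus identification of the linear SDE for the Malliavin derivative --- so there is no conflict with the paper. One small technical refinement: closability of $D$ alone does not let you pass to the limit along the Picard iterates; you need the standard closedness-plus-weak-compactness lemma (if $F_n\to F$ in $L^2(\Omega)$ and $\sup_n\mathbb{E}\|DF_n\|_{L^2[0,T]}^2<\infty$, then $F\in\mathbb{D}^{1,2}$ and $DF_n\to DF$ weakly), which is what your uniform $\mathbb{D}^{1,2}$-bound is implicitly invoked for; with that lemma named, your sketch is the standard complete argument.
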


\begin{prop}\label{prop:IBP}
 (Integration by parts formula) Let $F\in\mathbb{D}^{1,2}$ and $(h_t)_{0\le t\le T}$ be a progressively measurable process with $\mathbb{E}\int_0^Th_s^2\,ds<\infty.$ Then
 $$\mathbb{E}\Big(F\int_0^Th_s\,dB_s\Big)=\mathbb{E}\Big(\int_0^T (D_sF)h_s\,ds\Big)\,.$$
\end{prop}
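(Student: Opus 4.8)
The plan is to recognize this identity as the duality (adjoint) relationship between the Malliavin derivative $D$ and the It\^o integral: on adapted integrands the It\^o integral is exactly the operator dual to $D$, i.e. the Skorokhod integral restricted to adapted processes. I would establish it in three stages of increasing generality — first a one-dimensional Gaussian integration by parts for a single deterministic Wiener integral, then an extension to simple adapted integrands where the adaptedness does the real work, and finally a density argument to reach an arbitrary progressively measurable $h$ with $\mathbb{E}\int_0^T h_s^2\,ds<\infty$.

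First I would establish the base case for a deterministic kernel $g\in L^2[0,T]$ and a cylindrical $G$. Writing $G=\varphi(\int_0^T g_1\,dB,\dots,\int_0^T g_n\,dB)$ and orthonormalizing the $g_i$, the Wiener integrals become i.i.d.\ standard Gaussians $Z_1,\dots,Z_n$, so the claim reduces to the finite-dimensional identity $\mathbb{E}[\varphi(Z)Z_i]=\mathbb{E}[\partial_i\varphi(Z)]$, which is ordinary Gaussian integration by parts. Unwinding the orthonormalization gives $\mathbb{E}[G\int_0^T g_s\,dB_s]=\mathbb{E}[\int_0^T (D_sG)g_s\,ds]$ for cylindrical $G$, and this extends to every $G\in\mathbb{D}^{1,2}$ because both sides are continuous in the $\mathbb{D}^{1,2}$-norm (the It\^o isometry bounds the left side by $\|G\|_{L^2(\Omega)}\|g\|_{L^2[0,T]}$, Cauchy--Schwarz bounds the right side by $\|DG\|_{L^2([0,T]\times\Omega)}\|g\|_{L^2[0,T]}$) while $\mathcal{S}$ is dense in $\mathbb{D}^{1,2}$.

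Next I would treat a simple adapted integrand $h_s=\sum_j F_j\,\mathbb{I}_{(t_j,t_{j+1}]}(s)$ with each $F_j$ a bounded $\mathcal{F}_{t_j}$-measurable cylindrical variable, so that $\int_0^T h_s\,dB_s=\sum_j F_j(B_{t_{j+1}}-B_{t_j})$. For each $j$ I apply the base case to $G=FF_j$ with $g=\mathbb{I}_{(t_j,t_{j+1}]}$; the product $FF_j$ lies in $\mathbb{D}^{1,2}$ and the product rule $D(FF_j)=F_j\,DF+F\,DF_j$ (a consequence of the chain rule, Proposition \ref{prop:Malliavin_chain}) yields $\mathbb{E}[FF_j(B_{t_{j+1}}-B_{t_j})]=\mathbb{E}[\int_{t_j}^{t_{j+1}}(F_j D_sF+F D_sF_j)\,ds]$. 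The decisive point is that $F_j$ is $\mathcal{F}_{t_j}$-measurable, hence $D_sF_j=0$ for $s>t_j$, so the cross term vanishes on $(t_j,t_{j+1}]$; since $h_s=F_j$ there, summing over $j$ gives the identity for simple adapted $h$.

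Finally I would remove all restrictions by approximation: choose simple adapted $h^{(n)}\to h$ in $L^2([0,T]\times\Omega)$, so that $\int_0^T h^{(n)}\,dB\to\int_0^T h\,dB$ in $L^2(\Omega)$ by the It\^o isometry, while $\int_0^T h^{(n)}_s D_sF\,ds\to\int_0^T h_s D_sF\,ds$ in $L^1(\Omega)$ by Cauchy--Schwarz using $DF\in L^2([0,T]\times\Omega)$; both sides then pass to the limit. The step I expect to be the crux is the adaptedness fact $D_sF_j=0$ for $s>t_j$, together with the integrability bookkeeping that keeps $FF_j$ inside $\mathbb{D}^{1,2}$ — this is precisely the mechanism by which the Skorokhod integral collapses to the It\^o integral on adapted integrands, and it is the only place where progressive measurability, rather than mere square integrability, is genuinely used.
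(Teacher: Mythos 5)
Your proof is correct, but note that the paper itself gives no proof of Proposition \ref{prop:IBP}: it is quoted as a standard fact of Malliavin calculus (the duality between the derivative operator and the Skorokhod integral, together with the fact that the Skorokhod integral coincides with the It\^o integral on adapted square-integrable integrands), with \cite{malliavin2006stochastic} and \cite{nualart2018introduction} cited for details. What you have reconstructed is essentially the standard textbook argument: finite-dimensional Gaussian integration by parts on cylindrical functionals, the collapse of the cross term via $D_sF_j=0$ for $s>t_j$, and a two-fold density argument. Two points deserve slightly more care than you give them. First, the product rule $D(FF_j)=F_j\,DF+F\,DF_j$ for $F\in\mathbb{D}^{1,2}$ and $F_j$ cylindrical is not literally a consequence of Proposition \ref{prop:Malliavin_chain}, which is a chain rule for a single function of one variable; you should either prove the identity for cylindrical $F$ first (where the product rule is the ordinary Leibniz rule, since products of cylindrical variables are cylindrical) and postpone the extension to general $F\in\mathbb{D}^{1,2}$ to the very last approximation step, or invoke the closedness of $D$ together with boundedness of $F_j$ and $DF_j$. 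Second, the vanishing $D_sF_j=0$ for $s>t_j$ is immediate only if the approximating class is chosen so that $F_j$ is a smooth function of Brownian increments on $[0,t_j]$; for an abstract $\mathcal{F}_{t_j}$-measurable element of $\mathbb{D}^{1,2}$ it is a standard but nontrivial lemma. Both issues are repaired simultaneously by taking the $F_j$ to be smooth, compactly supported functions of finitely many increments of $B$ up to time $t_j$, which are dense enough for the final limiting step. With that choice your three-stage argument is complete and self-contained, which is what it buys over the paper's treatment: the paper outsources the result to the literature, while your proof makes the mechanism visible --- adaptedness is exactly what turns the abstract duality into a statement about the It\^o integral.
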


\begin{prop} \label{prop:rho}
	Let $b,$ $\overline{b},$ $\sigma$ be continuously differentiable  functions with $\sigma>0$ and let $f$ be a continuous function on an open interval $\mathcal{D}\subseteq\mathbb{R}.$ Define $b_\epsilon=b+\epsilon \overline{b}$ for $\epsilon \in I$, where $I$ is an open neighborhood of $0.$ Assume that the SDE 
	$$dX_t^{(\epsilon)}=b_\epsilon(X_t^{(\epsilon)}
	)\,dt+\sigma(X_t^{(\epsilon)})\,dB_t\,,\;X_0^{(\epsilon)}=\xi$$
	has a unique strong solution $X^{(\epsilon)}$ on $\mathcal{D}$ for all $\xi\in\mathbb{R}$ and $\epsilon\in I.$
	Suppose that for $T>0$, there exist positive constants $\epsilon_0,$ $\epsilon_1,$ $p,$ $q$ with $p\geq2$ and  $1/p+1/q=1$ such that 
	\begin{align}
	&\mathbb{E}_\xi[e^{\epsilon_0\int_0^T\overline{b}^2(X_s)\,ds}]<\infty\label{eqn:expo_condi}\,,\\
	&\mathbb{E}_\xi\Big[\int_0^T|\overline{b}|^{p+\epsilon_1}(X_s)\,ds\Big]<\infty\label{eqn:g_condi}\,,\\
	&\mathbb{E}_\xi[|f|^q(X_T)]<\infty\label{eqn:psi_condi}\,.
	\end{align}		
	Then, 
	the expectation	$\mathbb{E}[f(X_t^{(\epsilon)})]$ is continuously differentiable in $\epsilon$ and 
	\begin{equation}
	\begin{aligned}\label{eqn:deriva_rho}
	\frac{\partial}{\partial\epsilon}\Big|_{\epsilon=0}\mathbb{E}[f(X_T^\epsilon)]
	=\mathbb{E}\Big[f(X_{T})\int_{0}^{T}(\sigma^{-1}\overline{b})(X_{s})\, dB_{s}\Big],
	\end{aligned}
	\end{equation}
	where $X=X^{(0)}.$	
\end{prop}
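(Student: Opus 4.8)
My plan is to recast the drift perturbation as a change of measure through Girsanov's theorem and then differentiate the resulting Radon--Nikodym density in $\epsilon$. Writing $X=X^{(0)}$ and abbreviating $\theta_s:=(\sigma^{-1}\overline{b})(X_s)$, I first introduce the Dol\'eans--Dade exponential
$$Z_T^{(\epsilon)}:=\exp\Big(\epsilon\int_0^T\theta_s\,dB_s-\frac{\epsilon^2}{2}\int_0^T\theta_s^2\,ds\Big).$$
The exponential integrability in \eqref{eqn:expo_condi} is what governs the exponential moments of this density, certifying a Novikov-type criterion so that for all $\epsilon$ in a small neighborhood of $0$ the local martingale $Z^{(\epsilon)}$ is a genuine $\mathbb{P}$-martingale and $d\mathbb{Q}_\epsilon:=Z_T^{(\epsilon)}\,d\mathbb{P}$ defines a probability measure. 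By Girsanov's theorem $\tilde B_t:=B_t-\epsilon\int_0^t\theta_s\,ds$ is a $\mathbb{Q}_\epsilon$-Brownian motion, and substituting into the dynamics of $X$ shows that under $\mathbb{Q}_\epsilon$ the process $X$ solves the SDE with drift $b+\epsilon\overline{b}$ and diffusion $\sigma$. Strong uniqueness then forces the $\mathbb{Q}_\epsilon$-law of $X$ to coincide with the $\mathbb{P}$-law of $X^{(\epsilon)}$, which yields the representation
$$\mathbb{E}[f(X_T^{(\epsilon)})]=\mathbb{E}^{\mathbb{Q}_\epsilon}[f(X_T)]=\mathbb{E}\big[f(X_T)\,Z_T^{(\epsilon)}\big].$$

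With this representation the problem reduces to differentiating $\epsilon\mapsto\mathbb{E}[f(X_T)Z_T^{(\epsilon)}]$ under the expectation sign. Pathwise, $\epsilon\mapsto Z_T^{(\epsilon)}(\omega)$ is smooth with
$$\frac{\partial}{\partial\epsilon}Z_T^{(\epsilon)}=Z_T^{(\epsilon)}\Big(\int_0^T\theta_s\,dB_s-\epsilon\int_0^T\theta_s^2\,ds\Big),$$
so that $\frac{\partial}{\partial\epsilon}\big|_{\epsilon=0}Z_T^{(\epsilon)}=\int_0^T\theta_s\,dB_s$ because $Z_T^{(0)}=1$. If the interchange of derivative and expectation is legitimate, evaluating at $\epsilon=0$ produces exactly the claimed identity \eqref{eqn:deriva_rho}, since there $\mathbb{Q}_0=\mathbb{P}$ and $X^{(0)}=X$.

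Justifying this interchange is where I expect the real work to lie. The plan is to exhibit an integrable dominating function for the $\epsilon$-derivative of $f(X_T)Z_T^{(\epsilon)}$, uniformly for $\epsilon$ near $0$, and then apply dominated convergence via the mean value theorem. The delicate factor is $Z_T^{(\epsilon)}\big(\int_0^T\theta_s\,dB_s-\epsilon\int_0^T\theta_s^2\,ds\big)$: the stochastic integral is unbounded and $Z_T^{(\epsilon)}$ can itself be large, so a crude $L^2$ estimate is hopeless. Here the three hypotheses cooperate. A H\"older split with exponents $p,q$ isolates $f(X_T)$, whose $L^q$-norm is finite by \eqref{eqn:psi_condi}, from the martingale factor; the polynomial moment bound \eqref{eqn:g_condi}, combined with H\"older in time and the Burkholder--Davis--Gundy inequality, controls the quadratic-variation term $\int_0^T\theta_s^2\,ds$ and the $L^p$-norm of the stochastic integral; and the exponential moment \eqref{eqn:expo_condi} dominates $Z_T^{(\epsilon)}$ together with the exponential of the stochastic integral, which is precisely the obstruction a polynomial bound cannot surmount. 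Applying the same estimates to the increments of the derivative at nearby values of $\epsilon$ gives continuity of $\epsilon\mapsto\frac{\partial}{\partial\epsilon}\mathbb{E}[f(X_T)Z_T^{(\epsilon)}]$, thereby establishing that the expectation is continuously differentiable and completing the argument.
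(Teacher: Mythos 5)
Your proposal is correct and is, in substance, the same proof as the one behind the paper: the paper never proves Proposition \ref{prop:rho} internally but imports it from \cite[Proposition A.1]{park2018sensitivity}, where it is established by exactly your Girsanov/likelihood-ratio argument --- represent $\mathbb{E}[f(X_T^{(\epsilon)})]=\mathbb{E}\big[f(X_T)Z_T^{(\epsilon)}\big]$ via the change of measure, differentiate the density pathwise, and use the three moment hypotheses (H\"older with exponents $p,q$, Burkholder--Davis--Gundy for the stochastic integral, the exponential moment for the density) to justify the interchange of derivative and expectation and the continuity in $\epsilon$. One step deserves more care than you give it: Novikov's criterion for $Z^{(\epsilon)}$ requires exponential moments of $\int_0^T(\sigma^{-1}\overline{b})^2(X_s)\,ds$, whereas \eqref{eqn:expo_condi} as written controls $\int_0^T\overline{b}^2(X_s)\,ds$, and these agree only if $\sigma$ is bounded away from zero; this discrepancy is really inherited from the statement itself, whose hypotheses are naturally read with $\sigma^{-1}\overline{b}$ in place of $\overline{b}$ (that is also how they are effectively verified in the paper's applications, where $\sigma(x)$ is $\sigma\sqrt{x}$ or $\sigma x$), so your argument should either note this assumption or apply the hypotheses to the ratio $\sigma^{-1}\overline{b}$ throughout.
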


\section{Main arguments}
\label{sec:main_argu}

This study investigates the influence of risk tolerance on the optimal expected utility in the long run. 
It involves the following steps. \newline

\noindent {\bf Step I}.   Transform the expected utility from the optimal investment strategy into the expectation form
$$p_T=\mathbb{E}^\mathbb{P}[e^{-\int_0^Tq(X_s)\,ds}],$$
where the $\mathbb{P}$-dynamics of $X$ is
$$dX_t=b(X_t)\,dt+\sigma(X_t)\,dB_t\,.$$
The drift function $b(\cdot)=b(\cdot;\nu)$  
and the killing rate $q(\cdot)=q(\cdot;\nu)$ may depend on $\nu;$
however, the volatility function $\sigma(\cdot)$ does not depend on $\nu.$ This step was conducted   in Section \ref{sec:u_max_prob}. \newline

\noindent {\bf Step II}.  
Through the Hansen--Scheinkman decomposition discussed in Section \ref{sec:HS},
the expectation can be expressed as
$$p_T=\phi(\xi)e^{-\lambda T}f(T,\xi)\,,$$
where $(\lambda,\phi)$ is an eigenpair and $f(T,\xi)=\mathbb{E}_\xi^{{\hat{\mathbb{P}}}}
[\frac{1}{\phi(X_T)}].$ The $\hat{\mathbb{P}}$-dynamics of $X$ is
$$dX_t=\kappa(X_t)\,dt+\sigma(X_t)\,d\hat{B}_t\,,$$
where $\kappa:=b+\sigma^2 \phi'/\phi.$
It follows that
\begin{equation} \label{eqn:main_argu_p_T}
\frac{\partial}{\partial \nu}\ln p_T=\frac{\partial}{\partial \nu}\ln\phi(\xi)-T\frac{\partial\lambda}{\partial \nu}
+\frac{f_\nu(T,\xi)}{f(T,\xi)}  \,.
\end{equation}

In the remainder function $f(T,\xi)=\mathbb{E}_\xi^{{\hat{\mathbb{P}}}}
[\frac{1}{\phi(X_T)}],$ observe that 
the drift function $\kappa(\cdot)=\kappa(\cdot;\nu),$  the eigenfunction $\phi(\cdot)=\phi(\cdot;\nu)$, and the measure $\hat{\mathbb{P}}=\hat{\mathbb{P}}^{\nu}$ depend on $\nu.$ For convenience, define $H(x;\nu)=\frac{1}{\phi(\cdot;\nu)}.$  
Then,
$$f_\nu(T,\xi)=\mathbb{E}_\xi^{\hat{\mathbb{P}}^\nu}
\Big[\frac{\partial}{\partial {\nu'}}\Big|_{\nu'=\nu}H(X_T;\nu')\Big]+\frac{\partial}{\partial \nu'}\Big|_{\nu'=\nu}\mathbb{E}_\xi^{\hat{\mathbb{P}}^{\nu'}}[H(X_T;\nu)]\,.$$
This is the key observation of this study. The perturbation of the risk aversion is transformed into perturbations of the drift function, payoff function, and eigenfunction. \newline

\noindent {\bf Step III}.
We prove (case by case for each model) that the term $\frac{f_\nu(T,\xi)}{f(T,\xi)}$ is bounded in $T$ on $[0,\infty).$ This is achieved as follows.
First, show that 
the denominator $f(T,\xi)$ converges to a positive constant as $T\to\infty.$
More precisely,
the process $X$ has an invariant distribution $\pi$ under the measure $\hat{\mathbb{P}}$ and 
the remainder function  
$$f(T,\xi)=\mathbb{E}_x^{{\hat{\mathbb{P}}}^\phi}
\Big[\frac{1}{\phi(X_t)}\Big] \to \int \frac{1}{\phi}\,d\pi$$ as $T\to\infty$ with  $\frac{1}{\phi}\in L^1(\pi).$
Second, show that
$$\mathbb{E}_\xi^{\hat{\mathbb{P}}^\nu}
\Big[\frac{\partial}{\partial {\nu'}}\Big|_{\nu'=\nu}H(X_T;\nu')\Big]$$ 
is bounded in $T$ on $[0.\infty).$
This can be easily checked by direct calculation; thus, we do not go into further detail here.

Finally, show that 
$$\frac{\partial}{\partial \nu'}\Big|_{\nu'=\nu}\mathbb{E}_\xi^{\hat{\mathbb{P}}^{\nu'}}[H(X_T;\nu)]$$
is bounded in $T$ on $[0,\infty).$  
Observe that the perturbation parameter $\nu$ is only in the drift term of the dynamics of $X.$ We adopt the Malliavin calculus method to estimate this partial derivative.
Assume that the map $\nu\mapsto \kappa(x;\nu)$ is continuously differentiable for each $x,$
and denote the first-order approximation as $\overline{\kappa}(x,\nu),$ i.e., 
$$\kappa(x;\nu+\epsilon)=\kappa(x;\nu)+\epsilon\overline{\kappa}(x;\nu)+o(\epsilon)$$
as $\epsilon\to0.$
Under suitable conditions (Propositions \ref{prop:Malliavin_chain}, \ref{prop:Malliavin_first_variation}, and \ref{prop:rho}),
\begin{equation}\label{eqn:main_Malliavin}
\begin{aligned}
\frac{\partial}{\partial \nu'}\Big|_{\nu'=\nu}\mathbb{E}_\xi^{\hat{\mathbb{P}}^{\nu'}}[H(X_T;\nu)]
&=\mathbb{E}_\xi^{\hat{\mathbb{P}}^{\nu}}\Big[H(X_T;\nu)\int_0^T \frac{\overline{\kappa}(X_s;\nu)}{\sigma(X_s)}\,d\hat{B}_s\Big]\\
&=\mathbb{E}_\xi^{\hat{\mathbb{P}}^{\nu}}\Big[\int_0^T D_s(H(X_T;\nu))\frac{\overline{\kappa}(X_s;\nu)}{\sigma(X_s)}\,ds\Big]\\
&=\mathbb{E}_\xi^{\hat{\mathbb{P}}^{\nu}}\Big[\int_0^T H_x(X_T;\nu)D_sX_T\frac{\overline{\kappa}(X_s;\nu)}{\sigma(X_s)}\,ds\Big]\\
&=\mathbb{E}_\xi^{\hat{\mathbb{P}}^{\nu}}\Big[H_x(X_T;\nu)Y_T\int_0^T \frac{\overline{\kappa}(X_s;\nu)}{Y_s}\,ds\Big],\\
\end{aligned} 
\end{equation}
where $Y$ is the first variation process of $X.$
For all the models in Section \ref{sec:u_max}, this probabilistic representation will be used to show that the partial derivative is bounded in $T$ on $[0,\infty).$\newline

\noindent {\bf Step IV}.
Since $\frac{f_\nu(T,\xi)}{f(T,\xi)}$ is bounded in $T$ on $[0,\infty)$ in Eq.\eqref{eqn:main_argu_p_T},
we finally obtain
$$\left|\frac{1}{T}\frac{\partial}{\partial \nu}\ln p_T+\frac{\partial\lambda}{\partial \nu}
\right|\leq \frac{c}{T}$$
for some positive constant $c.$ 
In particular, 
$$\lim_{T\to\infty}\frac{1}{T}\frac{\partial}{\partial \nu}\ln p_T=-\frac{\partial\lambda}{\partial \nu}\,.$$
This implies that the influence of the risk tolerance  parameter $\nu$ on long-term investments is determined by the eigenvalue of the generator of the underlying Markov diffusion.

\begin{remark}
	The idea of deriving Eq.\eqref{eqn:main_Malliavin} in {\bf Step III} is from \cite[Section 6.1]{borovivcka2014shock}. They presented the integration-by-parts formula in Malliavin calculus to compute the shock elasticity, and Eq.\eqref{eqn:main_Malliavin} comes from the same method.   
\end{remark}

\section{Utility-maximizing portfolios}
\label{sec:u_max}

We cover several models, 
namely the Black--Scholes model, the Ornstein--Uhlenbeck (OU) process, the Cox--Ingersoll--Ross (CIR) model, the $3/2$ model,
and a quadratic drift model.
Throughout this section, we assume the short rate is a constant $r$ (and thus, $p_T$ in Eq.\eqref{eqn:p_T} will be used).

\subsection{Black--Scholes model}

As a motivating example, consider a constant proportion portfolio
when the underlying market follows the Black--Scholes model.
Assume that the short rate is a constant $r\ge0$ and the stock price follows
$$dS_t=\mu S_t\,dt+\sigma S_t\,dZ_t$$
for $\mu\in\mathbb{R},$ $\sigma>0.$ 
With the initial capital $\omega>0,$ 
it is known that 
the optimal expected utility is
\begin{equation} 
\begin{aligned}
-u_T:=\max_{\pi}\mathbb{E}^{\bf P}[U(\Pi_T^\pi)]
=\frac{\omega^\nu}{\nu} e^{(r+\frac{(\mu-r)^2}{2(1-\nu)\sigma^2})\nu T}.
\end{aligned}
\end{equation}

We aim to investigate the influence of the risk aversion on the long-term investment.
We calculate the partial derivative
$$\frac{\partial}{\partial \nu}\ln u_T=\ln \omega-\frac{1}{\nu}+(r+\frac{(\mu-r)^2}{2(1-\nu)^2\sigma^2})T\,.$$
The partial derivative $\frac{\partial}{\partial \nu}\ln u_T$  grows linearly as $T\to\infty,$ and the linear growth rate is $r+\frac{(\mu-r)^2}{2(1-\nu)^2\sigma^2}.$

\subsection{OU process}

Under the physical measure ${\bf P},$ let the state process $X$ follow the OU process
\begin{equation}
\label{eqn:max_OU_dX}
dX_t= (b-kX_t)\,dt+ \sigma \,dZ_t\,,\;X_0=\xi
\end{equation} 
for $b,\xi\in\mathbb{R}$,  $k,\sigma>0$ and assume  that  $\theta'\theta(\cdot)=\cdot\,^2$ and $\rho'\theta(\cdot)=\overline{\rho}\,\cdot$ for some constant $\overline{\rho}\in\mathbb{R}$ (this holds, for example, $d=1$  and the state process is the market price of risk).   
Then, the $\mathbb{P}$-dynamics of $X$  given in Eq.\eqref{eqn:X_P} is 
\begin{equation} 
dX_t=(b-aX_t)\,dt+\sigma \,dB_t\,,\;X_0=\xi
\end{equation}
and
$p_T=\mathbb{E}^{{\mathbb{P}}}[e^{-q\int_0^T X_u^2\,du}]$,
where
$a=k-\frac{\nu\sigma\overline{\rho}}{1-\nu}$ and  
$q=-\frac{\nu(1-\nu+\nu\rho'\rho)}{2(1-\nu)^2}.$

We now apply the Hansen--Scheinkman decomposition
stated in Section \ref{sec:HS}.
Eq.\eqref{eqn:OU_quaratic_HS} gives
\begin{equation} 
\begin{aligned}
p_T=f(T,\xi) e^{-\frac{1}{2}\eta \xi^2-\ell \xi}e^{-\lambda T}, 
\end{aligned}
\end{equation}
where
$\alpha=\sqrt{a^2+2q\sigma^2},$ $\eta=\frac{\alpha-a}{\sigma^2},$ $\ell=\frac{b\eta}{\alpha},$  $\lambda=-\frac{1}{2}\sigma^2\ell^2+b\ell+\frac{1}{2}(\alpha-a)$,
and 
\begin{equation} 
f(t,x)=\mathbb{E}_x^{\hat{\mathbb{P}}}[e^{\frac{1}{2}\eta X_T^2+\ell X_T}]
\,,\; 0\le t\le T\,,\; x\in\mathbb{R}   
\end{equation} 
is the remainder function.
Under the measure $\hat{\mathbb{P}},$ the process $X$ satisfies
$$dX_t=(\delta-\alpha  X_t)\,dt+\sigma \,d\hat{B}_t$$
for $\delta=\frac{b}{\alpha}.$
It follows that  
\begin{equation}  \label{eqb:OU_partial_nu}
\frac{\partial}{\partial \nu}\ln p_T=-\frac{1}{2}\xi^2\frac{\partial\eta}{\partial \nu}-\xi\frac{\partial\ell}{\partial \nu}-T\frac{\partial\lambda}{\partial \nu}
+\frac{f_\nu(T,\xi)}{f(T,\xi)} \,.
\end{equation}

It suffices to investigate the term $f_\nu(T,\xi)$ since the other terms on the right-hand side of Eq.\eqref{eqb:OU_partial_nu} are easy to estimate.
Since only the parameters $\eta,$ $\ell,$ $\alpha,$ and $\delta$ depend on $\nu$ in the remainder function $f(T,\xi)$, using the chain rule, we know
that $$f_\nu(T,\xi)=f_\eta(T,\xi)\frac{\partial\eta}{\partial\nu}+f_\ell(T,\xi)\frac{\partial\ell}{\partial\nu}+f_\alpha(T,\xi)\frac{\partial\alpha}{\partial\nu}+f_\delta(T,\xi)\frac{\partial \delta}{\partial\nu}\,.$$
By Proposition \ref{prop:OU_quad_partial_nu}, 
the function $f(T,\xi)$ converges to a positive constant as $T\to\infty$, and four partial derivatives are bounded in $T.$ 
Therefore,  
$$\Big|\frac{1}{T}\ln p_T+\lambda\Big|\leq\frac{c}{T} \;\textnormal{ and }\;\Big|\frac{1}{T}\frac{\partial}{\partial \nu}\ln p_T+\frac{\partial\lambda}{\partial\nu}\Big|\leq\frac{c}{T}$$
for some positive constant $c.$
By using Eq.\eqref{eqn:-u_T}, we finally conclude that
$$\left|\frac{1}{T}\frac{\partial}{\partial \nu}\ln u_T-r-\frac{\rho'\rho\lambda}{(1-\nu+\rho'\rho\nu)^2}+\frac{1-\nu}{1-\nu+\nu \rho'\rho}\frac{\partial\lambda}{\partial\nu}\right|\leq\frac{c'}{T}$$
for some positive constant $c'$ and
\begin{equation}
\begin{aligned}
\frac{\partial\lambda}{\partial\nu}
&=(b-\sigma^2\ell)\frac{\partial\ell}{\partial\nu} +\frac{1}{2}\frac{\partial\alpha}{\partial\nu}-\frac{1}{2}\frac{\partial a}{\partial\nu}\\
&=-\Big(b(b-\sigma^2\ell)\Big(\frac{a^2}{\alpha^3\sigma}-\frac{1}{\alpha\sigma}\Big)+\frac{\sigma}{2}\Big(\frac{a}{\alpha}-1\Big)\Big)\frac{\overline{\rho}}{(1-\nu)^2}-\Big(\frac{ab(b-\sigma^2\ell)}{\alpha^2\sigma^2}+\frac{1}{2}\Big) \Big(\frac{1-\nu+2\rho'\rho\nu}{2(1-\nu)^3}\Big)
\end{aligned}
\end{equation}
which is obtained by direct calculation.

\begin{remark}
	The optimal expected utility has explicit solutions when the market price of risk is an affine model such as the OU process or the CIR model. 
	However, $\lim_{T\to\infty}\frac{\partial}{\partial \nu}\ln p_T$ is extremely complicated and  challenging to calculate from the explicit solutions.
	In this study, we adopt the Hansen--Scheinkman decomposition and Malliavin calculus so that it is much simpler to calculate the long-term sensitivity using our approach rather than using the explicit solutions. 
\end{remark}


\subsection{CIR model}
\label{sec:CIR}

Under the physical measure ${\bf P},$ let the state process $X$ follow the CIR model
\begin{equation}
\label{eqn:max_CIR_dX}
dX_t= (b-kX_t)\,dt+ \sigma\sqrt{X_t}\,dZ_t\,,\;X_0=\xi 
\end{equation} 
for  $k,\sigma,\xi>0,$ $b>\sigma^2/2$
and assume  that  $\theta'\theta(\cdot)=\cdot$ and $\rho'\theta(\cdot)=\overline{\rho}\sqrt{\,\cdot\,}$ for some constant $\overline{\rho}\in\mathbb{R}$ (this holds, for example, $d=1$  and the market price of risk is the square root of  the state process). 
Then, the $\mathbb{P}$-dynamics of $X$  given in Eq.\eqref{eqn:X_P} is 
\begin{equation} 
dX_t= (b-aX_t)\,dt+ \sigma\sqrt{X_t}\,dB_t\,,\;X_0=\xi 
\end{equation}
and
$p_T=\mathbb{E}^{{\mathbb{P}}}[e^{-q\int_0^T X_u\,du}]$,
where
$a=k-\frac{\nu\sigma\overline{\rho}}{1-\nu}$ and 
$q=-\frac{\nu(1-\nu+\nu\rho'\rho)}{2(1-\nu)^2}.$

We now apply the Hansen--Scheinkman decomposition.
Eq.\eqref{eqn:CIR_decompo} gives
\begin{equation} 
\begin{aligned}
p_T=f(T,\xi)e^{-\eta\xi}e^{-\lambda T}, 
\end{aligned}
\end{equation}
where
$\alpha:=\sqrt{a^2+2q\sigma^2},$ $\eta:=\frac{\alpha-a}{\sigma^2},$ $\lambda:=b\eta$,
and $f(t,x):=\mathbb{E}_x^{\hat{\mathbb{P}}}[e^{\eta X_T}]$ 
is the remainder function.
The $\hat{\mathbb{P}}$-dynamics of $X$ is
$$dX_t= (b-\alpha X_t)\,dt+\sigma\sqrt{X_t}\,d\hat{B}_t\,.$$
It follows that  
\begin{equation}  
\frac{\partial}{\partial \nu}\ln p_T=-\xi\frac{\partial\eta}{\partial \nu}-T\frac{\partial\lambda}{\partial \nu}
+\frac{f_\nu(T,\xi)}{f(T,\xi)} \,.
\end{equation}

It suffices to investigate the term $f_\nu(T,\xi)$ since the other terms on the right-hand side of the above-mentioned equality are easy to estimate.
Since only the parameters $\eta$ and $\alpha$ depend on $\nu$ in the remainder function, using the chain rule, we know that
$$f_\nu(T,\xi)=f_\eta(T,\xi)\frac{\partial\eta}{\partial\nu}+f_\alpha(T,\xi)\frac{\partial\alpha}{\partial\nu}\,.$$
By Proposition \ref{prop:app_OU_quad_partial_nu}, 
the function $f(T,\xi)$ converges to a positive constant as $T\to\infty$, and two partial derivatives $f_\eta(T,\xi)$ and $f_\alpha(T,\xi)$ are bounded in $T.$ 
Therefore, 
$$\Big|\frac{1}{T}\ln p_T+\lambda\Big|\leq\frac{c}{T} \;\textnormal{ and }\;\Big|\frac{1}{T}\frac{\partial}{\partial \nu}\ln p_T+\frac{\partial\lambda}{\partial\nu}\Big|\leq\frac{c}{T}$$
for some positive constant $c.$
By using Eq.\eqref{eqn:-u_T}, we finally conclude that
$$\left|\frac{1}{T}\frac{\partial}{\partial \nu}\ln u_T-r-\frac{\rho'\rho\lambda}{(1-\nu+\rho'\rho\nu)^2}+\frac{1-\nu}{1-\nu+\nu \rho'\rho}\frac{\partial\lambda}{\partial\nu}\right|\leq\frac{c'}{T}$$
for some positive constant $c'$ and
$$\frac{\partial\lambda}{\partial\nu}
=\frac{b}{\sigma}\Big(\Big(\frac{a}{\alpha}-1\Big)\frac{\partial a}{\partial\nu}+\frac{\sigma^2}{\alpha}\frac{\partial q}{\partial \nu}\Big)
=\Big(1-\frac{a}{\alpha}\Big)\frac{b\overline{\rho}}{(1-\nu)^2}-\frac{b\sigma(1-\nu+2\rho'\rho\nu)}{2\alpha(1-\nu)^3}$$
which is obtained by direct calculation.

\subsection{$3/2$ model}

Under the physical measure ${\bf P},$ let the state process $X$ follow   the $3/2$ model
\begin{equation}\label{eqn:3/2}
dX_t= (b-kX_t)X_t\,dt+ \sigma X_t^{3/2}\,dZ_t\,,\;X_0=\xi
\end{equation} 
for $b,k,\sigma,\xi>0$ and assume  that  $\theta'\theta(\cdot)=\cdot$ and $\rho'\theta(\cdot)=\overline{\rho}\sqrt{\,\cdot\,}$ for some constant $\overline{\rho}\in\mathbb{R}$ (this holds, for example, $d=1$  and the market price of risk is the square root of  the state process).  
Then, the $\mathbb{P}$-dynamics of $X$  given in Eq.\eqref{eqn:X_P} is 
\begin{equation} 
dX_t= (b-aX_t)\,dt+ \sigma X_t^{3/2}\,dB_t\,,\;X_0=\xi 
\end{equation}
and
$p_T=\mathbb{E}^{{\mathbb{P}}}[e^{-q\int_0^T X_u\,du}]$,
where
$a=k-\frac{\nu\sigma\overline{\rho}}{1-\nu}$ and 
$q=-\frac{\nu(1-\nu+\nu\rho'\rho)}{2(1-\nu)^2}.$

We now apply the Hansen--Scheinkman decomposition.
Eq.\eqref{eqn:3/2_HS_decompo} gives
\begin{equation} 
\begin{aligned}
p_T=f(T,\xi)\xi^{-\eta}e^{-\lambda T},
\end{aligned}
\end{equation}
where 
$\eta:=\frac{\sqrt{(a+\sigma^2/2)^2+2q\sigma^2}-(a+\sigma^2/2)}{\sigma^2},$   $\lambda:=b\eta$,
and $f(t,x)=\mathbb{E}_x^{\hat{\mathbb{P}}}[X_t^{\eta}]$ 
is the remainder function.
The $\hat{\mathbb{P}}$-dynamics of $X$ is
\begin{equation}
\begin{aligned}
dX_t=(b-\alpha X_t)X_t\,dt+\sigma {X_t}^{3/2}\,d\hat{B}_t
\end{aligned}
\end{equation}
for $\alpha:=a+\sigma^2\eta.$
It follows that  
\begin{equation}  
\frac{\partial}{\partial \nu}\ln p_T= -\ln\xi\frac{\partial\eta}{\partial \nu}-T\frac{\partial\lambda}{\partial \nu}
+\frac{f_\nu(T,\xi)}{f(T,\xi)} \,.
\end{equation}

It suffices to investigate the term $f_\nu(T,\xi)$ since the other terms on the right-hand side of the above-mentioned equality are easy to estimate.
Since only the parameters $\eta$ and $\alpha$ depend on $\nu$ in the remainder function, using the chain rule, we know that
$$f_\nu(T,\xi)=f_\eta(T,\xi)\frac{\partial\eta}{\partial\nu}+f_\alpha(T,\xi)\frac{\partial\alpha}{\partial\nu}\,.$$
The function $f(T,\xi)$ converges to a positive constant as $T\to\infty,$ and two partial derivatives $f_\eta(T,\xi)$ and $f_\alpha(T,\xi)$ are bounded in $T$
by Proposition \ref{prop:app_3/2}.
Therefore,
$$\Big|\frac{1}{T}\ln p_T+\lambda\Big|\leq\frac{c}{T} \;\textnormal{ and }\;\Big|\frac{1}{T}\frac{\partial}{\partial \nu}\ln p_T+\frac{\partial\lambda}{\partial\nu}\Big|\leq\frac{c}{T}$$
for some positive constant $c.$
By using Eq.\eqref{eqn:-u_T}, we finally conclude that
$$\left|\frac{1}{T}\frac{\partial}{\partial \nu}\ln u_T-r-\frac{\rho'\rho\lambda}{(1-\nu+\rho'\rho\nu)^2}+\frac{1-\nu}{1-\nu+\nu \rho'\rho}\frac{\partial\lambda}{\partial\nu}\right|\leq\frac{c'}{T}$$
for some positive constant $c'$ and
\begin{equation}
\begin{aligned}
\frac{\partial\lambda}{\partial\nu}
&=\frac{b}{\sqrt{(a+\sigma^2/2)^2+2q\sigma^2}}\Big(-\eta\frac{\partial a}{\partial\nu}+\frac{\partial q}{\partial \nu}\Big)\\
&=\frac{b}{(1-\nu)^2\sqrt{(a+\sigma^2/2)^2+2q\sigma^2}}\Big(\eta\sigma\overline{\rho}-\frac{1-\nu+2\rho'\rho\nu}{2(1-\nu)}\Big)\,. 
\end{aligned}
\end{equation}

\subsection{Quadratic drift model}

Under the physical measure ${\bf P},$ let the state process $X$ follow a quadratic drift model
\begin{equation}\label{eqn:garch}
dX_t= (b-kX_t^2)\,dt+ \sigma X_t\,dZ_t\,,\;X_0=\xi\,
\end{equation} 
for $b,k,\xi>0,$  $\sigma\neq0$
and assume  that  $\theta'\theta(\cdot)=\cdot\,^2$ and $\rho'\theta(\cdot)=\overline{\rho}\,\cdot$ for some constant $\overline{\rho}\in\mathbb{R}$ (this holds, for example, $d=1$  and the state process is the market price of risk). 
There is a unique strong solution to the above SDE  \cite[Proposition 2.1]{carr2019lognormal}.
Then, under the measure $\mathbb{P},$  the process $X$ satisfies
\begin{equation} 
dX_t= (b-aX_t^2)\,dt+ \sigma X_t\,dB_t\,,\;X_0=\xi 
\end{equation}
and
$p_T=\mathbb{E}^{{\mathbb{P}}}[e^{-q\int_0^T X_u^2\,du}]$,
where
$a=k-\frac{\nu\sigma\overline{\rho}}{1-\nu}$ and 
$q=-\frac{\nu(1-\nu+\nu\rho'\rho)}{2(1-\nu)^2}.$

We now apply the Hansen--Scheinkman decomposition.
Eq.\eqref{eqn:quad_HS_decompo} gives
\begin{equation} 
\begin{aligned}
p_T=f(T,\xi)e^{-\eta\xi}e^{-\lambda  T},
\end{aligned}
\end{equation}
where 
$\alpha:=\sqrt{a^2+2q\sigma^2},$ $\eta:=\frac{\alpha-a}{\sigma^2},$ $\lambda:=b\eta$,
and $f(T,\xi):=\mathbb{E}_\xi^{\hat{\mathbb{P}}}[e^{\eta X_T}]$ 
is the remainder function.
The $\hat{\mathbb{P}}$-dynamics of $X$ is
\begin{equation}
\begin{aligned}
dX_t=(b-\alpha X_t^2)\,dt+\sigma X_t\,d\hat{B}_t\,.
\end{aligned}
\end{equation} 
It follows that  
\begin{equation}  
\frac{\partial}{\partial \nu}\ln p_T= -\ln\xi\frac{\partial\eta}{\partial \nu}-T\frac{\partial\lambda}{\partial \nu}
+\frac{f_\nu(T,\xi)}{f(T,\xi)} \,.
\end{equation}

It suffices to investigate the term $f_\nu(T,\xi)$ since the other terms on the right-hand side of the above-mentioned equality are easy to estimate.
Since only the parameters $\eta$ and $\alpha$ depend on $\nu$ in the remainder function, using the chain rule, we know that
$$f_\nu(T,\xi)=f_\eta(T,\xi)\frac{\partial\eta}{\partial\nu}+f_\alpha(T,\xi)\frac{\partial\alpha}{\partial\nu}\,.$$
The function $f(T,\xi)$ converges to a positive constant as $T\to\infty,$ and two partial derivatives $f_\eta(T,\xi)$ and $f_\alpha(T,\xi)$ are bounded in $T$
by Proposition \ref{prop:quad_model_remain}.
Therefore,
$$\Big|\frac{1}{T}\ln p_T+\lambda\Big|\leq\frac{c}{T} \;\textnormal{ and }\;\Big|\frac{1}{T}\frac{\partial}{\partial \nu}\ln p_T+\frac{\partial\lambda}{\partial\nu}\Big|\leq\frac{c}{T}$$
for some positive constant $c.$
By using Eq.\eqref{eqn:-u_T}, we finally conclude that
$$\left|\frac{1}{T}\frac{\partial}{\partial \nu}\ln u_T-r-\frac{\rho'\rho\lambda}{(1-\nu+\rho'\rho\nu)^2}+\frac{1-\nu}{1-\nu+\nu \rho'\rho}\frac{\partial\lambda}{\partial\nu}\right|\leq\frac{c'}{T}$$
for some positive constant $c'$ and
\begin{equation}
\begin{aligned}
\frac{\partial\lambda}{\partial\nu}
&=\frac{b}{\alpha}\Big(-\eta\frac{\partial a}{\partial\nu}+\frac{\partial q}{\partial \nu}\Big)\\
&=\frac{b}{\alpha(1-\nu)^2}\Big(\eta\sigma\overline{\rho}-\frac{1-\nu+2\rho'\rho\nu}{2(1-\nu)}\Big)\,. 
\end{aligned}
\end{equation}

Figure \ref{fig} displays comparative analysis between the four models.
Two graphs show the partial derivative $\frac{\partial\lambda}{\partial\nu}$  as a function of $\nu$ and $\mu,$ respectively.
The model parameters are given as 
$b=0.16,$ $\sigma=0.8,$ $k=2$ (for the first graph), $\nu=-2$ (for the second graph) and  $\rho=-0.5,$
\begin{figure}\begin{center} \includegraphics[trim={0cm 0cm 0cm 0cm},clip,scale = 0.3]{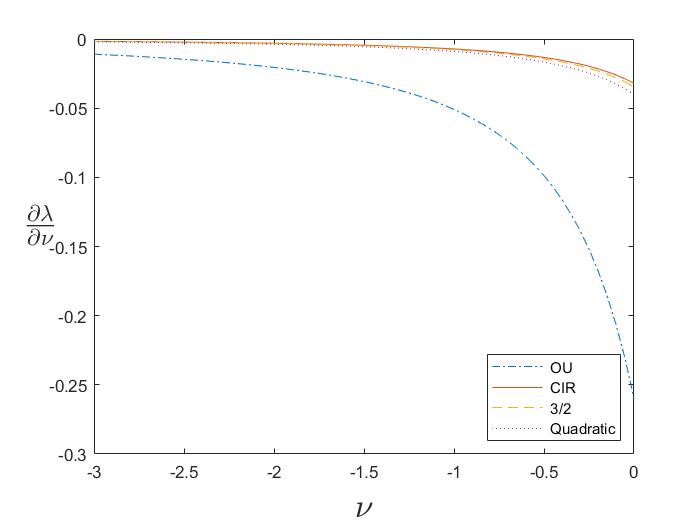}
\includegraphics[trim={0cm 0cm 0cm 0cm},clip,scale = 0.3]{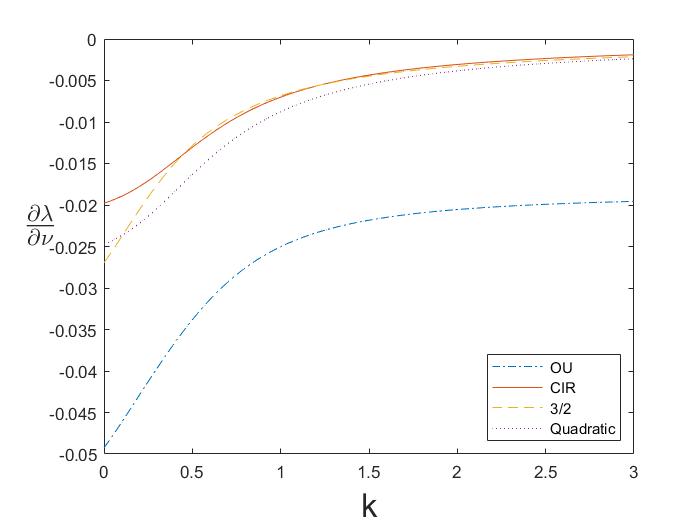} 
\caption{Comparative analysis between the different models.}\label{fig}\end{center} \end{figure}

\section{Conclusion}
\label{sec:con}

This study investigated the influence of risk tolerance on the expected utility in the long run.
We focused on the power utility function of the form
$$U(x)=-x^{\nu}$$
for $\nu<0$, where the parameter $\nu$ represents how an investor measures the degree of his/her risk tolerance.
We considered utility-maximizing portfolios and demonstrated the influence of small changes in the parameter $\nu$ on the expected utility of the portfolios in the long run.

The main methodology for this analysis involved a combination of the Hansen--Scheinkman decomposition and the Malliavin calculus technique. First, we transformed the expected utility into the expectation form 
$$p_T=\mathbb{E}_\xi[e^{-\int_0^Tq(X_s)\,ds}h(X_T)]$$
for some Markov diffusion process $X=(X_t)_{0\le t\le T}$ with $X_0=\xi$ and some measurable functions $q$ and $h.$
Using the Hansen--Scheinkman decomposition, 
the expectation $p_T$ was written as
$$p_T=\phi(\xi)e^{-\lambda T}
f(T,\xi)$$ 
for a real number $\lambda,$ a positive function $\phi$, and a measurable function $f,$ which depend on the parameter $\nu.$

The influence of risk tolerance on the long-term expected utility was obtained from the above-mentioned Hansen--Scheinkman decomposition.
Under the condition that $\frac{1}{T}\frac{\partial}{\partial\nu}\ln f(T,\xi)$
is bounded in $T$ on $[0,\infty),$ we showed that
$$\left|\frac{\partial}{\partial\nu}\ln p_T+\frac{\partial\lambda}{\partial\nu}\right|\leq \frac{c}{T}$$
for some positive constant $c.$
The influence of risk tolerance
is asymptotically equal to the partial derivative of $-\lambda$ with respect to $\nu,$ which is the main conclusion of this study.
To verify that $\frac{1}{T}\frac{\partial}{\partial\nu}\ln f(T,\xi)$
is bounded in $T$ on $[0,\infty),$ the Malliavin calculus method was used under 
several market models, namely the Ornstein--Uhlenbeck process, the CIR process, the $3/2$ model, and a quadratic drift model. \newline

\noindent\textbf{Acknowledgments.}\\ 
Hyungbin Park was supported by the National Research Foundation of Korea (NRF) grants funded by the Ministry of Science and ICT (No. 2017R1A5A1015626, No. 2018R1C1B5085491 and No. 2021R1C1C1011675) 
and the Ministry of Education   (No. 2019R1A6A1A10073437) through the Basic Science Research Program.

\appendices

\section{OU process with quadratic killing rate}

Assume that a process $X$ satisfies
\begin{equation} \label{eqn:OU_under_P_II}
dX_t=(b-aX_t)\,dt+\sigma \,dB_t\,,\;X_0=\xi,
\end{equation}
where $b\in\mathbb{R},$ $a,\sigma>0.$  
For $q>-\frac{a^2}{2\sigma^2},$  consider the  expectation 
$$p_T:=\mathbb{E}_\xi^{{\mathbb{P}}}[e^{-q\int_0^T X_u^2\,du}]\,.$$

The corresponding operator is
$$\mathcal{P}_Th(x)=\mathbb{E}_x^{{\mathbb{P}}}[e^{-q\int_0^T X_u^2\,du}h(X_T)]\,,$$
and it can be shown that 
$$(\lambda,\phi(x)):=\Big(-\frac{1}{2}\sigma^2\ell^2+b\ell+\frac{1}{2}(\alpha-a),e^{-\frac{1}{2}\eta x^2-\ell x}\Big)$$
is an eigenpair, 
where
$$\alpha:=\sqrt{a^2+2q\sigma^2}\,,\;\eta:=\frac{\alpha-a}{\sigma^2}\,,\;\ell:=\frac{b\eta}{\alpha}\,.$$ 
The eigen-measure $\hat{\mathbb{P}}$ is defined on   $\mathcal{F}_T$ as
\begin{equation}
\frac{d\hat{\mathbb{P}}}{d\mathbb{P}}
=e^{-\frac{1}{2}\sigma^2\int_0^T(\eta X_s+\ell)^2\,ds -\sigma\int_0^T (\eta X_s+\ell) \,dB_s}\,.
\end{equation} 
The expectation $p_T$ can be expressed as 
\begin{equation} \label{eqn:OU_quaratic_HS}
\begin{aligned}
p_T&=\mathbb{E}_\xi^\mathbb{P}[e^{-q\int_0^TX_s^2\,ds}]\\
&=\mathbb{E}_\xi^{\hat{\mathbb{P}}}[e^{\frac{1}{2}\eta X_T^2+\ell X_T}] e^{-\frac{1}{2}\eta \xi^2-\ell \xi}e^{-\lambda T}=f(T,\xi) e^{-\frac{1}{2}\eta \xi^2-\ell \xi}e^{-\lambda T}, 
\end{aligned}
\end{equation}
where
\begin{equation} 
f(t,x)=\mathbb{E}_x^{\hat{\mathbb{P}}}[e^{\frac{1}{2}\eta X_T^2+\ell X_T}]
\,,\; 0\le t\le T\,,\; x\in\mathbb{R}   
\end{equation}
is the remainder function.
The process
$$\hat{B}_t=B_t+\sigma\int_0^t (\eta X_s+\ell) \,ds\,,\;0\le t\le T$$
is a $\hat{\mathbb{P}}$-Brownian motion
and    $X$ follows
$$dX_t=(\delta-\alpha  X_t)\,dt+\sigma \,d\hat{B}_t\,$$
for $\delta:=\frac{b}{\alpha}.$

We study the large-time asymptotic behavior of the sensitivity of the remainder function $f.$

\begin{prop} \label{prop:OU_quad_partial_nu}
	Suppose that $X$ follows 
	\begin{equation}
	\label{eqn:app_OU_quad}
	dX_t=(\delta-\alpha  X_t)\,dt+\sigma \,d\hat{B}_t\,,\;X_0=\xi
	\end{equation} 
	for  $\delta,\xi\in\mathbb{R}$ and $\alpha,\sigma>0.$ Define 
	$$f(T,\xi)=\mathbb{E}_\xi^{\hat{\mathbb{P}}}[e^{\frac{1}{2}\eta X_T^2+\ell X_T}]
	\,,\; T\ge0 \,,$$
	for $\eta<\frac{2\alpha}{\sigma^2}$ and $\ell\in\mathbb{R}.$
	Then,
	\begin{equation}\label{eqn:quand_OU_conv}
	f(T,\xi)\to \int e^{\frac{1}{2}\eta x^2+\ell x}\,d\pi(x)
	\end{equation}
	as $T\to\infty$, where $\pi$ is the invariant distribution of $X.$ The partial derivatives $f_\eta(T,\xi),$ $f_\ell(T,\xi),$
	$f_\alpha(T,\xi),$ and $f_\delta(T,\xi)$ are bounded in $T$ on $[0,\infty).$
\end{prop}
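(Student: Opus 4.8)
The plan is to exploit the fact that the dynamics \eqref{eqn:app_OU_quad} is a linear SDE with additive noise, so that the law of $X_T$ under $\hat{\mathbb{P}}$ is Gaussian and $f(T,\xi)$ admits a closed form. Conditionally on $X_0=\xi,$ one has $X_T\sim N(m(T),v(T))$ with
\begin{equation}
m(T)=\frac{\delta}{\alpha}+\Big(\xi-\frac{\delta}{\alpha}\Big)e^{-\alpha T}\,,\qquad v(T)=\frac{\sigma^2}{2\alpha}\big(1-e^{-2\alpha T}\big)\,,
\end{equation}
and the invariant distribution is $\pi=N\big(\tfrac{\delta}{\alpha},\tfrac{\sigma^2}{2\alpha}\big).$ First I would compute $f(T,\xi)$ by completing the square in the Gaussian integral defining it. The integral converges precisely when $1-\eta v(T)>0;$ because $v(T)\in[0,\tfrac{\sigma^2}{2\alpha})$ and $\eta<\tfrac{2\alpha}{\sigma^2},$ one has $\eta v(T)<1$ for every $T\ge0,$ while $1-\eta v(T)\to 1-\eta\tfrac{\sigma^2}{2\alpha}>0$ as $T\to\infty,$ so $1-\eta v(T)$ is bounded away from zero uniformly in $T.$ Completing the square gives
\begin{equation}\label{eqn:closed_form_f}
f(T,\xi)=\big(1-\eta v(T)\big)^{-1/2}\exp\!\Big(\frac{\eta\,m(T)^2+2\ell\,m(T)+\ell^2 v(T)}{2\big(1-\eta v(T)\big)}\Big)\,.
\end{equation}

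Next I would read off the convergence \eqref{eqn:quand_OU_conv}. As $T\to\infty$ we have $m(T)\to\tfrac{\delta}{\alpha}$ and $v(T)\to\tfrac{\sigma^2}{2\alpha},$ and the right-hand side of \eqref{eqn:closed_form_f} is continuous in $(m,v)$ on the region $\{1-\eta v>0\};$ hence $f(T,\xi)$ converges to the value of \eqref{eqn:closed_form_f} at $m=\tfrac{\delta}{\alpha},\,v=\tfrac{\sigma^2}{2\alpha}.$ Applying the same Gaussian-integral formula to the stationary law $\pi$ identifies this limit with $\int e^{\frac12\eta x^2+\ell x}\,d\pi(x),$ which is finite by the same bound $1-\eta\tfrac{\sigma^2}{2\alpha}>0;$ this is exactly \eqref{eqn:quand_OU_conv}.

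For the four partial derivatives I would differentiate the explicit formula \eqref{eqn:closed_form_f} directly, which is far simpler here than invoking the Malliavin representation thanks to the Gaussian structure. Writing $f(T,\xi)=F\big(m(T),v(T);\eta,\ell\big)$ with $F$ smooth on $\{1-\eta v>0\},$ the parameters $\eta,\ell$ enter $F$ directly while $\alpha,\delta$ enter only through $m(T)$ and $v(T).$ By the chain rule each partial derivative is a finite sum of products of partial derivatives of $F$ — which remain bounded because $m(T),v(T)$ range over bounded sets and $1-\eta v(T)$ is bounded away from zero — with the elementary parameter-derivatives of $m(T)$ and $v(T).$ Thus $f_\eta$ and $f_\ell$ are immediately bounded. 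The hard part will be $f_\alpha,$ since differentiating $e^{-\alpha T}$ and $e^{-2\alpha T}$ produces factors of $T,$ for instance
\begin{equation}
\frac{\partial v}{\partial\alpha}=-\frac{\sigma^2}{2\alpha^2}\big(1-e^{-2\alpha T}\big)+\frac{\sigma^2 T}{\alpha}e^{-2\alpha T}\,,
\end{equation}
and analogously for $\partial_\alpha m(T).$ These are tamed by the elementary bound $\sup_{T\ge0}T e^{-c\alpha T}<\infty$ for $c>0,$ which keeps $\partial_\alpha m(T)$ and $\partial_\alpha v(T)$ bounded on $[0,\infty);$ the derivatives $\partial_\delta m(T),\partial_\delta v(T)$ are manifestly bounded. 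Assembling these estimates shows that $f_\eta,f_\ell,f_\alpha,f_\delta$ are all bounded in $T$ on $[0,\infty),$ as claimed.
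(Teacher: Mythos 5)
Your proof is correct, but it takes a genuinely different route from the paper. You exploit the fact that the OU dynamics has additive noise, so $X_T$ is exactly Gaussian with mean $m(T)$ and variance $v(T)$, and you compute $f(T,\xi)$ in closed form by completing the square; the convergence and the boundedness of all four partial derivatives then follow from the chain rule, the uniform bound $1-\eta v(T)\ge\min\bigl(1,\,1-\eta\tfrac{\sigma^2}{2\alpha}\bigr)>0$, and the elementary estimate $\sup_{T\ge 0}Te^{-c\alpha T}<\infty$ (your Gaussian moment-generating formula and the identification of the limit with the $\pi$-integral both check out). The paper instead handles only $f_\eta$ and $f_\ell$ by differentiating under the expectation, and treats $f_\alpha$ and $f_\delta$ with the Malliavin machinery: the probabilistic representation $f_\alpha(T,\xi)=-\tfrac{1}{\sigma}\mathbb{E}_\xi^{\hat{\mathbb{P}}}\bigl[H(X_T)\int_0^T X_s\,d\hat{B}_s\bigr]$, the integration-by-parts formula, the explicit Malliavin derivative $D_sX_T=\sigma e^{-\alpha(T-s)}$, and a H\"older estimate with exponents chosen so that $\tfrac{1}{2}\eta u<\tfrac{\alpha}{\sigma^2}$. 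What your approach buys is economy and transparency: it avoids Propositions \ref{prop:Malliavin_chain}, \ref{prop:IBP}, and \ref{prop:rho} entirely and yields explicit formulas from which one could even extract the precise large-$T$ asymptotics of the sensitivities. What the paper's approach buys is uniformity of method: the Malliavin representation is exactly the template (\textbf{Step III}, Eq.\eqref{eqn:main_Malliavin}) reused for the CIR, $3/2$, and quadratic drift models, where no Gaussian closed form exists, so proving the OU case this way serves as the prototype for the arguments that cannot be made elementary.
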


\begin{proof}  
	Observe that
	the density function of $X_T$ with $X_0=\xi$ is 
	$$z(x;T):=\frac{1}{\Sigma_T\sqrt{2\pi }}e^{-\frac{1}{2}\frac{(x-m_T)^2}{\Sigma_T^2}}\,,$$
	where $m_T:=\xi e^{-\alpha T}+\frac{\delta}{\alpha}(1-e^{-\alpha T})$ is the mean and $\Sigma_T^2:=\frac{\sigma^2}{2\alpha}(1-e^{-2\alpha T})$ is the variance.
	Then, it is clear that
	$$f(T,\xi)=\mathbb{E}_\xi^{\hat{\mathbb{P}}}[e^{\frac{1}{2}\eta X_T^2+\ell X_T}]=\int_{\mathbb{R}}e^{\frac{1}{2}\eta x^2+\ell  x}z(x;T)\,dx\to \int_{\mathbb{R}}e^{\frac{1}{2}\eta x^2+\ell x}\,d\pi(x)$$
	as $T\to\infty$, 
	where
	$$d\pi(x):=\frac{1}{\sqrt{\pi  \sigma^2/\alpha}}e^{-\frac{(x-\delta/\alpha)^2}{\sigma^2/\alpha}}\,dx\,.$$
	This proves Eq.\eqref{eqn:quand_OU_conv}.
	
	We now show that $f_\eta(T,\xi)$ is bounded in $T$ on $[0,\infty).$ This is direct from 
	\begin{equation}\label{eqn:f_eta_OU_2}
	\begin{aligned}
	f_\eta(T,\xi)
	&=\frac{\partial}{\partial\eta}\mathbb{E}_\xi^{\hat{\mathbb{P}}}[e^{\frac{1}{2}\eta X_T^2+\ell X_T}]\\
	&=\mathbb{E}_\xi^{\hat{\mathbb{P}}}\Big[\frac{\partial}{\partial\eta}\Big(e^{\frac{1}{2}\eta X_T^2+\ell X_T}\Big)\Big]\\
	&=\frac{1}{2}\mathbb{E}_\xi^{\hat{\mathbb{P}}}[X_T^2e^{\frac{1}{2}\eta X_T^2+\ell X_T}]\to \frac{1}{2}\int_{\mathbb{R}} y^2e^{\frac{1}{2}\eta y^2+\ell  y}z(y;\infty)\,dy 
	\end{aligned}
	\end{equation}
	since the limit is a finite number.
	Using the same method, we can show that  
	$f_\ell(T,\xi)$ is bounded in $T$ on $[0,\infty).$

	We show that $f_\alpha(T,\xi)$ is bounded in $T$ on $[0,\infty).$
	Define $H(x)=e^{\frac{1}{2}\eta x^2+\ell x}$ for notational simplicity. 
	By Propositions \ref{prop:Malliavin_chain}, \ref{prop:IBP}, and \ref{prop:rho},  we have
	\begin{equation}\label{eqn:quad_OU_malliavin}
	\begin{aligned} 
	f_\alpha(T,\xi)=\frac{\partial}{\partial\alpha}\mathbb{E}_\xi^{\hat{\mathbb{P}}}[H(X_T)]
	&=-\frac{1}{\sigma}\mathbb{E}_\xi^{\hat{\mathbb{P}}}\Big[H(X_T)\int_0^TX_s\,d\hat{B}_s\Big]\\
	&=-\frac{1}{\sigma}\mathbb{E}_\xi^{\hat{\mathbb{P}}}\Big[\int_0^TD_s(H(X_T))X_s\,ds\Big]\\
	&=-\frac{1}{\sigma}\mathbb{E}_\xi^{\hat{\mathbb{P}}}\Big[\int_0^TH'(X_T)(D_sX_T)X_s\,ds\Big]\,.
	\end{aligned}
	\end{equation}
	Considering that the Malliavin derivative of $X_T$ is
	$D_sX_T=\sigma e^{-a(T-s)}$ for $s\le T,$
	we have  
	\begin{equation} 
	\begin{aligned} 
	|f_\delta(T,\xi)|
	&\leq e^{-aT}\mathbb{E}_\xi^{\hat{\mathbb{P}}}\Big[\int_0^T |H'(X_T)|X_se^{as}\,ds\Big]=e^{-aT}\int_0^T \mathbb{E}_\xi^{\hat{\mathbb{P}}}[|H'(X_T)|X_s]e^{as}\,ds\,.
	\end{aligned}
	\end{equation}
	Choose $u>1$ such that $\frac{1}{2}\eta u<\frac{\alpha}{\sigma^2}$ and $v>1$ such that $1/u+1/v=1.$ Then, 
	$$|f_\delta(T,\xi)|\leq e^{-aT}\int_0^T \mathbb{E}_\xi^{\hat{\mathbb{P}}}[|H'(X_T)|X_s]e^{as}\,ds
	\leq e^{-aT}\int_0^T \mathbb{E}_\xi^{\hat{\mathbb{P}}}[|H'(X_T)|^u]^{1/u}\mathbb{E}_\xi^{\hat{\mathbb{P}}}[X_s^v]^{1/v}e^{as}\,ds\,.$$
	Using $H'(x)=e^{\frac{1}{2}\eta x^2+\ell x}(\eta x+\ell)$ and the density function of $X_T,$
	it is easy to check that   $\mathbb{E}_\xi^{\hat{\mathbb{P}}}[|H'(X_T)|^u]$ 
	and $\mathbb{E}_\xi^{\hat{\mathbb{P}}}[X_s^v]$
	are bounded in $T$ and $s,$ respectively.
	This gives the desired result.
	Using the same method, we can show that  
	$f_\delta(T,\xi)$ is bounded in $T$ on $[0,\infty).$
\end{proof}

\section{CIR model}
\label{app:CIR}

Let $X$ be the  CIR model given as 
\begin{equation} \label{eqn:CIR_under_P}
dX_t=(b-aX_t)\,dt+\sigma \sqrt{X_t}\,dB_t\,,\;X_0=\xi,
\end{equation}
where $a,\sigma,\xi>0$ and $2b\ge \sigma^2.$
For 
\begin{equation}
\label{eqn:CIR_q}
q> -\frac{a^2}{2\sigma^2}\,,
\end{equation} 
consider the  expectation 
$$p_T:=\mathbb{E}_\xi^{{\mathbb{P}}}[e^{-q\int_0^T X_u\,du}]\,.$$ 
The corresponding  operator is
$$\mathcal{P}_Th(x)=\mathbb{E}_x^{{\mathbb{P}}}[e^{-q\int_0^T X_u\,du}h(X_T)]\,.$$
Following \cite{qin2016positive}, we know that 
$$(\lambda,\phi(x)):=(b\eta, e^{-\eta x})$$
is an eigenpair of this operator,
where  
$$\alpha:=\sqrt{a^2+2q\sigma^2}\,,\;
\eta:=\frac{\alpha-a}{\sigma^2}\,.$$ 
The eigen-measure $\hat{\mathbb{P}}$ is defined on   $\mathcal{F}_T$ as
\begin{equation}
\frac{d\hat{\mathbb{P}}}{d\mathbb{P}}
=e^{-\frac{1}{2}\sigma^2\eta^2\int_0^T X_s\,ds-\sigma\eta \int_0^T\sqrt{X_s}\,dB_s}.
\end{equation} 
It is easy to check that a local martingale $(e^{-\frac{1}{2}\sigma^2\eta^2\int_0^t X_s\,ds-\sigma\eta \int_0^t\sqrt{X_s}\,dB_s})_{0\le t\le T}$ is a martingale.
The expectation $p_T$ can be expressed as 
\begin{equation}\label{eqn:CIR_decompo}
\begin{aligned}
p_T&=\mathbb{E}_\xi^\mathbb{P}[e^{-q\int_0^TX_s\,ds}]
=\mathbb{E}_\xi^{\hat{\mathbb{P}}}[e^{\eta X_T}] e^{-\eta\xi}e^{-\lambda T}=f(T,\xi)e^{-\eta\xi}e^{-\lambda T}, 
\end{aligned}
\end{equation}
where \begin{equation}\label{eqn:CIR_reminder}
f(t,x)=\mathbb{E}_x^{\hat{\mathbb{P}}}[e^{\eta X_t}]
\,,\; 0\le t\le T\,,\; x>0   
\end{equation}
is the remainder function.
Note that 
the expectation  $f(t,x)<\infty$ by Lemma \ref{lem:CIR_MMG}
because  $\eta=\frac{\alpha-a}{\sigma^2}<\frac{2\alpha}{\sigma^2}.$
The process
$$\hat{B}_t=B_t+\sigma\eta\int_0^t\sqrt{X_s}\,ds\,,\;0\le t\le T$$
is a $\hat{\mathbb{P}}$-Brownian motion
and    $X$ follows
$$dX_t= (b-\alpha X_t)\,dt+\sigma\sqrt{X_t}\,d\hat{B}_t\,.$$

We study the large-time asymptotic behavior of the
remainder function $f$ and its sensitivity  with respect to the parameters $\alpha$ and $\eta.$

\begin{prop} \label{prop:app_OU_quad_partial_nu}
	Suppose that $X$ follows 
	\begin{equation}
	\label{eqn:app_CIR_quad}
	dX_t= (b-\alpha X_t)\,dt+\sigma\sqrt{X_t}\,d\hat{B}_t\,,\;X_0=\xi
	\end{equation} 
	for  $\alpha,\sigma,\xi>0$ and $2b\ge \sigma^2.$ Define 
	$$f(T,\xi)=\mathbb{E}_\xi^{\hat{\mathbb{P}}}[e^{\eta X_T}]
	\,,\; T\ge0 \,,$$
	for  $\eta<\frac{2\alpha}{\sigma^2}.$ 
	Then, 
	\begin{equation}\label{eqn:quand_CIR_conv}
	f(T,\xi)\to \int e^{\eta x}\,d\pi(x)
	\end{equation}
	as $T\to\infty$, where $\pi$ is the invariant distribution of $X.$ The partial derivatives $f_\eta(T,\xi)$ and
	$f_\alpha(T,\xi)$ are bounded in $T$ on $[0,\infty).$
\end{prop}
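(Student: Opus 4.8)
The plan is to exploit the fact that the square-root diffusion in Eq.\eqref{eqn:app_CIR_quad} admits an explicit moment generating function, which renders all three assertions accessible without invoking the first-variation machinery (which, as I explain below, does not apply verbatim here). The conditional law of $X_T$ given $X_0=\xi$ is a scaled noncentral chi-square, and a standard computation gives
\begin{equation}
f(T,\xi)=\Big(1-\tfrac{\eta\sigma^2}{2\alpha}(1-e^{-\alpha T})\Big)^{-2b/\sigma^2}\exp\Big(\frac{\eta\xi e^{-\alpha T}}{1-\tfrac{\eta\sigma^2}{2\alpha}(1-e^{-\alpha T})}\Big),
\end{equation}
valid precisely when $\eta<\frac{2\alpha}{\sigma^2(1-e^{-\alpha T})}$. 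Since $1-e^{-\alpha T}\in[0,1)$, the hypothesis $\eta<\frac{2\alpha}{\sigma^2}$ guarantees validity uniformly in $T$ and, moreover, that the base $g(T):=1-\tfrac{\eta\sigma^2}{2\alpha}(1-e^{-\alpha T})$ stays bounded away from $0$ on $[0,\infty)$.

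First I would establish the convergence Eq.\eqref{eqn:quand_CIR_conv}. The invariant distribution of Eq.\eqref{eqn:app_CIR_quad} is the Gamma law $\pi$ with density proportional to $x^{2b/\sigma^2-1}e^{-2\alpha x/\sigma^2}$, for which $\int e^{\eta x}\,d\pi(x)<\infty$ exactly when $\eta<\frac{2\alpha}{\sigma^2}$, matching the standing hypothesis. Letting $T\to\infty$ in the displayed formula, $e^{-\alpha T}\to 0$ and $g(T)\to 1-\tfrac{\eta\sigma^2}{2\alpha}$, so $f(T,\xi)\to(1-\tfrac{\eta\sigma^2}{2\alpha})^{-2b/\sigma^2}$, which is exactly $\int e^{\eta x}\,d\pi(x)$; this proves Eq.\eqref{eqn:quand_CIR_conv}.

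For the boundedness of $f_\eta$, the cleanest route is to differentiate under the expectation to obtain $f_\eta(T,\xi)=\mathbb{E}_\xi^{\hat{\mathbb{P}}}[X_T e^{\eta X_T}]$, with the interchange justified by dominating $x e^{\eta x}\le C_{\eta'}e^{\eta' x}$ for a fixed $\eta'\in(\eta,\tfrac{2\alpha}{\sigma^2})$ and invoking that $\mathbb{E}_\xi^{\hat{\mathbb{P}}}[e^{\eta' X_T}]$ is itself bounded in $T$ by the convergence just proved. This same quantity converges to $\int x e^{\eta x}\,d\pi(x)<\infty$, so $f_\eta$ is bounded on $[0,\infty)$; alternatively one simply differentiates the explicit formula in $\eta$.

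The main obstacle is $f_\alpha$, where $\alpha$ enters through the drift. In principle the Malliavin route used for the OU case applies: with perturbation direction $\overline{b}(x)=-x$, Proposition \ref{prop:rho} would give $f_\alpha(T,\xi)=-\tfrac{1}{\sigma}\mathbb{E}_\xi^{\hat{\mathbb{P}}}[e^{\eta X_T}\int_0^T\sqrt{X_s}\,d\hat{B}_s]$, and integration by parts would reduce this to an expectation against the first-variation process. The difficulty is twofold: the volatility $\sigma\sqrt{x}$ has an unbounded derivative, so Proposition \ref{prop:Malliavin_first_variation} does not apply as stated, and the exponential-integrability condition Eq.\eqref{eqn:expo_condi} for $\overline{b}(x)=-x$ requires controlling $\mathbb{E}_\xi^{\hat{\mathbb{P}}}[e^{\epsilon_0\int_0^T X_s^2\,ds}]$, which is genuinely delicate for the square-root diffusion. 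I would therefore bypass this by differentiating the explicit formula in $\alpha$ directly: every occurrence of $T$ enters only through $e^{-\alpha T}$ and through $g(T)$, the latter bounded away from $0$, so each of $\partial_\alpha g(T)$ and $\partial_\alpha(\eta\xi e^{-\alpha T}/g(T))$ is uniformly bounded in $T$; consequently $f_\alpha(T,\xi)=\partial_\alpha f(T,\xi)$ is a product and composition of uniformly bounded terms, hence bounded on $[0,\infty)$.
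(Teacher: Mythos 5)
Your proof is correct, but it takes a genuinely different route from the paper's, most visibly for $f_\alpha$. The paper proves the convergence \eqref{eqn:quand_CIR_conv} by appealing to the density of $X_T$ (details omitted), treats $f_\eta$ essentially as you do (dominating $xe^{\eta x}\le c_\gamma e^{\gamma x}$ and invoking the closed-form moment generating function, i.e.\ Lemma \ref{lem:CIR_MMG}), but for $f_\alpha$ it runs the Malliavin scheme of Section \ref{sec:main_argu}: Proposition \ref{prop:rho} yields the representation $f_\alpha(T,\xi)=-\frac{1}{\sigma}\mathbb{E}_\xi^{\hat{\mathbb{P}}}\big[e^{\eta X_T}\int_0^T\sqrt{X_s}\,d\hat{B}_s\big]$, integration by parts (Proposition \ref{prop:IBP}) turns this into an expression involving $D_sX_T$, the explicit Malliavin derivative of the CIR process is imported from \cite{alos2008malliavin} precisely because Proposition \ref{prop:Malliavin_first_variation} does not apply (the very point you raise), and the result is bounded via Cauchy--Schwarz and Lemma \ref{lem:CIR_MMG}. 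You bypass all of this by differentiating, directly in $\alpha$, the noncentral-chi-square transform --- which is nothing but Lemma \ref{lem:CIR_MMG} with $\gamma=\eta$ --- after observing that $1-\eta c(T)$ stays uniformly bounded away from zero under $\eta<2\alpha/\sigma^2$ and that $Te^{-\alpha T}$ is bounded on $[0,\infty)$; this is legitimate since the formula holds on an open neighborhood in $\alpha$, so the parameter derivative of the expectation equals the derivative of the formula. Your route is more elementary and, importantly, it sidesteps a real weakness in the paper's argument: hypothesis \eqref{eqn:expo_condi} of Proposition \ref{prop:rho} with $\overline{b}(x)=-x$ requires $\mathbb{E}_\xi[e^{\epsilon_0\int_0^T X_s^2\,ds}]<\infty$, which actually fails for the CIR process for every $\epsilon_0>0$ (by Cauchy--Schwarz, $\int_0^T X_s^2\,ds\ge T^{-1}(\int_0^T X_s\,ds)^2$, so finiteness of any exponential moment of $\int_0^T X_s^2\,ds$ would force all exponential moments of $\int_0^T X_s\,ds$ to be finite, contradicting the finite blow-up threshold of the CIR integrated Laplace transform); the paper applies Proposition \ref{prop:rho} without verifying this condition. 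What the paper's route buys in exchange is a template: the same Malliavin argument carries over to the $3/2$ and quadratic-drift models, where no closed-form transform of $X_T$ in all parameters is available, whereas your argument is tied to the affine structure of the CIR model.
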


\begin{proof}
	It is easy to prove Eq.\eqref{eqn:quand_CIR_conv} by considering the density function of $X;$ thus, we omit the proof.  
	Consider the partial derivative $f_\eta(T,\xi).$
	Choose any $\gamma$ with $\eta=\frac{\alpha-a}{\sigma^2}<\gamma<\frac{\alpha}{\sigma^2},$ then   there is a positive constant $c_\gamma$ such that
	$e^{\eta x}x\leq c_\gamma e^{\gamma x}$ for $x>0.$
	Observe that
	$$f_\eta(T,\xi)=\frac{\partial}{\partial\eta}\mathbb{E}_\xi^{\hat{\mathbb{P}}}[e^{\eta X_T}]= \mathbb{E}_\xi^{\hat{\mathbb{P}}}\Big[\frac{\partial}{\partial\eta}e^{\eta X_T}\Big]
	= \mathbb{E}_\xi^{\hat{\mathbb{P}}}[e^{\eta X_T}X_T]\,;$$
	thus,
	$$|f_\eta(T,\xi)|\leq \mathbb{E}_\xi^{\hat{\mathbb{P}}}[e^{\eta X_T}X_T]\leq c_\gamma\mathbb{E}_\xi^{\hat{\mathbb{P}}}[e^{\gamma X_T}] \,.$$
	By Lemma \ref{lem:CIR_MMG}, the expectation $\mathbb{E}_\xi^{\hat{\mathbb{P}}}[e^{\gamma X_T}]$
	is bounded in $T$ on $[0,\infty)$, and this gives the desired result.

	Now, we show that the partial derivative $f_\alpha(T,\xi)$
	is bounded in $T$ on $[0,\infty).$ 
	Define $H(x)=e^{\eta x}$ for notational simplicity.
	By Propositions \ref{prop:Malliavin_chain}, \ref{prop:IBP}, and \ref{prop:rho},  we have
	\begin{equation} 
	\begin{aligned}
	f_\alpha(T,\xi)=\frac{\partial}{\partial\alpha}\mathbb{E}_\xi^{\hat{\mathbb{P}}}[H(X_T)]
	&=-\frac{1}{\sigma}\mathbb{E}_\xi^{\hat{\mathbb{P}}}\Big[H(X_T)\int_0^T\sqrt{X_s}\,d\hat{B}_s\Big]\\
	&=-\frac{1}{\sigma}\mathbb{E}_\xi^{\hat{\mathbb{P}}}\Big[ \int_0^TD_s(H(X_T))\sqrt{X_s}\,ds\Big] \\
	&=-\frac{1}{\sigma}\mathbb{E}_\xi^{\hat{\mathbb{P}}}\Big[ \int_0^TH'(X_T) (D_sX_T)\sqrt{X_s}\,ds\Big]\\
	&=-\mathbb{E}_\xi^{\hat{\mathbb{P}}}\Big[ \int_0^TH'(X_T) e^{\int_s^T\big(-\frac{\alpha}{2}-(\frac{b}{2}-\frac{\sigma^2}{8})\frac{1}{X_u}\big)\,du}\sqrt{X_T}\sqrt{X_s}\,ds\Big]\,.
	\end{aligned}
	\end{equation} 
	The last equality is from 
	$$D_sX_T=\sigma e^{\int_s^T\big(-\frac{\alpha}{2}-(\frac{b}{2}-\frac{\sigma^2}{8})\frac{1}{X_u}\big)\,du}\sqrt{X_T}\,,$$
	which is obtained by $D_sX_T=\sigma\sqrt{X_s}\frac{Y_T}{Y_s}$
	for the first variation process $Y$ of $X.$ This can be obtained from the work of \cite{alos2008malliavin} (note that Proposition \ref{prop:Malliavin_first_variation} cannot be applied here because the coefficients in Eq.\eqref{eqn:app_CIR_quad} do not have bounded derivatives).
	Then,
	\begin{equation} 
	\begin{aligned}
	|f_\alpha(T,\xi)|
	&\le\mathbb{E}_\xi^{\hat{\mathbb{P}}}\Big[ \int_0^T|H'(X_T)| e^{\int_s^T\big(-\frac{\alpha}{2}-(\frac{b}{2}-\frac{\sigma^2}{8})\frac{1}{X_u}\big)\,du}\sqrt{X_T}\sqrt{X_s}\,ds\Big]\\
	&\le\mathbb{E}_\xi^{\hat{\mathbb{P}}}\Big[ \int_0^T|H'(X_T)| e^{-\frac{\alpha}{2}(T-s)}\sqrt{X_T}\sqrt{X_s}\,ds\Big] \quad\Big(\because \frac{b}{2}-\frac{\sigma^2}{8}>0\Big)\\
	&\le  \int_0^T\mathbb{E}_\xi^{\hat{\mathbb{P}}}[|H'(X_T)|\sqrt{X_T}\sqrt{X_s}]e^{-\frac{\alpha}{2}(T-s)}\,ds \\
	&\le  \int_0^T\mathbb{E}_\xi^{\hat{\mathbb{P}}}[(H'(X_T))^2X_T]^{1/2} \mathbb{E}_\xi^{\hat{\mathbb{P}}}[X_s]^{1/2}e^{-\frac{\alpha}{2}(T-s)}\,ds.  
	\end{aligned}
	\end{equation}
	Since $(H'(X_T))^2X_T= \eta^2 X_Te^{2\eta X_T}$ and $2\eta<\frac{2\alpha}{\sigma^2},$ by Lemma \ref{lem:CIR_MMG}, the expectation $\mathbb{E}_\xi^{\hat{\mathbb{P}}}[(H'(X_T))^2X_T]$ is bounded in $T$ on $[0,\infty).$
	It is clear that $\mathbb{E}_\xi^{\hat{\mathbb{P}}}[X_s]$
	is bounded in $s$ on $[0,\infty).$ 
	Thus,
	\begin{equation} 
	\begin{aligned}
	|f_\alpha(T,\xi)|
	\le  c \int_0^T e^{-\frac{\alpha}{2}(T-s)}\,ds  \leq \frac{2c}{\alpha} 
	\end{aligned}
	\end{equation}
	for some positive constant $c,$ 
	which gives the desired result.
\end{proof}

\begin{lemma}\label{lem:CIR_MMG} 
	Let $X$ be a solution of
	$$dX_t= (b-\alpha X_t)\,dt+\sigma\sqrt{X_t}\,d{B}_t\,,\;X_0=x\,,$$
	where $\alpha,\sigma,x>0$ and $2b>\sigma^2.$ 
	For $\gamma<2\alpha/\sigma^2$, we have
	$$\mathbb{E}_x^{\mathbb{P}}[e^{\gamma X_T}]=\Big(\frac{1}{1-\gamma c(T)}\Big)^{2b/\sigma^2}e^{\frac{\gamma }{1-\gamma c(T)}e^{-\alpha T}x}\,,$$
	where $c(T):=\frac{\sigma^2}{2\alpha} (1-e^{-\alpha T}).$ 
\end{lemma}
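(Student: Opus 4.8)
The plan is to exploit the affine structure of the CIR process: the moment generating function should be exponential-affine in the initial state, and I would derive it through a Feynman--Kac argument reduced to a Riccati ODE. First I would fix $\gamma<2\alpha/\sigma^2$ and define, for time-to-maturity $\tau=T-t$, the function $v(\tau,x)=\mathbb{E}_x^{\mathbb{P}}[e^{\gamma X_\tau}]$. By the Markov property and the Feynman--Kac formula, $v$ should satisfy
$$\partial_\tau v=\tfrac{1}{2}\sigma^2 x\,\partial_{xx}v+(b-\alpha x)\partial_x v\,,\qquad v(0,x)=e^{\gamma x}\,.$$
Motivated by the affine drift and the linear diffusion coefficient, I would substitute the ansatz $v(\tau,x)=\exp(A(\tau)+B(\tau)x)$ and match the constant and the $x$-coefficient to obtain the decoupled system
$$B'(\tau)=\tfrac{1}{2}\sigma^2 B(\tau)^2-\alpha B(\tau)\,,\qquad A'(\tau)=b\,B(\tau)\,,$$
with $A(0)=0$ and $B(0)=\gamma$.

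The Riccati equation for $B$ I would linearize through the substitution $w=1/B$, which turns it into the linear ODE $w'=\alpha w-\tfrac{1}{2}\sigma^2$ with $w(0)=1/\gamma$; solving gives $w(\tau)=(\tfrac{1}{\gamma}-\tfrac{\sigma^2}{2\alpha})e^{\alpha\tau}+\tfrac{\sigma^2}{2\alpha}$, and a short rearrangement yields $B(T)=\gamma e^{-\alpha T}/(1-\gamma c(T))$, matching the claimed exponent. For $A$, I would use the identity $\tfrac{\sigma^2}{2}B=\alpha-(\ln w)'$ coming from the linear ODE to write $\int_0^T B\,d\tau=\tfrac{2\alpha}{\sigma^2}T-\tfrac{2}{\sigma^2}\ln(w(T)/w(0))$; since $w(T)/w(0)=(1-\gamma c(T))e^{\alpha T}$, the $T$-terms cancel and $A(T)=-\tfrac{2b}{\sigma^2}\ln(1-\gamma c(T))$, giving the prefactor $(1-\gamma c(T))^{-2b/\sigma^2}$.

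To make the argument rigorous I would replace the heuristic Feynman--Kac step by a verification: define the candidate $\Phi(\tau,x)=(1-\gamma c(\tau))^{-2b/\sigma^2}\exp(B(\tau)x)$ with the $B$ found above, and apply It\^o's formula to $\Phi(T-t,X_t)$. Since $\Phi$ solves the PDE by construction, the drift term vanishes and $\Phi(T-t,X_t)$ is a local martingale; identifying $\Phi(T,x)=v(T,x)$ and $\Phi(0,X_T)=e^{\gamma X_T}$ reduces the claim to showing this local martingale is a true martingale. The hypothesis $\gamma<2\alpha/\sigma^2$ is exactly what guarantees $1-\gamma c(\tau)>0$ for all $\tau\ge0$ (note $c(\tau)\in[0,\sigma^2/2\alpha)$, so $\gamma c(\tau)<1$), keeping $\Phi$ finite, while the Feller condition $2b>\sigma^2$ keeps $X$ strictly positive so that $\sqrt{X_t}$ and the coefficients are well behaved. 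I expect the main obstacle to be this integrability step---upgrading the local martingale to a genuine martingale by controlling $\mathbb{E}_x^{\mathbb{P}}[e^{\gamma X_t}]$ uniformly---which I would handle by a localization argument combined with the exponential-moment bounds available for the CIR process below the critical exponent $2\alpha/\sigma^2$.
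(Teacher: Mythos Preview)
Your proposal is correct and follows the standard affine/Riccati derivation for the CIR moment generating function; the computations of $A$ and $B$ check out, and you correctly identify the role of the condition $\gamma<2\alpha/\sigma^2$ in keeping $1-\gamma c(\tau)$ positive. The paper does not give its own argument but simply refers to Corollary~6.3.4.4 in \cite{jeanblanc2009mathematical} (stated there for $\gamma<0$ and extended verbatim to $\gamma<2\alpha/\sigma^2$), whose proof is precisely the exponential-affine ansatz plus Riccati ODE that you outline, so your route coincides with the cited one.
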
 
\begin{proof} 
	See Corollary 6.3.4.4 in \cite{jeanblanc2009mathematical}, where the proof is given for $\gamma<0$; the same proof holds for $\gamma<2\alpha/\sigma^2.$
\end{proof}

\section{$3/2$ model}
\label{app:3/2}

Consider the $3/2$ model 
$$dX_t=(b-aX_t)X_t\,dt+\sigma {X_t}^{3/2}\,dB_t\,,\;X_0=\xi$$
where $b,\sigma,\xi>0$, and 
$a>-\frac{\sigma^2}{2}.$ 
For
\begin{equation}
\label{eqn:app_3/2_q}
q>-\frac{1}{2\sigma^2}\Big(a+\frac{\sigma^2}{2}\Big)^2+\frac{\sigma^2}{8}\,,
\end{equation}
define $$\eta:=\frac{\sqrt{(a+\sigma^2/2)^2+2q\sigma^2}-(a+\sigma^2/2)}{\sigma^2} \,.$$ 
Then, it is easy to check that $\alpha:=a+\sigma^2\eta>0.$


We apply the Hansen--Scheinkman decomposition to estimate the expectation
$$p_T=\mathbb{E}^\mathbb{P}[e^{-q\int_0^TX_s\,ds}]\,.$$
The corresponding operator is
$$\mathcal{P}_Th(x)=\mathbb{E}_x^{{\mathbb{P}}}[e^{-q\int_0^T X_u\,du}h(X_T)]\,,$$
and it can be shown that 
$(\lambda,\phi(x)):=(b\eta, x^{-\eta})$
is an eigenpair.
The eigen-measure $\hat{\mathbb{P}}$ is defined on   $\mathcal{F}_T$ as
\begin{equation}
\frac{d\hat{\mathbb{P}}}{d\mathbb{P}}
=e^{-\frac{1}{2}\sigma^2\eta^2\int_0^T X_s\,ds-\sigma\eta \int_0^T\sqrt{X_s}\,dB_s}\,.
\end{equation} 
The process
$$\hat{B}_t=\sigma\eta\int_0^t\sqrt{X_s}\,ds+B_t\,,\;0\le t\le T$$
is a Brownian motion by the Girsanov theorem, and $X$ follows
\begin{equation}
\begin{aligned}
dX_t=(b-\alpha X_t)X_t\,dt+\sigma {X_t}^{3/2}\,d\hat{B}_t
\end{aligned}
\end{equation}
for
$\alpha=a+\sigma^2\eta.$

The expectation $p_T$ can be expressed as 
\begin{equation} \label{eqn:3/2_HS_decompo}
\begin{aligned}
p_T&=\mathbb{E}_\xi^\mathbb{P}[e^{-q\int_0^TX_s\,ds}]
=\mathbb{E}_\xi^{\hat{\mathbb{P}}}[X_T^{\eta}] \xi^{-\eta}e^{-\lambda T}=f(T,\xi)\xi^{-\eta}e^{-\lambda T},
\end{aligned}
\end{equation}
where \begin{equation}\label{eqn:3/2_reminder}
f(t,x)=\mathbb{E}_x^{\hat{\mathbb{P}}}[X_t^{\eta}] \,,\;0\le t\le T\,, x>0
\end{equation} 
is the remainder function.
Considering that $1/X$ is a CIR model and  $h$ has linear growth,
the function  
$f(T,\xi) $ converges to a constant 
as $T\to\infty.$

We study the large-time asymptotic behavior of the
remainder function $f$ and its sensitivity  with respect to the parameters $\alpha$ and $\eta.$

\begin{prop}\label{prop:app_3/2} 
	Suppose that $X$ follows 
	\begin{equation}
	\label{eqn:app_3/2}
	dX_t=(b-\alpha X_t)X_t\,dt+\sigma {X_t}^{3/2}\,d\hat{B}_t\,,\;X_0=\xi
	\end{equation} 
	for  $b,\alpha,\xi>0,$ $\sigma\neq0.$ Define 
	$$f(T,\xi)=\mathbb{E}_\xi^{\hat{\mathbb{P}}}[X_T^{\eta}]
	\,,\; T\ge0\,, $$
	for $\eta<\frac{2\alpha}{\sigma^2}+1.$ 
	Then,  
	\begin{equation}\label{eqn:quand_3/2_conv}
	f(T,\xi)\to \int x^\eta\,d\pi(x)
	\end{equation}
	as $T\to\infty$, where $\pi$ is the invariant distribution of $X.$ The partial derivatives $f_\eta(T,\xi)$ and
	$f_\alpha(T,\xi)$ are bounded in $T$ on $[0,\infty).$
\end{prop}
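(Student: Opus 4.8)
The plan is to reduce everything to the Cox--Ingersoll--Ross model through the substitution $Z_t:=1/X_t$. Applying It\^o's formula to the $3/2$ dynamics \eqref{eqn:app_3/2} shows that $Z$ solves $dZ_t=\big((\alpha+\sigma^2)-bZ_t\big)\,dt-\sigma\sqrt{Z_t}\,d\hat{B}_t$, a CIR process with mean-reversion speed $b$, volatility $\sigma$, and constant term $\alpha+\sigma^2$. Hence $X_T^{\eta}=Z_T^{-\eta}$, so $f(T,\xi)=\mathbb{E}_\xi^{\hat{\mathbb{P}}}[Z_T^{-\eta}]$ is a negative fractional moment of a CIR process. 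Since the stationary law of $Z$ is a Gamma distribution with shape $2(\alpha+\sigma^2)/\sigma^2$, the moment $\mathbb{E}_\xi^{\hat{\mathbb{P}}}[Z_T^{-\beta}]$ is finite and bounded uniformly in $T$ precisely when $\beta<2(\alpha+\sigma^2)/\sigma^2=2\alpha/\sigma^2+2$; this threshold, together with the explicit non-central chi-squared transition density of $Z$ (and Lemma \ref{lem:CIR_MMG} for exponential moments), supplies all the uniform moment bounds used below. Note that $\eta<2\alpha/\sigma^2+1$ leaves a margin of more than one unit below this threshold, which will be consumed by the extra powers produced in the Malliavin computation.

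For the convergence \eqref{eqn:quand_3/2_conv} I would use that $Z$ is ergodic, so $Z_T$ converges in distribution to its Gamma stationary law $\pi_Z$ as $T\to\infty$. Because $\sup_{T\ge0}\mathbb{E}_\xi^{\hat{\mathbb{P}}}[Z_T^{-\eta'}]<\infty$ for some $\eta'\in(\eta,2\alpha/\sigma^2+2)$, the family $\{Z_T^{-\eta}\}_{T\ge0}$ is uniformly integrable, whence $f(T,\xi)=\mathbb{E}_\xi^{\hat{\mathbb{P}}}[Z_T^{-\eta}]\to\int z^{-\eta}\,d\pi_Z(z)=\int x^{\eta}\,d\pi(x)$, where $\pi$ is the inverse-Gamma push-forward of $\pi_Z$ under $z\mapsto 1/z$. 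The boundedness of $f_\eta$ is then the easy part: differentiating under the expectation (justified by the same uniform-integrability margin) gives $f_\eta(T,\xi)=\mathbb{E}_\xi^{\hat{\mathbb{P}}}[X_T^{\eta}\ln X_T]$, and using $|\ln x|\le C_\epsilon(x^{\epsilon}+x^{-\epsilon})$ one dominates $|X_T^{\eta}\ln X_T|\le C(X_T^{\eta+\epsilon}+X_T^{\eta-\epsilon})$ for $\epsilon$ small enough that $\eta+\epsilon<2\alpha/\sigma^2+2$; the uniform moment bounds of the first paragraph then give boundedness of $f_\eta$ in $T$ on $[0,\infty)$.

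The boundedness of $f_\alpha$ is the crux, and I would follow the pattern of Proposition \ref{prop:app_OU_quad_partial_nu}. Since $\alpha$ enters only the drift $(b-\alpha x)x=bx-\alpha x^2$, the perturbation direction is $\overline{b}(x)=-x^2$, and Propositions \ref{prop:Malliavin_chain}, \ref{prop:IBP}, and \ref{prop:rho} yield, with $H(x)=x^{\eta}$, the representation $f_\alpha(T,\xi)=-\tfrac{1}{\sigma}\mathbb{E}_\xi^{\hat{\mathbb{P}}}\big[\int_0^T H'(X_T)(D_sX_T)\sqrt{X_s}\,ds\big]$. The decisive input is the Malliavin derivative $D_sX_T$: as the coefficients of \eqref{eqn:app_3/2} are not globally Lipschitz, Proposition \ref{prop:Malliavin_first_variation} does not apply and one argues as in the CIR case via \cite{alos2008malliavin}; equivalently, from $X=1/Z$ and the chain rule $D_sX_T=-X_T^2\,D_sZ_T$, the known CIR Malliavin derivative of $Z$ gives a bound of the form $|D_sX_T|\le C\,X_T^{3/2}\,e^{-\frac{b}{2}(T-s)}$, where the exponential decay comes from the mean-reversion speed $b$ of $Z$ and the favourable sign of the remaining $\int_s^T Z_u^{-1}$-type exponent (namely $\tfrac{\alpha+\sigma^2}{2}-\tfrac{\sigma^2}{8}>0$).

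Substituting this bound and applying H\"older's inequality with a conjugate pair $(p,p')$ reduces matters, via $\int_0^T e^{-\frac{b}{2}(T-s)}\,ds\le 2/b$, to the uniform-in-$T$ boundedness of $\mathbb{E}_\xi^{\hat{\mathbb{P}}}[X_T^{(\eta+1/2)p}]$ and of $\mathbb{E}_\xi^{\hat{\mathbb{P}}}[X_s^{p'/2}]$; a short computation shows these hold simultaneously for a suitable $p>1$ exactly when $\eta<2\alpha/\sigma^2+1$, which is where the hypothesis is used. I expect the main obstacle to be precisely this control of $D_sX_T$ for the non-Lipschitz $3/2$ coefficients --- extracting the clean decay $e^{-\frac{b}{2}(T-s)}$ while also verifying the integrability conditions \eqref{eqn:expo_condi}--\eqref{eqn:psi_condi} needed to invoke Proposition \ref{prop:rho} --- since the polynomial growth of $\overline{b}(x)=-x^2$ against the $3/2$ volatility makes the requisite exponential-moment check the delicate point.
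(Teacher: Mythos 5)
Your proposal is correct, and at the decisive step it takes a genuinely different (and more elementary) route than the paper's own proof. The preliminary parts coincide in substance: your substitution $Z_t=1/X_t$, which turns \eqref{eqn:app_3/2} into the CIR dynamics $dZ_t=\big((\alpha+\sigma^2)-bZ_t\big)\,dt-\sigma\sqrt{Z_t}\,d\hat{B}_t$, is exactly the content packaged in the paper's Lemma \ref{lem:3/2_MMG} (uniform-in-$T$ bounds and limits for $\mathbb{E}_\xi^{\hat{\mathbb{P}}}[X_T^{A}]$ when $A<\frac{2\alpha}{\sigma^2}+2$); your ergodicity-plus-uniform-integrability argument replaces the paper's appeal to the density of $X$ for \eqref{eqn:quand_3/2_conv}; and your treatment of $f_\eta$ is the same domination argument. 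The real difference is in bounding $f_\alpha$. Both you and the paper start from the same Malliavin representation and the same derivative formula $D_sX_T=\sigma e^{-\frac{b}{2}(T-s)}e^{-(\frac{\alpha}{2}+\frac{3\sigma^2}{8})\int_s^TX_u\,du}X_T^{3/2}$, but the paper does \emph{not} discard the exponential functional: it rewrites $\sqrt{X_T}$ in terms of $\sqrt{X_s}$, conditions via the Markov property, changes measure to $\tilde{\mathbb{P}}$, and then runs a \emph{second} Hansen--Scheinkman decomposition (eigenpair $(b,x^{-1})$, eigen-measure $\overline{\mathbb{P}}$) to reduce the inner expectation to $\mathbb{E}_x^{\overline{\mathbb{P}}}[|H'(X_t)|X_t^2]\,e^{-bt}x^{-1}$, which gives the full decay $e^{-b(T-s)}$ and works whenever $\eta+1<\frac{2\alpha}{\sigma^2}+3$. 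You instead simply drop the nonpositive exponent (using $\frac{\alpha}{2}+\frac{3\sigma^2}{8}>0$), keep only the factor $e^{-\frac{b}{2}(T-s)}$, and close with H\"older; your exponent bookkeeping is right: a $p>1$ with $(\eta+\tfrac12)p<\frac{2\alpha}{\sigma^2}+2$ and $\tfrac{p'}{2}<\frac{2\alpha}{\sigma^2}+2$ exists precisely when $\eta<\frac{2\alpha}{\sigma^2}+1$, so you use the hypothesis sharply where the paper has slack. What each buys: your argument is shorter and avoids the auxiliary measures $\tilde{\mathbb{P}}$ and $\overline{\mathbb{P}}$ entirely; the paper's extra machinery buys a faster decay rate and boundedness of $f_\alpha$ under a weaker restriction on $\eta$, neither of which is needed for the proposition as stated. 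One shared caveat: like the paper, you invoke Proposition \ref{prop:rho} with perturbation $\overline{b}(x)=-x^2$, whose hypothesis \eqref{eqn:expo_condi} (an exponential moment of $\int_0^T X_s^4\,ds$) is genuinely delicate for the $3/2$ model, whose marginals have power-law tails; you at least flag this point explicitly, while the paper passes over it silently, so it is not a gap of your proposal relative to the paper's own proof.
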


\begin{proof} 
It is easy to prove Eq.\eqref{eqn:quand_3/2_conv} by considering the density function of $X.$  
We prove that two partial derivatives $f_\eta(T,\xi)$ and
$f_\alpha(T,\xi)$
are bounded in $T$ on $[0,\infty).$
First, consider the partial derivative $f_\eta(T,\xi).$
Observe that
$$f_\eta(T,\xi)
	=\frac{\partial}{\partial\eta}\mathbb{E}_\xi^{\hat{\mathbb{P}}}[X_T^\eta]
	=\mathbb{E}_\xi^{\hat{\mathbb{P}}}\Big[\frac{\partial}{\partial\eta}X_T^\eta\Big]
	= \mathbb{E}_\xi^{\hat{\mathbb{P}}}[X_T^\eta \ln X_T]\,.$$
There are constants $c>0,$ $\gamma<\frac{2\alpha}{\sigma^2}+2$, and $m\in\mathbb{N}$ such that
$$x^\eta \ln x \leq cx^{\gamma}\,,\;x\ge1$$
and 
$$x^\eta |\ln x| \leq cx^{-m}\,,\;0<x<1\,.$$ 
Thus, by Lemma \ref{lem:3/2_MMG}, the partial derivative $f_\eta(T,\xi)$ is bounded in $T$ on $[0,\infty).$

Now we show that the partial derivative $f_\alpha(T,\xi)$
is bounded in $T$ on $[0,\infty).$ 
Define $H(x)=x^{\eta}$ for notational simplicity.
By Propositions \ref{prop:Malliavin_chain}, \ref{prop:IBP}, and \ref{prop:rho}, we have
\begin{equation} 
\begin{aligned}
f_\alpha(T,\xi)=\frac{\partial}{\partial\alpha}\mathbb{E}_\xi^{\hat{\mathbb{P}}}[H(X_T)]
&=-\frac{1}{\sigma}\mathbb{E}_\xi^{\hat{\mathbb{P}}}\Big[H(X_T)\int_0^T\sqrt{X_s}\,d\hat{B}_s\Big]\\
&=-\frac{1}{\sigma}\mathbb{E}_\xi^{\hat{\mathbb{P}}}\Big[ \int_0^TD_s(H(X_T))\sqrt{X_s}\,ds\Big] \\
&=-\frac{1}{\sigma}\mathbb{E}_\xi^{\hat{\mathbb{P}}}\Big[ \int_0^TH'(X_T) (D_sX_T)\sqrt{X_s}\,ds\Big]\\
&=-\mathbb{E}_\xi^{\hat{\mathbb{P}}}\Big[ \int_0^TH'(X_T)X_T^{3/2} e^{-\frac{b}{2}(T-s)} e^{-(\frac{\alpha}{2}+\frac{3\sigma^2}{8})\int_s^T X_u\,du}\sqrt{X_s}\,ds\Big] \,.
\end{aligned}
\end{equation} 
For the last equality, we used that the Malliavin derivative of $X_T$ is 
\begin{equation}
\begin{aligned}
D_sX_T 
=\sigma e^{-\frac{b}{2}(T-s)} e^{-(\frac{\alpha}{2}+\frac{3\sigma^2}{8})\int_s^T X_u\,du}X_T^{3/2}\,.
\end{aligned}
\end{equation}
From
\begin{equation}
\begin{aligned}
\sqrt{X_T}=\sqrt{X_s}e^{\frac{b}{2}(T-s)-(\frac{\alpha}{2}+\frac{\sigma^2}{4})\int_s^TX_u\,du+ \frac{\sigma}{2} \int_s^T X_u^{1/2}\,d\hat{B}_u}\,,
\end{aligned}
\end{equation}
we have
\begin{equation} \label{eqn:3/2_f_alpha}
\begin{aligned}
f_\alpha(T,\xi) 
&=-\mathbb{E}_\xi^{\hat{\mathbb{P}}}\Big[ \int_0^TH'(X_T)X_T  e^{-(\alpha+\frac{5\sigma^2}{8})\int_s^TX_u\,du+ \frac{\sigma}{2} \int_s^T X_u^{1/2}\,d\hat{B}_u} X_s\,ds\Big]\\
&=-\mathbb{E}_\xi^{\hat{\mathbb{P}}}\Big[ \int_0^T\mathbb{E}_\xi^{\hat{\mathbb{P}}}[H'(X_T)X_T  e^{-(\alpha+\frac{5\sigma^2}{8})\int_s^TX_u\,du+ \frac{\sigma}{2} \int_s^T X_u^{1/2}\,d\hat{B}_u}|X_s] X_s^2\,ds\Big]\\
&=-\mathbb{E}_\xi^{\hat{\mathbb{P}}}\Big[ \int_0^Tg(T-s,X_s)X_s^2\,ds\Big],
\end{aligned}
\end{equation} 
where $g$ is defined as
\begin{equation}
\begin{aligned}
g(t,x)
&=\mathbb{E}_x^{\hat{\mathbb{P}}}[H'(X_t)X_t  e^{-(\alpha+\frac{5\sigma^2}{8})\int_0^tX_u\,du+ \frac{\sigma}{2} \int_0^t X_u^{1/2}\,d\hat{B}_u}]\\
&=\mathbb{E}_\xi^{\hat{\mathbb{P}}}[H'(X_{t+s})X_{t+s}  e^{-(\alpha+\frac{5\sigma^2}{8})\int_s^{t+s}X_u\,du+ \frac{\sigma}{2} \int_s^{t+s} X_u^{1/2}\,d\hat{B}_u}|X_s=x]\,.
\end{aligned}
\end{equation} 
The second equality is from the Markov property.

We aim to estimate this function $g.$
Define a measure $\tilde{\mathbb{P}}$ on $\mathcal{F}_T$ as
$$\frac{d\tilde{\mathbb{P}}}{d\hat{\mathbb{P}}}=e^{-\frac{\sigma^2}{8}\int_0^TX_u\,du+ \frac{\sigma}{2} \int_0^T X_u^{1/2}\,d\hat{B}_u}\,.$$
It is easy to show that a local martingale $(e^{-\frac{\sigma^2}{8}\int_0^tX_u\,du+ \frac{\sigma}{2} \int_0^t X_u^{1/2}\,d\hat{B}_u})_{0\le t\le T}$ is a martingale.
The process $X$ satisfies 
$$dX_t=(b-(\alpha-\frac{1}{2}\sigma^2) X_t)X_t\,dt+\sigma {X_t}^{3/2}\,d\tilde{B}_t\,,$$
where $(\tilde{B}_t)_{0\le t\le T}$ is a $\tilde{\mathbb{P}}$-Brownian motion.
Note that since the mean-reversion speed $\alpha-\frac{1}{2}\sigma^2>-\frac{1}{2}\sigma^2,$ the process stays positive under the measure $\tilde{\mathbb{P}}.$ 
It follows that 
$$g(t,x)=\mathbb{E}_x^{\tilde{\mathbb{P}}}[H'(X_t)X_t  e^{-(\alpha+\frac{\sigma^2}{2})\int_0^tX_u\,du}]\,.$$
We apply the Hansen--Scheinkman decomposition here.
Consider the operator
$$h\mapsto\mathbb{E}_x^{\tilde{\mathbb{P}}}[h(X_t)
e^{-(\alpha+\frac{\sigma^2}{2})\int_0^tX_u\,du}]\,.$$
It can be shown that 
$(\tilde{\lambda},\tilde{\phi}(x)):=(b, x^{-1})$
is an eigenpair, and let $\overline{\mathbb{P}}$ be the corresponding eigen-measure. The $\overline{\mathbb{P}}$-dynamics of $X$ is 
$$dX_t=(b-(\alpha+\frac{1}{2}\sigma^2) X_t)X_t\,dt+\sigma {X_t}^{3/2}\,d\overline{B}_t\,,$$
where $(\overline{B}_t)_{0\le t\le T}$
is a $\overline{\mathbb{P}}$-Brownian motion.
Then,
$$g(t,x)=\mathbb{E}_x^{\tilde{\mathbb{P}}}[H'(X_t)X_t  e^{-(\alpha+\frac{\sigma^2}{2})\int_0^tX_u\,du}]=
\mathbb{E}_x^{\overline{\mathbb{P}}}[H'(X_t)X_t^2]e^{-bt}x^{-1}\,.$$ 
Since 
$|H'(x)|x^2=\eta x^{\eta+1}$
and $\eta+1\leq \frac{2\alpha}{\sigma^2}+3$ holds,
by Lemma \ref{lem:3/2_MMG} (with $\alpha$ replaced by $\alpha+\frac{1}{2}\sigma^2$), the expectation 
$\mathbb{E}_x^{\overline{\mathbb{P}}}[|H'(X_t)|X_t^2]$ is uniformly bounded in $(t,x)$ on $[0,\infty)\times (0,\infty).$
Thus,
$$|g(t,x)|\leq c'e^{-bt}x^{-1}$$
for some positive constant $c',$ which is independent of $t$ and $x.$
Eq.\eqref{eqn:3/2_f_alpha} gives
\begin{equation} 
\begin{aligned}
|f_\alpha(T,\xi)|  
&\leq \mathbb{E}_\xi^{\hat{\mathbb{P}}}\Big[ \int_0 ^T|g(T-s,X_s)| e^{-b(T-s)}X_s^2\,ds\Big]
\leq c'\int_0^T e^{-b(T-s)} \mathbb{E}_\xi^{\hat{\mathbb{P}}}[X_s^2]\,ds\,.
\end{aligned}
\end{equation} 
By Lemma \ref{lem:3/2_MMG}, the expectation $\mathbb{E}_\xi^{\hat{\mathbb{P}}}[X_s^2]$ is bounded in $s$ on $[0,\infty).$
This gives the desired result.
\end{proof}

\begin{lemma}\label{lem:3/2_MMG} 
	Let $X$ be a solution of
	$$dX_t=(b-\alpha X_t)X_t\,dt+\sigma {X_t}^{3/2}\,d\hat{B}_t\,,\;X_0=\xi$$	for $b,\sigma,\xi>0$, and $\alpha\ge-\frac{\sigma^2}{2}.$	Then, for $A<\frac{2\alpha}{\sigma^2}+2,$ 	\begin{equation}\label{eqn:3/2_moment}	\begin{aligned}	H(T,\xi):=\mathbb{E}_\xi(X_T^A)	&=\frac{\Gamma(\frac{2\alpha}{\sigma^2}+2-A)}{\Gamma(\frac{2\alpha}{\sigma^2}+2)}\Big(\frac{2b}{\sigma^2}\frac{1}{1-e^{-bT}}\Big)^{A}	F\Big(A,\frac{2\alpha}{\sigma^2}+2,-\frac{2b}{\sigma^2}\frac{1}{(e^{bT}-1)\xi}\Big)\,,
	\end{aligned}
	\end{equation}
	where $F$ is the confluent hypergeometric function.
	The function $H(T,\xi)$ converges to $$\frac{\Gamma(\frac{2\alpha}{\sigma^2}+2-A)}{\Gamma(\frac{2\alpha}{\sigma^2}+2)}\Big(\frac{2b}{\sigma^2}\Big)^{A}$$ as $T\to\infty.$
	Moreover, if $0<A<\frac{2\alpha}{\sigma^2}+2,$ then 	the function $H$ is uniformly bounded on the domain $[0,\infty)\times(0,\infty).$\end{lemma}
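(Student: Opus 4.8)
The plan is to reduce the whole computation to the CIR process via the reciprocal transformation $Y_t:=1/X_t$, for which the transition law is an explicit scaled noncentral chi-squared, and then to read off the confluent hypergeometric representation from the closed-form moment of that law. First I would apply It\^o's formula to $Y_t=1/X_t$. Using $dX_t=(b-\alpha X_t)X_t\,dt+\sigma X_t^{3/2}\,d\hat B_t$ and $d\langle X\rangle_t=\sigma^2X_t^3\,dt$, a direct computation gives
$$dY_t=(\alpha+\sigma^2-bY_t)\,dt-\sigma\sqrt{Y_t}\,d\hat B_t\,,$$
so $Y$ is a CIR process with mean-reversion speed $b$, long-run level $(\alpha+\sigma^2)/b$, and volatility $\sigma$, started at $Y_0=1/\xi$. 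Its Feller condition $2(\alpha+\sigma^2)\ge\sigma^2$ is exactly the hypothesis $\alpha\ge-\sigma^2/2$, so $Y$ (hence $X$) stays strictly positive and $H(T,\xi)=\mathbb{E}_\xi[X_T^A]=\mathbb{E}_{1/\xi}[Y_T^{-A}]$.

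Next I would invoke the explicit law of the CIR process: $Y_T\stackrel{d}{=}c(T)\,W$, where $c(T)=\frac{\sigma^2}{4b}(1-e^{-bT})$ and $W$ is noncentral chi-squared with $d:=\frac{4\alpha}{\sigma^2}+4$ degrees of freedom and noncentrality $\lambda=Y_0e^{-bT}/c(T)=\frac{4b}{\sigma^2\xi(e^{bT}-1)}$. Writing $W$ as the Poisson mixture $W\mid N\sim\chi^2_{d+2N}$ with $N\sim\mathrm{Poisson}(\lambda/2)$ and summing the central negative moments $\mathbb{E}[(\chi^2_{d+2n})^{-A}]=2^{-A}\Gamma(d/2+n-A)/\Gamma(d/2+n)$ (finite precisely when $A<d/2=\frac{2\alpha}{\sigma^2}+2$) collapses to Kummer's function,
$$\mathbb{E}[W^{-A}]=2^{-A}e^{-\lambda/2}\frac{\Gamma(d/2-A)}{\Gamma(d/2)}\,F\!\left(d/2-A,\,d/2,\,\tfrac{\lambda}{2}\right)\,.$$
Applying Kummer's transformation $F(a,c,z)=e^{z}F(c-a,c,-z)$ cancels the $e^{-\lambda/2}$ factor and turns the first parameter into $A$; multiplying by $c(T)^{-A}$ and substituting $c(T)$ and $\lambda$ then reproduces the stated formula, because $2^{-A}c(T)^{-A}=\big(\frac{2b}{\sigma^2(1-e^{-bT})}\big)^{A}$ and $-\lambda/2=-\frac{2b}{\sigma^2}\frac{1}{(e^{bT}-1)\xi}$, with $d/2=\frac{2\alpha}{\sigma^2}+2$. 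The large-time limit is then immediate: as $T\to\infty$ we have $1-e^{-bT}\to1$ and the third argument of $F$ tends to $0$, while $F(A,d/2,0)=1$, yielding $H(T,\xi)\to\frac{\Gamma(d/2-A)}{\Gamma(d/2)}\big(\frac{2b}{\sigma^2}\big)^{A}$.

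The step I expect to be the main obstacle is the uniform boundedness for $0<A<d/2$, since the prefactor $\big(\frac{2b}{\sigma^2(1-e^{-bT})}\big)^{A}$ blows up like $(bT)^{-A}$ as $T\to0$ and must be offset by the decay of $F$ at large negative argument. I would first establish the bound $F(A,d/2,-w)\le C(1+w)^{-A}$ for all $w\ge0$, combining $F(A,d/2,0)=1$, the monotonicity $\frac{d}{dw}F(A,d/2,-w)=-\frac{A}{d/2}F(A+1,d/2+1,-w)<0$, and the classical asymptotic $F(A,d/2,-w)\sim\frac{\Gamma(d/2)}{\Gamma(d/2-A)}w^{-A}$ as $w\to\infty$; this is exactly where the hypothesis $A<d/2$ is needed, so that $\Gamma(d/2-A)$ is finite and the decay exponent matches the prefactor. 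Feeding $w=\lambda/2=\frac{2b}{\sigma^2\xi(e^{bT}-1)}$ into this bound, the product prefactor $\times F$ reduces, up to constants, to $\big(\tfrac{P\xi\,e^{bT}}{P+\xi(e^{bT}-1)}\big)^{A}$ with $P=2b/\sigma^2$. The delicate remaining point is controlling the supremum of this quantity: for each fixed $\xi$ it is continuous in $T$, equals $\xi^A$ at $T=0$ and tends to $P^A$ as $T\to\infty$, hence bounded in $T$, which is the form needed in the application of Proposition~\ref{prop:app_3/2}; the competition between the prefactor singularity and the $F$-decay as $T\to0$ is the crux of the estimate and is what the inequality $A<d/2$ is designed to resolve.
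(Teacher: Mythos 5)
Your reduction to the CIR process is correct and gives a genuine, self-contained proof of the first two assertions, which the paper itself never supplies: its entire ``proof'' of Lemma \ref{lem:3/2_MMG} is a pointer to Lemma B.1 of \cite{park2019convergence}. Concretely, It\^o's formula does yield that $Y=1/X$ solves $dY_t=(\alpha+\sigma^2-bY_t)\,dt-\sigma\sqrt{Y_t}\,d\hat B_t$, the Feller condition $2(\alpha+\sigma^2)\ge\sigma^2$ is exactly $\alpha\ge-\sigma^2/2$, and the scaled noncentral chi-squared law of $Y_T$ with $d/2=\frac{2\alpha}{\sigma^2}+2$, combined with the Poisson-mixture formula for negative moments of $\chi^2_{d+2N}$ and Kummer's transformation, reproduces Eq.\eqref{eqn:3/2_moment} exactly, including the finiteness threshold $A<\frac{2\alpha}{\sigma^2}+2$; the $T\to\infty$ limit then follows from $F(A,d/2,0)=1$. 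For these two claims your argument is correct and is almost certainly the route behind the cited reference.

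The gap is in the third assertion. What you actually establish is that for each \emph{fixed} $\xi$ the map $T\mapsto H(T,\xi)$ is bounded on $[0,\infty)$; you never prove $\sup_{(T,\xi)\in[0,\infty)\times(0,\infty)}H(T,\xi)<\infty$, which is what the lemma states. And no refinement of your estimate can close this, because the joint statement is false for $A>0$: trivially $H(0,\xi)=\xi^A$ is unbounded in $\xi$, and your own sharp bound $\bigl(\tfrac{P\xi e^{bT}}{P+\xi(e^{bT}-1)}\bigr)^A$, $P=2b/\sigma^2$, blows up along $T\downarrow0$, $\xi\sim P/(bT)$ (indeed $\lim_{\xi\to\infty}H(T,\xi)=\frac{\Gamma(d/2-A)}{\Gamma(d/2)}\bigl(\tfrac{P}{1-e^{-bT}}\bigr)^A\to\infty$ as $T\downarrow 0$, so this is not an artifact of the bound). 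Thus the ``competition'' you flag at the corner is not resolved by $A<d/2$; it is a genuine failure of the claim on the full domain. What your bound honestly delivers is uniform boundedness on $[t_0,\infty)\times(0,\infty)$ for every $t_0>0$ (since there $\tfrac{P\xi e^{bT}}{P+\xi(e^{bT}-1)}\le\tfrac{P}{1-e^{-bt_0}}$) and on $[0,\infty)\times(0,\xi_0]$ for every $\xi_0<\infty$. Your closing remark that fixed-$\xi$ boundedness ``is the form needed in the application of Proposition \ref{prop:app_3/2}'' is also inaccurate: that proposition's proof invokes the lemma to bound $\mathbb{E}_x^{\overline{\mathbb{P}}}[|H'(X_t)|X_t^2]$ uniformly in $(t,x)$ \emph{jointly}, with $t=T-s$ running down to $0$ and $x=X_s$ ranging over all of $(0,\infty)$ --- precisely the corner your estimate excludes. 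So you should either prove the restricted uniform statements above and verify they suffice for (a repaired version of) Proposition \ref{prop:app_3/2}, or state explicitly that the lemma's last clause cannot hold as written and needs reformulation.
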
 

\noindent See \cite[Lemma B.1]{park2019convergence} for the proof.

\section{Quadratic drift model}
\label{app:quad}

Assume that $X$ follows
\begin{equation}\label{eqn:garch_app}
dX_t= (b-aX_t^2)\,dt+ \sigma X_t\,dB_t\,,\;X_0=\xi
\end{equation} 
for $b,a,\sigma,\xi>0.$
This SDE has a unique strong solution and the solution stays positive by \cite[Proposition 2.1]{carr2019lognormal}.
For $q>-\frac{a^2}{2\sigma^2},$ we 
define
$$p_T:=\mathbb{E}_\xi^\mathbb{P}[e^{-q\int_0^TX_u^2\,du}]\,.$$
Consider the generator  
$$(\mathcal{L}\phi)(x):=\frac{1}{2}\sigma^2x^2\phi''(x)+(b-ax^2)\phi'(x)-qx^2\phi(x)\,.$$
It can be shown that 
$(\lambda,\phi):=(b\eta,e^{-\eta x})$
is an eigenpair, where
$$\alpha:=\sqrt{a^2+2q\sigma^2}\,,\;\eta:=\frac{\alpha-a}{\sigma^2}\,.$$
Let $\hat{\mathbb{P}}$ be the eigen-measure on $\mathcal{F}_T$
defined as
$$\frac{d\hat{\mathbb{P}}}{d\mathbb{P}}=e^{-\frac{1}{2}\sigma^2\eta^2\int_0^T X_s^2 \,ds-\sigma\eta\int_0^TX_s\,dB_s}\,.$$
Then, the $\hat{\mathbb{P}}$-dynamics of $X$
is
$$dX_t= (b-\alpha X_t^2)\,dt+ \sigma X_t\,d\hat{B}_t\,,\;X_0=\xi$$
for a $\hat{\mathbb{P}}$-Brownian motion $(\hat{B}_t)_{0\le t\le T}.$ It can be easily checked that the local martingale $(e^{-\frac{1}{2}\sigma^2\eta^2\int_0^t X_s^2 \,ds-\sigma\eta\int_0^tX_s\,dB_s})_{0\le t\le T}$ is a martingale.
It follows that
\begin{equation}
\label{eqn:quad_HS_decompo}
p_T:=\mathbb{E}_\xi^\mathbb{P}[e^{-q\int_0^TX_s^2\,ds}]=f(T,\xi)e^{-\eta\xi}e^{-\lambda  T},
\end{equation} 
where
$$f(t,x):=\mathbb{E}_x^{\hat{\mathbb{P}}}[e^{\eta X_t}]\,.$$
The invariant measure of $X$ under the measure $\hat{\mathbb{P}}$ is
$$d\pi(x)=\frac{1}{\sigma^2x^2}e^{-\frac{2b}{\sigma^2 }\frac{1}{x}-\frac{2\alpha}{\sigma^2}x}\,dx$$
up to positive constant multiples.
For further details on 
the invariant measure, readers may refer to
\cite[Proposition 3.2]{locherbach2013ergodicity}
or 
\cite[Lemma 20.19]{kallenberg2006foundations}.

\begin{prop}\label{prop:quad_model_remain}
	Suppose that $X$ follows 
	\begin{equation}
	\label{eqn:app_quad}
	dX_t= (b-\alpha X_t^2)\,dt+ \sigma X_t\,d\hat{B}_t\,,\;X_0=\xi
	\end{equation} 
	for  $b,\alpha,\xi>0,$ $\sigma\neq0.$ Define 
	$$f(T,\xi):=\mathbb{E}_\xi^{\hat{\mathbb{P}}}[e^{\eta X_T}]\,,$$
	for  $\eta< \frac{\alpha}{\sigma^2}.$ 
	Then, 
	\begin{equation}\label{eqn:quad_quad_conv}
	f(T,\xi)\to \int e^{\eta x}\,d\pi(x)
	\end{equation}
	as $T\to\infty$, where $\pi$ is the invariant distribution of $X.$ The partial derivatives $f_\eta(T,\xi)$ and
	$f_\alpha(T,\xi)$ are bounded in $T$ on $[0,\infty).$	
\end{prop}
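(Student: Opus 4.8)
The plan is to follow the template of Propositions~\ref{prop:app_OU_quad_partial_nu} and~\ref{prop:app_3/2}, reducing every assertion to a single uniform exponential-moment estimate for $X$ under $\hat{\mathbb{P}}$. For the convergence claim~\eqref{eqn:quad_quad_conv}, the diffusion is positive recurrent with the explicitly given invariant measure $\pi$, so $X_T$ converges weakly to $\pi$; I would upgrade this to convergence of $\mathbb{E}_\xi^{\hat{\mathbb{P}}}[e^{\eta X_T}]$ by establishing uniform integrability of $\{e^{\eta X_T}\}_{T\ge0}$. Both this and the derivative bounds rest on the claim that
\begin{equation}
\sup_{T\ge0}\mathbb{E}_\xi^{\hat{\mathbb{P}}}[e^{\gamma X_T}]<\infty \qquad \text{for every } 0<\gamma<\tfrac{2\alpha}{\sigma^2}.
\end{equation}
I would prove this by a Foster--Lyapunov argument with $V(x)=e^{\gamma x}$: applying the $\hat{\mathbb{P}}$-generator $\tfrac12\sigma^2x^2\frac{d^2}{dx^2}+(b-\alpha x^2)\frac{d}{dx}$ gives $\gamma\big[(\tfrac12\sigma^2\gamma-\alpha)x^2+b\big]e^{\gamma x}$, and since $\tfrac12\sigma^2\gamma-\alpha<0$ the bracket tends to $-\infty$, yielding a drift inequality $\le -cV+K$. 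Dynkin's formula (with a localization to accommodate the unbounded coefficients) then gives $\mathbb{E}_\xi^{\hat{\mathbb{P}}}[V(X_T)]\le V(\xi)e^{-cT}+K/c$, which is bounded in $T$.

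Granting this, the bound on $f_\eta$ is immediate and mirrors the CIR case: differentiating under the expectation gives $f_\eta(T,\xi)=\mathbb{E}_\xi^{\hat{\mathbb{P}}}[X_Te^{\eta X_T}]$, and choosing $\gamma$ with $\eta<\gamma<2\alpha/\sigma^2$ together with $xe^{\eta x}\le c_\gamma e^{\gamma x}$ gives $|f_\eta(T,\xi)|\le c_\gamma\mathbb{E}_\xi^{\hat{\mathbb{P}}}[e^{\gamma X_T}]$, bounded by the display above.

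For $f_\alpha$ I would use the Malliavin representation exactly as in Propositions~\ref{prop:app_OU_quad_partial_nu} and~\ref{prop:app_3/2}. Writing $H(x)=e^{\eta x}$ and noting that the $\alpha$-perturbation enters only the drift through $\overline{b}(x)=-x^2$ while the diffusion coefficient is $\sigma x$, Propositions~\ref{prop:Malliavin_chain}, \ref{prop:IBP} and~\ref{prop:rho} give
\begin{equation}
f_\alpha(T,\xi)=-\frac{1}{\sigma}\mathbb{E}_\xi^{\hat{\mathbb{P}}}\Big[\int_0^T H'(X_T)(D_sX_T)X_s\,ds\Big].
\end{equation}
The decisive computation is the Malliavin derivative. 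The first variation process is $Y_t=\exp\big(-2\alpha\int_0^tX_u\,du-\tfrac12\sigma^2 t+\sigma\hat B_t\big)$ and $D_sX_T=\sigma X_s Y_T/Y_s$; substituting the identity $\sigma(\hat B_T-\hat B_s)=\ln(X_T/X_s)-\int_s^T(bX_u^{-1}-\alpha X_u-\tfrac12\sigma^2)\,du$, which comes from Itô's formula applied to $\ln X$, collapses the stochastic integral and produces the clean formula $D_sX_T=\sigma X_T\exp\big(-\alpha\int_s^TX_u\,du-b\int_s^TX_u^{-1}\,du\big)$. The arithmetic--geometric mean inequality $\alpha X_u+bX_u^{-1}\ge 2\sqrt{\alpha b}$ then forces the exponent below $-2\sqrt{\alpha b}(T-s)$, giving the crucial decay $D_sX_T\le \sigma X_Te^{-2\sqrt{\alpha b}(T-s)}$. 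Plugging this in, using $|H'(X_T)|X_T=\eta X_Te^{\eta X_T}$, and applying Hölder with exponents $p>1$ small enough that $p\eta<2\alpha/\sigma^2$ (the hypothesis $\eta<\alpha/\sigma^2$ is exactly what leaves this room), I reduce $|f_\alpha(T,\xi)|$ to $\eta\int_0^T \mathbb{E}_\xi^{\hat{\mathbb{P}}}[e^{p\eta X_T}X_T^p]^{1/p}\,\mathbb{E}_\xi^{\hat{\mathbb{P}}}[X_s^q]^{1/q}e^{-2\sqrt{\alpha b}(T-s)}\,ds$; the two moments are bounded uniformly in $T$ and $s$ by the exponential-moment display, and the exponential weight makes the time integral uniformly bounded.

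The main obstacle is the rigorous justification of the formula for $D_sX_T$: because the coefficients $b-\alpha x^2$ and $\sigma x$ do not have bounded derivatives, Proposition~\ref{prop:Malliavin_first_variation} does not apply directly (as in the CIR case, where \cite{alos2008malliavin} was invoked), so I would establish Malliavin differentiability and the first-variation representation through an approximation/localization argument or an appeal to the corresponding result for this class of strictly positive diffusions. The secondary technical point is making the Foster--Lyapunov estimate precise under the same lack of global regularity, which I would again handle by localizing on exit times and passing to the limit via Fatou.
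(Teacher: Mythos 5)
Your proposal is correct in substance, and its core---the treatment of $f_\alpha$---is essentially the paper's own argument: the Malliavin/integration-by-parts representation, the same closed-form first-variation identity $D_sX_T=\sigma X_T e^{-\int_s^T(b/X_u+\alpha X_u)\,du}$ obtained by the same collapse through $\ln X$, the same bound $b/X_u+\alpha X_u\ge 2\sqrt{\alpha b}$, and the same H\"older estimate against the kernel $e^{-2\sqrt{\alpha b}(T-s)}$; moreover, the approximation/localization you defer for justifying $D_sX_T$ is exactly what the paper carries out in Proposition \ref{prop:quad_malliavin} (Lipschitz truncations $\Lambda_N$ of the drift, comparison theorem, dominated convergence, closedness of $D$). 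Where you genuinely diverge is in the supporting estimates. The paper obtains the convergence of $f(T,\xi)$, the boundedness of $f_\eta$, and the uniform moment bounds in the H\"older step from an $L^2(m)$-convergence result (Proposition \ref{eqn:quad_model_L_2_con}) proved by spectral methods: divergence form, Liouville transformation, emptiness of the essential spectrum, and the spectral-gap theorem of \cite{qin2016positive}. You instead prove the uniform-in-time exponential moment bound $\sup_{T\ge0}\mathbb{E}_\xi^{\hat{\mathbb{P}}}[e^{\gamma X_T}]<\infty$ for $\gamma<2\alpha/\sigma^2$ by a Foster--Lyapunov drift inequality for $V(x)=e^{\gamma x}$ (the generator computation and the localization are sound, and the threshold matches the integrability of $\pi$), then deduce convergence from ergodicity plus uniform integrability, and boundedness of $f_\eta$ and of the H\"older moments by domination. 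Your route is more elementary and self-contained---no Sturm--Liouville theory---and yields quantitative moment bounds for every $\gamma<2\alpha/\sigma^2$; the paper's spectral route handles all payoffs $g\in L^2(m)$ at once, which is what it reuses verbatim to control $\mathbb{E}_\xi^{\hat{\mathbb{P}}}[|H'(X_T)X_T|^p]$ and $\mathbb{E}_\xi^{\hat{\mathbb{P}}}[X_s^q]$. Two points to tighten: (i) the weak convergence $X_T\Rightarrow\pi$ needs a one-line justification of positive recurrence (both boundaries of $(0,\infty)$ inaccessible and the speed measure finite, which is precisely the displayed $\pi$); (ii) your appeal to Proposition \ref{prop:rho} requires its hypothesis \eqref{eqn:expo_condi} with $\overline{b}(x)=-x^2$, i.e.\ exponential moments of $\int_0^T X_s^4\,ds$---a condition the paper's own proof also leaves implicit for this model---so this step of your representation rests on the same unverified hypothesis as the paper's.
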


\begin{proof}
	Since $e^{\eta x}\in L^2(m),$
	by Proposition \eqref{eqn:quad_model_L_2_con}, we have
	Eq.\eqref{eqn:quad_quad_conv}.
	By the same method in Eq.\eqref{eqn:f_eta_OU_2}, it follows that $f_\eta(T,\xi)$ is bounded in $T.$	
	We now show that the partial derivative $f_\alpha(T,\xi)$
	is bounded in $T$ on $[0,\infty).$ 
	Define $H(x)=e^{\eta x}$ for notational simplicity.
	We have
	\begin{equation} 
	\begin{aligned}
	f_\alpha(T,\xi)=\frac{\partial}{\partial\alpha}\mathbb{E}_\xi^{\hat{\mathbb{P}}}[H(X_T)]
	&=-\frac{1}{\sigma}\mathbb{E}_\xi^{\hat{\mathbb{P}}}\Big[H(X_T)\int_0^T X_s\,d\hat{B}_s\Big]\\
	&=-\frac{1}{\sigma}\mathbb{E}_\xi^{\hat{\mathbb{P}}}\Big[ \int_0^TD_s(H(X_T)) X_s\,ds\Big] \\
	&=-\frac{1}{\sigma}\mathbb{E}_\xi^{\hat{\mathbb{P}}}\Big[ \int_0^TH'(X_T) (D_sX_T) X_s\,ds\Big]  \,.
	\end{aligned}
	\end{equation} 
	By Proposition \ref{prop:quad_malliavin},
	the Malliavin derivative of $X_T$ is
	$$D_sX_T
	=\sigma X_se^{-2\alpha\int_s^TX_s\,ds-\frac{1}{2}\sigma^2(T-s)+\sigma (\hat{B}_T-\hat{B}_s)}=\sigma X_Te^{-\int_s^T (\frac{b}{X_s}+\alpha X_s)\,ds}\,,\;0\le s\le T\,.$$
	The last inequality is from 
	$X_T/X_s= e^{ \int_s^T(\frac{b}{X_s}-\alpha X_s)\,ds-\frac{1}{2}\sigma^2(T-s) +\sigma (\hat{B}_T-\hat{B}_s)}.$
	Thus,
	\begin{equation} 
	\begin{aligned}
	|f_\alpha(T,\xi)|
	&\leq \frac{1}{\sigma}\mathbb{E}_\xi^{\hat{\mathbb{P}}}\Big[ \int_0^T |H'(X_T)| (D_sX_T) X_s\,ds\Big]\\  
	&=\mathbb{E}_\xi^{\hat{\mathbb{P}}}\Big[ \int_0^T |H'(X_T)| X_Te^{-\int_s^T (\frac{b}{X_s}+\alpha X_s)\,ds}X_s\,ds\Big]\\  
	&\leq \mathbb{E}_\xi^{\hat{\mathbb{P}}}\Big[ \int_0^T |H'(X_T)| X_Te^{-2\sqrt{\alpha b}(T-s)}X_s\,ds\Big]\quad \Big(\because \frac{b}{X_s}+\alpha X_s\geq 2\sqrt{\alpha b}\Big)\,.
	\end{aligned}
	\end{equation} 	
	Choose constants $p$ and $q$ such that $1<p<\frac{\alpha}{\alpha-a}$ and $1/p+1/q=1.$
	Then, 
	\begin{equation} \label{eqn:f_alpha_estimate}
	\begin{aligned}
	|f_\alpha(T,\xi)|  
	&\leq  \int_0^T \mathbb{E}_\xi^{\hat{\mathbb{P}}}[|H'(X_T) X_T|^p]^{1/p}\,\mathbb{E}_\xi^{\hat{\mathbb{P}}}[X_s^q]^{1/q}e^{-2\sqrt{\alpha b}(T-s)}\,ds\Big] \,.
	\end{aligned}
	\end{equation}
	Observe that 
	$|H'(x)x|^p=\eta^px^pe^{p\eta x}\in L^2(m)$
	since $p\eta<\frac{\alpha}{\sigma^2}.$
	By Proposition \ref{eqn:quad_model_L_2_con}, the expectation
	$\mathbb{E}_\xi^{\hat{\mathbb{P}}}[|H'(X_T)X_T|^p]$ 
	converges to $\int |H(x)x|^p\,m(dx)$ as $T\to\infty.$ In particular, the expectation is bounded in $T$ on $[0,\infty).$
	Similarly,  the expectation 
	$\mathbb{E}_\xi^{\hat{\mathbb{P}}}[X_s^q]$ is also bounded in $s$ on $[0,\infty).$ 
	Since the right-hand side of Eq.\eqref{eqn:f_alpha_estimate} is bounded in $T$ on $[0,\infty),$ we obtain the desired result.
\end{proof}

\begin{prop}\label{eqn:quad_model_L_2_con}
	Let $g\in L^2(m).$ Then, 
	$$\mathbb{E}_\xi^{\hat{\mathbb{P}}}[g(X_T)]\to\int_{(0,\infty)} g(x)\,m(dx)$$ 
	as $T\to\infty.$
\end{prop}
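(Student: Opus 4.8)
The plan is to read the statement as a convergence-to-equilibrium result for the one-dimensional diffusion \eqref{eqn:app_quad} and to prove it by combining self-adjointness of the transition semigroup on $L^2(m)$ with an ultracontractivity estimate; throughout, $m$ denotes the invariant distribution $\pi$ normalized to a probability measure, so that $\int 1\,dm=1$. First I would record the scale–speed structure: the scale density of \eqref{eqn:app_quad} is $s'(x)=\exp\!\big(\tfrac{2b}{\sigma^2 x}+\tfrac{2\alpha}{\sigma^2}x\big)$ and the speed density is $m'(x)=\tfrac{1}{\sigma^2 x^2}\exp\!\big(-\tfrac{2b}{\sigma^2 x}-\tfrac{2\alpha}{\sigma^2}x\big)$, which is exactly the density of $\pi$; in particular $m$ is finite, since $m'$ vanishes super-exponentially as $x\to0^+$ and decays exponentially as $x\to\infty$. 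Because both boundaries lie at infinite scale distance, they are inaccessible, so $X$ is positive recurrent, ergodic with unique stationary law $m$, and reversible with respect to $m$. Consequently $(P_T)_{T\ge0}$, $P_Tg(x):=\mathbb{E}_x^{\hat{\mathbb{P}}}[g(X_T)]$, extends to a strongly continuous self-adjoint contraction semigroup on $L^2(m)$ whose generator $\mathcal{A}$ satisfies $-\mathcal{A}\ge0$. The harmonic equation $\mathcal{A}g=0$ has general solution $g=A+Bs$, and $s\notin L^2(m)$ (its square grows like $x^{-2}e^{2\alpha x/\sigma^2}$ against $m$), so $\ker\mathcal{A}=\mathbb{R}\cdot1$.

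Next I would pass to the limit in $L^2(m)$. By the spectral theorem, $P_T=\int_{[0,\infty)}e^{-T\lambda}\,dE_\lambda$ for the spectral resolution $E$ of $-\mathcal{A}$, whence $P_T\to E(\{0\})$ strongly on $L^2(m)$ as $T\to\infty$; since $\ker\mathcal{A}=\mathbb{R}\cdot1$, the limiting projection is $E(\{0\})g=\int g\,dm$. Thus $\big\|P_Tg-\int g\,dm\big\|_{L^2(m)}\to0$ for every $g\in L^2(m)$, which already settles the result in the $L^2$ sense.

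Finally I would upgrade this to the pointwise statement at the fixed initial point $\xi$. Fixing $t_0>0$ and writing $c:=\int g\,dm$, I would estimate
$$\big|P_Tg(\xi)-c\big|=\big|P_{t_0}\big(P_{T-t_0}g-c\big)(\xi)\big|\le\big\|P_{t_0}\big(P_{T-t_0}g-c\big)\big\|_{L^\infty}\le C_{t_0}\,\big\|P_{T-t_0}g-c\big\|_{L^2(m)},$$
using $P_{t_0}1=1$ and an ultracontractivity bound $\|P_{t_0}h\|_{L^\infty}\le C_{t_0}\|h\|_{L^2(m)}$; the right-hand side tends to $0$ by the previous step (and in fact uniformly in $\xi$). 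The hard part is exactly this ultracontractivity estimate: I would derive it from a Nash inequality for the Dirichlet form $\mathcal{E}(g)=\tfrac12\sigma^2\int x^2(g')^2\,dm$, which is available because the strong confinement of $m$ (exponential tail at $\infty$, essential-singularity decay at $0$) supplies the required embedding; alternatively one may invoke Gaussian-type heat-kernel bounds or the Hilbert–Schmidt property of $P_{t_0}$ to the same end. Should ultracontractivity prove inconvenient, an equivalent route is to combine the classical total-variation convergence $\|P_T(\xi,\cdot)-m\|_{TV}\to0$ for ergodic one-dimensional diffusions (which handles bounded $g$) with the uniform bound $\sup_{T\ge t_0}\mathbb{E}_\xi^{\hat{\mathbb{P}}}[g(X_T)^2]<\infty$ to obtain uniform integrability and thereby remove the boundedness restriction, recovering the claim for all $g\in L^2(m)$.
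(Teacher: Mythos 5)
Your $L^2(m)$ step is correct and is genuinely more elementary than what the paper does: since $m$ is a finite reversing measure for \eqref{eqn:app_quad}, the spectral theorem alone gives $P_Tg\to E(\{0\})g=\int g\,dm$ in $L^2(m)$, with no spectral gap needed, and your identification of $\ker\mathcal{A}$ with the constants via $s\notin L^2(m)$ is fine. The genuine gap is in the only step that actually produces the statement of Proposition \ref{eqn:quad_model_L_2_con}, namely the pointwise evaluation at the fixed starting point $\xi$: everything there rests on the ultracontractivity bound $\|P_{t_0}h\|_{L^\infty}\le C_{t_0}\|h\|_{L^2(m)}$, which you yourself flag as ``the hard part'' and never establish. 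This is not a routine technicality that can be waved at with a Nash inequality or with ``strong confinement of $m$.'' First, tail decay of the invariant measure does not imply ultracontractivity: the Ornstein--Uhlenbeck semigroup, with Gaussian (hence even lighter-than-needed) tails, is hypercontractive but \emph{not} ultracontractive, so your heuristic cannot be the reason the bound holds. Second, a plain Nash inequality cannot hold in this setting at all: for a Markov semigroup with probability invariant measure one has $\|P_t\|_{L^1(m)\to L^\infty}\ge 1$ for every $t$ (test on densities and use invariance), which contradicts the decay $\|P_t\|_{L^1(m)\to L^\infty}\le Ct^{-n/2}$ that a Nash inequality would force; one needs the Davies--Simon super-log-Sobolev/intrinsic-ultracontractivity machinery instead, and verifying its hypothesis amounts to showing that the Liouville-transformed potential $Q$ grows super-quadratically at both ends --- which is exactly the computation the paper performs. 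Your fallback route has the same hole hidden inside it: requiring $\sup_{T\ge t_0}\mathbb{E}_\xi[g^2(X_T)]<\infty$ for \emph{every} $g\in L^2(m)$ is, by a positivity/uniform-boundedness argument and detailed balance $m'(\xi)p_t(\xi,y)=m'(y)p_t(y,\xi)$, equivalent to the heat-kernel domination $p_{t_0}(\xi,y)\le C\,m'(y)$, i.e.\ to $\sup_y p_{t_0}(y,\xi)<\infty$ --- again an ultracontractivity-type estimate, not a consequence of ergodicity plus total-variation convergence.

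For comparison, the paper never needs $L^2\to L^\infty$ smoothing. It proves that the operator is bounded below (non-oscillation of $\phi\equiv1$, citing \cite{fulton2005automatic}), shows the essential spectrum is empty by passing to Schr\"odinger form through the Liouville transformation and checking $Q(X)\to\infty$ (citing \cite{curgus2002discreteness}), concludes a positive spectral gap, and then invokes \cite[Theorem 5.2 (iii)]{qin2016positive}, which converts the spectral gap directly into the pointwise-in-$\xi$ convergence for $L^2(m)$ functions. So both proofs outsource the final pointwise step, but the paper outsources it to a theorem whose hypothesis (the spectral gap) it actually verifies, whereas your proposal outsources the crux to techniques that are either inapplicable as stated (Nash) or unexecuted (heat-kernel bounds, Hilbert--Schmidt). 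To repair your argument, either prove intrinsic ultracontractivity from the exponential growth of $Q$ via Davies--Simon, or keep your soft $L^2$ argument and replace the $L^\infty$ bridge by the paper's spectral-gap-plus-\cite{qin2016positive} step; as written, the soft $L^2$ convergence by itself says nothing about the value of $P_Tg$ at a single point $\xi$.
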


\begin{proof}

	Consider the eigenvalue--eigenfunction problem 
	$\mathcal{L}\phi=-\lambda\phi$ of the second-order differential operator
	$$\mathcal{L}\phi:=\frac{1}{2}\sigma^2x^2\phi''+(b-\alpha x^2)\phi'$$
	densely defined on the space $L^2(m).$
	We first show that the spectral gap is positive.
	By \cite[Theorem 12 (ii)]{fulton2005automatic}, $\mathcal{L}$ is bounded below since $\phi=1\in L^2(m)$ is non-oscillatory. 
	By \cite[Theorem 14]{fulton2005automatic}, it suffices to show that the essential spectrum is empty.  
	We can write the equation $\mathcal{L}\phi=-\lambda\phi$ in the divergence form as
	\begin{equation}\label{eqn:div_form}
	\begin{aligned}
	-(p(x)\phi'(x))'=\lambda w(x)\phi(x), 
	\end{aligned}
	\end{equation} 
	where
	$$p(x)=e^{-\frac{2b}{\sigma^2 }\frac{1}{x}-\frac{2\alpha}{\sigma^2}x}\,,\;w(x)=\frac{2}{\sigma^2x^2}e^{-\frac{2b}{\sigma^2 }\frac{1}{x}-\frac{2\alpha}{\sigma^2}x}\,.$$
	Using the Liouville transformation (for example, \cite[Section 7]{everitt2005catalogue}),
	Eq.\eqref{eqn:div_form} becomes
	$$-Y'(X)+Q(X)Y(X)=\lambda Y(X)\,,$$
	where
	$$Q(X)=-(p(x)/w^3(x))^{1/4}(p(x)((p(x)w(x))^{-1/4})')'$$
	and $X=\frac{\sqrt{2}}{\sigma}\ln x.$
	By direct calculation,
	$Q(X)\simeq x=e^{\frac{\sigma}{\sqrt{2}}X}$ as $X\to\infty;$ more precisely,
	$\lim_{X\to\infty}\frac{Q(X)}{e^{\frac{\sigma}{\sqrt{2}}X}}$ exists and is a positive constant.
	In particular, $Q(X)\to\infty$ as $X\to\infty$, and this implies that the essential spectrum is empty from
	\cite[Corollary 4.2]{curgus2002discreteness}. 	
	Finally, since the spectral gap is positive, by \cite[Theorem 5.2 (iii)]{qin2016positive},
	we have
	$$f(T,\xi):=\mathbb{E}_\xi^{\hat{\mathbb{P}}}[g(X_T)]\to \int_{(0,\infty)} g(x)\,m(dx)$$
	for $g\in L^2(m).$ This completes the proof.
\end{proof}

\begin{prop}\label{prop:quad_malliavin}
	Let $(X_t)_{t\ge0}$ be the solution of 
	$$dX_t=(b-\alpha X_t^2)\,dt+\sigma X_t\,d\hat{B}_t\,.$$
	Then, for $T>0,$ the random variable $X_T$ is in $\mathbb{D}^{1,2}$, and the Malliavin derivative is
	$$D_tX_T=\sigma X_te^{-2\alpha\int_t^TX_s\,ds-\frac{1}{2}\sigma^2(T-t)+\sigma (\hat{B}_T-\hat{B}_t)}$$	for $0\le t\le T.$ 
\end{prop}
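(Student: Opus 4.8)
The plan is to recognize the Malliavin derivative of $X_T$ as the solution of the first-variation SDE associated with \eqref{eqn:app_quad}, to solve that linear equation in closed form, and to justify membership in $\mathbb{D}^{1,2}$ through a localization argument. The obstruction to invoking Proposition \ref{prop:Malliavin_first_variation} directly is that, while the diffusion coefficient $\sigma(x)=\sigma x$ has the bounded derivative $\sigma$, the drift $b(x)=b-\alpha x^2$ has derivative $-2\alpha x$, which is unbounded on $(0,\infty)$; hence that theorem does not apply verbatim and a truncation is required.

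First I would localize. Since the solution remains in $(0,\infty)$ and does not explode \cite[Proposition 2.1]{carr2019lognormal}, the stopping times $\tau_n:=\inf\{t:X_t\notin(1/n,n)\}$ increase to $\infty$ almost surely. For each $n$ I replace the coefficients outside $[1/n,n]$ by smooth extensions $b_n,\sigma_n$ with bounded derivatives, leaving them unchanged on $[1/n,n]$, and let $X^{(n)}$ solve the modified SDE. Pathwise uniqueness gives $X^{(n)}=X$ on $[0,\tau_n]$, and Proposition \ref{prop:Malliavin_first_variation} now applies to $X^{(n)}$, yielding $X_T^{(n)}\in\mathbb{D}^{1,2}$ with $D_tX_T^{(n)}=\sigma_n(X_t^{(n)})\,Y_T^{(n)}/Y_t^{(n)}\,\mathbb{I}_{\{t\le T\}}$, where $Y^{(n)}$ is the associated first-variation process.

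Next I would solve the first-variation equation for the genuine coefficients. Formally $Y_t=\partial_\xi X_t$ satisfies $dY_t=-2\alpha X_tY_t\,dt+\sigma Y_t\,d\hat{B}_t$ with $Y_0=1$, so $Y_t=\exp(-2\alpha\int_0^tX_s\,ds-\tfrac12\sigma^2t+\sigma\hat{B}_t)$ and therefore
$$\sigma(X_t)\frac{Y_T}{Y_t}=\sigma X_t\,e^{-2\alpha\int_t^TX_s\,ds-\frac12\sigma^2(T-t)+\sigma(\hat{B}_T-\hat{B}_t)}=:\Phi_t,$$
which is exactly the claimed expression. On $\{T<\tau_n\}$ the localized derivative $D_tX_T^{(n)}$ agrees with $\Phi_t$ for $t\le T$; since $\mathbb{I}_{\{T<\tau_n\}}\to1$ a.s., both $X_T^{(n)}\to X_T$ and $DX_T^{(n)}\to\Phi\,\mathbb{I}_{[0,T]}$ hold pointwise.

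The real work, and the step I expect to be the main obstacle, is upgrading these pointwise limits to convergence in $L^2$, so that closability of $D$ yields $X_T\in\mathbb{D}^{1,2}$ with $D_tX_T=\Phi_t$. For this I would first verify $\Phi\in L^2([0,T]\times\Omega)$ by exploiting positivity: as $\alpha>0$ and $X_s>0$, one has $e^{-4\alpha\int_t^TX_s\,ds}\le1$, whence $\mathbb{E}[\Phi_t^2]\le\sigma^2e^{-\sigma^2(T-t)}\,\mathbb{E}[X_t^2]\,\mathbb{E}[e^{2\sigma(\hat{B}_T-\hat{B}_t)}]=\sigma^2e^{\sigma^2(T-t)}\,\mathbb{E}[X_t^2]$, using that $X_t$ is $\mathcal{F}_t$-measurable while $\hat{B}_T-\hat{B}_t$ is independent of $\mathcal{F}_t$. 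A standard moment bound giving $\sup_{t\le T}\mathbb{E}[X_t^2]<\infty$ then yields $\int_0^T\mathbb{E}[\Phi_t^2]\,dt<\infty$. The same estimate supplies a uniform $L^2$-dominating function for the family $DX_T^{(n)}$, which together with uniform-in-$n$ moment bounds delivers the required convergence. Equivalently---and perhaps more transparently---one may pass to $\ln X$, whose dynamics $d\ln X_t=(b/X_t-\alpha X_t-\tfrac12\sigma^2)\,dt+\sigma\,d\hat{B}_t$ has constant diffusion, compute $D_t\ln X_T=\sigma\,e^{-\int_t^T(b/X_s+\alpha X_s)\,ds}$ from the corresponding variational equation, and apply the chain rule (Proposition \ref{prop:Malliavin_chain}) with $\varphi=\exp$; since $X_T=X_t\,e^{\int_t^T(b/X_s-\alpha X_s)\,ds-\frac12\sigma^2(T-t)+\sigma(\hat{B}_T-\hat{B}_t)}$, this reproduces $\Phi_t$ and confirms the two equivalent forms of the derivative.
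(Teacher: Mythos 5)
Your skeleton --- truncate the drift so that Proposition \ref{prop:Malliavin_first_variation} applies, solve the first-variation equation to get the candidate $\Phi_t$, then use closedness of $D$ plus dominated convergence to pass to the limit in $L^2$ --- coincides with the paper's proof, and your computation of $\Phi$ and of the bound $\mathbb{E}[\Phi_t^2]\le\sigma^2e^{\sigma^2(T-t)}\mathbb{E}[X_t^2]$ is correct. But the decisive step, the one you yourself flag as ``the real work,'' is asserted rather than proved: the claim that ``the same estimate supplies a uniform $L^2$-dominating function for the family $DX_T^{(n)}$'' does not hold as stated. Your estimate for $\Phi$ uses two features of the \emph{untruncated} process: the drift derivative $-2\alpha X_s$ is $\le 0$ because $X_s>0$ and $\alpha>0$, and $\sup_{t\le T}\mathbb{E}[X_t^2]<\infty$. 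For the truncated processes the derivative is $D_tX_T^{(n)}=\sigma_n(X_t^{(n)})\,Y_T^{(n)}/Y_t^{(n)}$, and three things must be controlled uniformly in $n$: the sign (or an upper bound) of $b_n'$, which generic ``smooth extensions with bounded derivatives'' need not satisfy; the term $\int_t^T\sigma_n'(X_s^{(n)})\,d\hat B_s$, which is no longer $\sigma(\hat B_T-\hat B_t)$ once you modify $\sigma$ near $0$ (no modification of the diffusion is needed at all, since $(\sigma x)'=\sigma$ is already bounded); and, above all, the moments $\mathbb{E}[(X_t^{(n)})^2]$ rather than $\mathbb{E}[X_t^2]$. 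This last point is where the argument genuinely breaks: any extension $b_n$ with bounded derivative must eventually lie \emph{above} the parabola $b-\alpha x^2$, so by comparison the truncated processes tend to dominate $X$, and nothing you have written bounds their moments by those of $X$, uniformly in $n$ or otherwise.

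This is exactly the gap the paper's construction is designed to close, with two choices your proposal leaves unspecified: the truncation $\Lambda_N$ is built monotone, $\Lambda_N'\le 0$ (so the exponential factor in $Y^{(N)}_T/Y^{(N)}_t$ is $\le 1$), and is sandwiched below a fixed linear majorant $\tilde b-\tilde a x$ with $b-ax^2<\tilde b-\tilde a x$ for all $x>0$; the comparison theorem then gives $X_t\le X_t^{(N)}\le\tilde X_t$, where $\tilde X$ solves $d\tilde X_t=(\tilde b-\tilde a\tilde X_t)\,dt+\sigma\tilde X_t\,d\hat B_t$ and has finite moments. This produces a single, $N$-independent dominating variable $2\sigma\tilde X_te^{-\frac12\sigma^2(T-t)+\sigma(\hat B_T-\hat B_t)}\in L^2$, which makes both dominated-convergence arguments (for $X_T^{(N)}\to X_T$ and for $DX_T^{(N)}\to\Phi$) go through. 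Finally, your ``more transparent'' alternative via $\ln X$ does not bypass the problem: the drift of $\ln X$ has derivative $-(b/x+\alpha x)$, unbounded at both ends of $(0,\infty)$, so Proposition \ref{prop:Malliavin_first_variation} is again inapplicable, and the chain rule (Proposition \ref{prop:Malliavin_chain}) presupposes $\ln X_T\in\mathbb{D}^{1,2}$, which is precisely what needs proof.
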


\begin{proof}
	Choose $\tilde{b},\tilde{a}>0$ such that $b-ax^2<\tilde{b}-\tilde{a}x$ for all $x>0.$ 
	Let $\tilde{X}$ be the solution of the SDE
	$$d\tilde{X}_t=(\tilde{b}-\tilde{a}\tilde{X}_t)\,dt+\sigma \tilde{X}\,d\hat{B}_t\,.$$
	This SDE has a unique strong solution since the coefficients are  Lipschitz, and the solution $\tilde{X}$ stays positive by \cite[Eq.(0.2)]{zhao2009inhomogeneous}.
	For $N\in\mathbb{N},$ let $\Lambda_N$ be a continuously differentiable function satisfying
	\begin{equation}	\Lambda_N(x)=\left\{
	\begin{aligned}
	&b-ax^2 &&\textnormal{if } x\le N \\
	&\tilde{b}-\tilde{a}x &&\textnormal{if } x\ge \frac{\tilde{b}-b+aN^2}{\tilde{a}}+1
	\end{aligned}\right.
	\end{equation}
	as well as $\Lambda_N'(x)\leq 0$ for all $x>0.$ Since $\Lambda_N$ is a Lipschitz function, the SDE
	\begin{equation}
	\label{eqn:quad_lambda_SDE}
	dX_t^{(N)}=\Lambda_N(X_t^{(N)})\,dt+\sigma X_t^{(N)}\,d\hat{B}_t
	\end{equation} 
	has a unique strong solution $(X_t^{(N)})_{t\ge 0}.$
	By the comparison theorem (for example, see \cite[Proposition 2.18 in Chapter 5]{karatzas1991brownian}), we know that
	$X_t\leq X_t^{(N)}\leq \tilde{X}_t$ a.s.

	Since the Malliavin derivative is a closed operator, it suffices to show that $X_T^{(N)}\to X_T$ in $L^2$ as $N\to\infty$ and $D_tX_T^{(N)} \to\sigma X_te^{-2\alpha\int_t^TX_s\,ds-\frac{1}{2}\sigma^2(T-t)+\sigma (\hat{B}_T-\hat{B}_t)}$ in $L^2$ as $N\to\infty.$
	We prove these in the following two stops.
	The first step is to show that  the random variable  $X_T^{(N)}\to X_T$ in $L^2$ as $N\to\infty.$ We use the dominated convergence theorem to prove this. For each $N\in\mathbb{N},$ define a stopping time 
	$\tau_N:=\inf \{t\ge0\,|\,X_t\ge N\}.$ 
	Then, it is clear that $(\tau_N)_{N\in\mathbb{N}}$ is  nondecreasing  and $\lim \tau_N=\infty$ a.s.
	Let $X^{\tau_N}$ denote the stopped process of $X$ at $\tau_N$. Then, we have 
	$$X_T^{(N)}=X_T^{\tau_N},\;T\le \tau_N$$
	from the definition of $\Lambda_N.$ 
	Letting $N\to\infty,$ it follows that
	$\lim_{N\to\infty}X_T^{(N)}=\lim_{N\to\infty}X_T^{\tau_N}=X_T$ a.s. Observe that $(X_T^{(N)}-X_T)^2\leq (|X_T^{(N)}|+|X_T|)^2 \leq 4\tilde{X}_T^2 $  and $\mathbb{E}[\tilde{X}_T^2]<\infty$ by \cite[Corollay 2.2]{zhao2009inhomogeneous}. The dominated convergence theorem implies that
	$\mathbb{E}[(X_T^{(N)}-X_T)^2]\to 0$ as $N\to\infty.$
	
	The second step is to show that 
	$$D_tX_T^{(N)}=\sigma X_t^{(N)}e^{\int_t^T\Lambda_N'(X_s^{(N)})\,ds-\frac{1}{2}\sigma^2(T-t)+\sigma (\hat{B}_T-\hat{B}_t)}\to \sigma X_te^{-2a\int_t^TX_s\,ds-\frac{1}{2}\sigma^2(T-t)+\sigma (\hat{B}_T-\hat{B}_t)}\;\textnormal{ in } L^2$$
	as $N\to\infty.$ 
	Since the coefficients of SDE \eqref{eqn:quad_lambda_SDE} are continuously differentiable with bounded derivatives, the solution 
	$X_t^{(N)}\in\mathbb{D}^{1,2}$ for each $t$ and  $$D_tX_T^{(N)}=\sigma X_t^{(N)}e^{\int_t^T\Lambda_N'(X_s^{(N)})\,ds-\frac{1}{2}\sigma^2(T-t)+\sigma (\hat{B}_T-\hat{B}_t)}$$ by \cite[Property P2]{fournie1999applications}. Observe that $X_t^{(N)}\to X_t$ a.s. for each $t$  and $\Lambda_N'(x)\to -2ax$ for all $x>0$ as $N\to\infty.$ Thus, 
	$$\sigma X_t^{(N)}e^{\int_t^T\Lambda_N'(X_s^{(N)})\,ds-\frac{1}{2}\sigma^2(T-t)+\sigma (\hat{B}_T-\hat{B}_t)}\to \sigma X_te^{-2a\int_t^TX_s\,ds-\frac{1}{2}\sigma^2(T-t)+\sigma (\hat{B}_T-\hat{B}_t)}\;\textnormal{ a.s.}$$
	We claim that this convergence also holds in $L^2.$
	By the dominated convergence theorem, it suffices to find a $L^2$-dominating function.
	Considering that $\Lambda_N'(x)\leq0$ for $x>0,$ we have  
	\begin{equation}
	\begin{aligned}
	\sigma X_t^{(N)}e^{\int_t^T\Lambda_N'(X_s^{(N)})\,ds-\frac{1}{2}\sigma^2(T-t)+\sigma (\hat{B}_T-\hat{B}_t)}
	&\leq 	 \sigma X_t^{(N)}e^{-\frac{1}{2}\sigma^2(T-t)+\sigma (\hat{B}_T-\hat{B}_t)}\\
	&\leq  \sigma  \tilde{X}_te^{-\frac{1}{2}\sigma^2(T-t)+\sigma (\hat{B}_T-\hat{B}_t)}\,,
	\end{aligned}
	\end{equation}
	and similarly, $\sigma X_te^{-2a\int_t^TX_s\,ds-\frac{1}{2}\sigma^2(T-t)+\sigma (\hat{B}_T-\hat{B}_t)} 
	\leq  \sigma  \tilde{X}_te^{-\frac{1}{2}\sigma^2(T-t)+\sigma (\hat{B}_T-\hat{B}_t)}\,.$
	Thus,	
	\begin{equation}
	\begin{aligned}
	&\quad\big|\sigma X_T^{(N)}e^{\int_t^T\Lambda_N'(X_s^{(N)})\,ds-\frac{1}{2}\sigma^2(T-t)+\sigma (\hat{B}_T-\hat{B}_t)}-\sigma X_Te^{-2a\int_t^TX_s\,ds-\frac{1}{2}\sigma^2(T-t)+\sigma (\hat{B}_T-\hat{B}_t)}\big|\\
	&\leq 2\sigma  \tilde{X}_te^{-\frac{1}{2}\sigma^2(T-t)+\sigma (\hat{B}_T-\hat{B}_t)}\,.
	\end{aligned}
	\end{equation}
	The random variable $2\sigma  \tilde{X}_te^{-\frac{1}{2}\sigma^2(T-t)+\sigma (\hat{B}_T-\hat{B}_t)}$ is an $L^2$-dominating function 
	since $$\mathbb{E}[(2\sigma  \tilde{X}_te^{-\frac{1}{2}\sigma^2(T-t)+\sigma (\hat{B}_T-\hat{B}_t)})^2]=4\sigma^2 \mathbb{E}[\tilde{X}_t^4]^{1/2}\mathbb{E}[e^{-2\sigma^2(T-t)+4\sigma (\hat{B}_T-\hat{B}_t)}]^{1/2}<\infty$$ by \cite[Proposition 2.1]{zhao2009inhomogeneous}. This completes the proof.
\end{proof}




\bibliographystyle{apa}

\bibliography{sensitivity}
\end{document}